\documentclass[a4paper,11pt]{quantumarticle}
\pdfoutput=1
\usepackage[utf8]{inputenc}
\usepackage[T1]{fontenc}
\usepackage{amssymb,amsmath,amsthm,amsfonts}
\usepackage{mathtools}
\usepackage{stmaryrd}
\usepackage{physics}
\usepackage{hyperref}
\usepackage{bm}
\usepackage[numbers]{natbib}
\usepackage{tikz}
\usepackage{booktabs}
\usepackage{geometry}
\usepackage{float}
\usepackage{graphicx}
\usepackage{quantikz}
\usetikzlibrary{arrows.meta, positioning,calc, shapes.geometric, fit, backgrounds}
\usepackage{circuitikz}
\usepackage{fontawesome5}
\usepackage{standalone}
\def\id{{\rm 1\kern-.22em l}}
\newcommand{\bigo}[1]{O\!\left(#1\right)}      
\newcommand{\bigt}[1]{\Theta\!\left(#1\right)}   
\newcommand{\C}{\mathbb{C}}

\newcommand{\projj}[1]{\ketbra{#1}{#1}}
\newcommand{\BB}{\mathcal{B}}
\newcommand{\PP}{\mathcal{P}}
\newcommand{\LL}{\mathcal{L}}

\newcommand{\HH}{\mathcal{H}}
\newcommand{\UU}{\mathcal{U}}

\newtheorem{thm}{Theorem}
\newtheorem{propo}[thm]{Proposition}
\newtheorem{lem}[thm]{Lemma}
\newtheorem{deff}[thm]{Definition}

\newtheorem{remark}[thm]{Remark}
\newtheorem{cor}[thm]{Corollary}

\begin{document}

\title{Quantum and Classical Memory for Storage and Retrieval of Quantum Measurements}

\author{Filip Turoboś}
\email{filip.turobos@p.lodz.pl}
\affiliation{Institute of Theoretical and Applied Informatics, Polish Academy of Sciences, ul. Ba{\l}tycka 5, 44-100 Gliwice, Poland}
\orcid{0000-0002-5782-6159}

\author{Ludmiła Marcinkowska}
\email{lmarcinkowska@iitis.pl}
\affiliation{Institute of Theoretical and Applied Informatics, Polish Academy of Sciences, ul. Ba{\l}tycka 5, 44-100 Gliwice, Poland}
\orcid{0009-0002-0935-3921}

\author{Zbigniew Puchała}
\email{zpuchala@iitis.pl}
\affiliation{Institute of Theoretical and Applied Informatics, Polish Academy of Sciences, ul. Ba{\l}tycka 5, 44-100 Gliwice, Poland}
\orcid{0000-0002-4739-0400}

\maketitle
\onecolumn
\raggedbottom

\begin{abstract}
We study the persistent quantum and classical memory required for exact heralded storage and retrieval of quantum measurements. Given only black-box access to the measurement channel, we construct an exact compression of the port-based program into an outcome-type record and a quantum register, with total dimension polynomial in the query count. For every universal retrieval processor we prove the bounds $Q+C\ge\log_2 n$ and $2Q+C\ge\log_2\bigl((n-1)d^2+1\bigr)$ on the numbers $Q$ of qubits and $C$ of bits of the program, for $n$-outcome POVMs on $\C^d$ with $d\ge2$. For $n\ge2$, no finite classical memory removes the need for quantum memory: any such processor requires a quantum program of dimension at least $d$, at that dimension its classical alphabet must have at least $n$ symbols, and both minima are attained by a single-query protocol with success $1/d^2$. Combining direct-program compression with a converse from approximate measurement programming gives logarithmic total-memory scaling in $1/(1-p)$ as the uniform success probability $p$ approaches one, at fixed $d,n\ge2$. At fixed input dimension and target success below one, the optimal total-memory scaling is logarithmic in $n$. We also construct self-contained programs under branch-decomposition access and compare the quantum and classical costs of all constructions, including where the outcome-type record is kept: for POVMs with $n$ nonzero rank-one effects the direct program admits a purely quantum allocation whose register is never larger than that of the minimal-Naimark unitary program and which succeeds more often whenever $n>d$.
\end{abstract}

\section{Introduction}

Extracting information about an unknown quantum object from a finite number of uses is a basic task of quantum information: a state cannot be reconstructed from one copy~\cite{Aaronson2017Shadow}, state discrimination has sharp limits~\cite{chefles2000quantum}, and an unknown transformation can be queried a finite number of times, stored in a quantum memory and later reproduced on new inputs~\cite{bisio2010optimal}. For measurements, Bisio et al.~\cite{bisio2011quantum} showed that optimal learning of von Neumann measurements may require coherent memory and need not be parallelizable, and port-based teleportation~\cite{ishizaka2009quantum,Pirandola2019,christandl2021asymptotic,yoshida2026one,wang2016higher} led to storage-and-retrieval schemes for projective measurements with average fidelity $1-\bigt{N^{-2}}$~\cite{lewandowska2022storage}. For unitary channels, probabilistic storage and retrieval (PSAR) based on probabilistic port-based teleportation (pPBT)~\cite{ishizaka2008asymptotic,studzinski2017port} achieves exact heralded retrieval with a program polynomial in the number $N$ of uses~\cite{sedlak2019optimal,Bisio2011}.

A POVM channel is not unitary, and the same POVM can be presented to a protocol in operationally inequivalent ways: as a measurement channel~\cite{watrous2018theory}, so that only the statistics $\Tr(M_i\rho)$ are accessible; as a coherent Naimark dilation~\cite{watrous2018theory,Chen2007}; or already decomposed into a mixture of simpler measurements~\cite{oszmaniec2018simulating}, here in the Kotowski-Oszmaniec form~\cite{kotowski2026pretty}, which represents any POVM as a probabilistic ensemble of measurement branches generated by unitaries on a single-qubit extension. Each presentation is an oracle for the same measurement with a different storage cost. We take these oracle models as the setting rather than the result, and ask: how much program memory does exact heralded retrieval of an unknown measurement require, and how does it divide between a coherent register and a classical record? We do not address the construction of dilations or branch decompositions, and the oracle models are not assumed to be interconvertible (Section~\ref{sec:discussion}).

\paragraph{Relation to prior work.}
The utilized tools are well-established in the literature: port-based teleportation~\cite{ishizaka2009quantum,studzinski2017port}, exact compression of permutation-invariant ensembles by discarding multiplicity spaces~\cite{yang2016efficient}, optimal-success PSAR of unitaries~\cite{sedlak2019optimal}, the Kotowski--Oszmaniec simulation theorem~\cite{kotowski2026pretty}, and the approximate-programming lower bound of~\cite{perezgarcia2006optimality}. Exact probabilistic programming of measurements with an inconclusive outcome, for programs supplied as states, was studied for qubit projective measurements by Fiur\'a\v{s}ek and Du\v{s}ek~\cite{fiurasek2004probabilistic}; compression preserving only the statistics of a prescribed set of measurements was studied in~\cite{bluhm2018compression}; the conversion of heralded exact programming into deterministic approximate programming is discussed for unitaries in~\cite{kubicki2019resource}. What is new here is measurement-specific: the compression of the program generated by black-box queries to an arbitrary POVM, the lower bounds on its quantum and classical parts, and the measured encoders for structured access, as listed below. We make no optimality claim for the compressed registers.

\subsection{Contributions}
\begin{enumerate}
\item \textbf{Exact compression of the black-box program.} Under direct access, the program of the universal pPBT protocol is permutation-invariant with classical outcomes and compresses exactly into the type of the outcome string and a quantum register, of total dimension polynomial in $N$ (Proposition~\ref{prop:direct-cq}). The compression mechanism stems from~\cite{yang2016efficient}, and what is new is its application to the pPBT measurement program and the resulting exponent. Discarding the port alignment can destroy exactness: keeping only the type, without re-aligning the register, does not permit exact retrieval by classical post-processing that returns an outcome on every pPBT success (Lemma~\ref{lem:ppbt-rigidity}, Proposition~\ref{prop:type-only}).
\item \textbf{A branch program with finite classical readout.} Under branch access, an exact heralded protocol has a quantum register of dimension $\binom{N+4d^2-1}{N}$, independent of $n$, and a classical record of $2d$ outcome labels, independent of $N$ and of the number of branches; the receiver needs no measurement-dependent relabelling library (Proposition~\ref{prop:finite-readout}); measuring the branch during storage and applying the rank-one encoder gives a program with a smaller register and higher success $q\lambda(N,d)$; with the raw-outcome type stored in the quantum register its classical record is the same $N$-independent table, so the full-unitary program is dominated as an implementation of the measurement and is retained as the reference construction imported from unitary PSAR (Proposition~\ref{prop:measured-branch}, Corollary~\ref{cor:measured-branch-quantum}). Its success probability stays below the non-failure probability $q$ of the decomposition, which can be taken to be $\tfrac18$ for every POVM~\cite{kotowski2026pretty}, or $\tfrac1d$ with the projective simulation of~\cite{oszmaniec2018simulating} (Remark~\ref{rem:spectral-branch}).
\item \textbf{Lower bounds for every access model.} For a processor that retrieves every $n$-outcome POVM on $\C^d$ exactly with positive success, the program dimension is at least $n$ (Proposition~\ref{prop:label-memory-lower}). The numbers of qubits $Q$ and bits $C$ satisfy $2Q+C\ge\log_2\bigl((n-1)d^2+1\bigr)$ (Proposition~\ref{prop:param-lower}), for $n\ge2$ the quantum register alone has dimension at least $d$ for every finite classical budget (Proposition~\ref{prop:qdim-lower}), and the memory grows at least logarithmically in $1/(1-p)$ as the success probability $p\to1$ (Proposition~\ref{prop:success-lower}, Corollary~\ref{cor:success-all}). Together with the direct construction, the total memory is $\Theta(\log n)$ at fixed $d$ and target success $p_0<1$ (Corollary~\ref{cor:direct-outcome-scaling}) and $\Theta\bigl(\log\frac1{1-p}\bigr)$ at fixed $d,n\ge2$ (Remark~\ref{rem:success-lower}).
\item \textbf{Quantum/classical comparison.} We compare direct, Naimark and branch storage with quantum and classical memory reported separately and with the fixed resource stated in each comparison (Section~\ref{comparison}). Among the constructions compared here, direct pPBT attains the largest success probability at fixed query count: strictly higher than the full-unitary constructions with dilation dimension greater than $d$, and equal when the dilation dimension is $d$. Its type record can be kept classical or absorbed into the quantum register, and for POVMs with $n$ nonzero rank-one effects the purely quantum direct program is never larger than the minimal-Naimark register and succeeds more often whenever $n>d$ (Proposition~\ref{prop:direct-rank1}, Table~\ref{tab:benchmark}); measuring the dilation readout during storage extends this dominance over the full-unitary Naimark program to targets of any rank (Proposition~\ref{prop:measured-naimark}). What the structured routes change is the degree of the polynomial memory and its quantum/classical division. The unitary-PSAR success and register formulas are inherited from~\cite{sedlak2019optimal}; we do not claim their optimality for POVM retrieval.
\end{enumerate}

\section{Preliminaries}

Throughout the paper, $d$ denotes the dimension of the input Hilbert space $\mathbb{C}^d$. We use the standard notation $\BB(\C^{d})$ for the space of linear operators and $\Omega(\mathbb{C}^d)$ for the set of density operators on $\mathbb{C}^d$. The symbol $\UU(D)$ denotes the unitary group\footnote{We will utilize symbol $\UU(\C^d\otimes\C^e)$ when the tensor factorization matters.} on $\C^D$. We denote by $\mathcal{I}_d$ the identity channel on $\C^d$. Matrix trace is denoted by $\mathrm{Tr}$ and for any $d$ the $d$-dimensional identity matrix is denoted by $\id_d$. 

\begin{deff}[Positive operator-valued measure]\label{def:povm}
An $n$-outcome positive operator-valued measure (POVM) on $\mathbb C^d$ is a tuple
\begin{equation}
    \mathbf M=(M_1,\ldots,M_n)
\end{equation}
of positive semidefinite operators $M_{i}\in\BB(\C^{d})$ satisfying
\begin{equation}
    M_i\ge 0,\qquad\sum_{i=1}^{n}M_i=\id_d.
\end{equation}
\end{deff}

Quantum channels are completely positive trace-preserving (CPTP) maps, and a quantum instrument with outcome set $\mathcal X$ is a collection of completely positive maps $\{\mathcal T_x\}_{x\in\mathcal X}$ whose sum is trace preserving. Throughout this paper, measurements will be represented by their associated quantum-to-classical channels.

\begin{deff}[POVM channel]
    The quantum-to-classical channel associated with the $n$-outcome POVM (the \emph{POVM channel}) is $\PP_{\mathbf{M}}\!:\,\BB(\C^{d})\to\BB(\C^{n})$, such that
    \begin{equation}
        \PP_{\mathbf{M}}(\rho)\;=\;\sum_{i=1}^{n}\mathrm{Tr}(M_{i}\rho)\projj{i}.
    \end{equation}
\end{deff}

\begin{deff}[Dynamical matrix \cite{watrous2018theory,bengtsson2017, jamiolkowski1972linear,choi1975completely}]
Let $\Phi:\BB(\C^{d})\to\BB(\C^{d'})$ be a linear map. Its dynamical matrix is
\begin{equation}
    J_{\Phi}=\sum_{i,j=1}^{d}\Phi(|i\rangle\langle j|)\otimes|i\rangle\langle j|\in\BB(\C^{d'}\otimes\C^{d}).
\end{equation}
The map $\Phi$ is completely positive if and only if $J_{\Phi} \geq 0$.
\end{deff}

For the POVM channel $\PP_{\mathbf M}$, the dynamical matrix is
\begin{equation}
    J_{\PP_{\mathbf M}} =\sum_{i=1}^{n}  \projj{i} \otimes M_i^{T}\in\BB(\C^{n}\otimes\C^{d}),
\end{equation}
 with total dimension $dn$, where $\{M_i\}_{i=1}^n$ are the POVM effects acting on $\mathbb{C}^d$ and $\{\ket{i}\}_{i=1}^n$ forms an orthonormal basis representing the classical measurement outcomes in $\mathbb{C}^n$. The normalized Choi state is $\rho_{\mathrm{Choi}} =\frac{1}{d}J_{\PP_{\mathbf M}}.$

A different useful representation is obtained via Naimark dilation~\cite{naimarkblessedpaper}; the version below is Theorem~1 of Chen et al.~\cite{Chen2007} (see also~\cite{watrous2018theory}).

\begin{thm}[Naimark dilation]\label{thm:naimark}
    For every $n$-outcome POVM $\mathbf{M}$ on $\C^{d}$  there exists a Hilbert space $\C^{D_{\mathrm{Nai}}}$ with $D_{\mathrm{Nai}} \ge d$, an isometry $V:\C^{d}\to\C^{D_{\mathrm{Nai}}}$, and a projective partition $\{\Pi_{i}^{(0)}\}_{i=1}^{n}$ of $\id_{D_{\mathrm{Nai}}}$ such that
    \begin{equation}
        M_{i}=V^{\dagger}\Pi_{i}^{(0)}V ,\qquad i=1,\dots,n.
    \end{equation}
    Equivalently, one can write $V=U_{\mathbf{M}}\,\eta_{\mathrm{Nai}}$ for some unitary $U_{\mathbf{M}}\in\UU(D_{\mathrm{Nai}})$ and a fixed embedding isometry $\eta_{\mathrm{Nai}}:\C^{d}\hookrightarrow\C^{D_{\mathrm{Nai}}}$.
    
    The dilated space dimension $D_{\mathrm{Nai}}$ is bounded from below by $D_{\min}$ defined as
    \begin{equation}\label{eq:Dmin}
    D_{\min}:=\sum_{i=1}^{n}\operatorname{rank}(M_{i})\;\le\;dn,
\end{equation}
    while the canonical construction achieves dimension $dn$.
\end{thm}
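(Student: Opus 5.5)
The plan is to build the dilation explicitly from spectral decompositions of the effects, which gives both the existence claim and the dimension count at once. I will establish three things in turn: existence with $D_{\mathrm{Nai}}=D_{\min}$, the lower bound $D_{\mathrm{Nai}}\ge D_{\min}$ for every dilation, and the canonical dimension-$dn$ construction.

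\textbf{Minimal construction.} Write each effect as $M_i=\sum_{k=1}^{r_i}\ket{v_{ik}}\!\bra{v_{ik}}$ with $r_i=\operatorname{rank}(M_i)$; for instance, take $\ket{v_{ik}}=\sqrt{\lambda_{ik}}\ket{e_{ik}}$ from the spectral decomposition. Let $\C^{D_{\min}}$ have orthonormal basis $\{\ket{i,k}\}$ and define
\begin{equation}
V=\sum_{i,k}\ket{i,k}\!\bra{v_{ik}}.
\end{equation}
Then $V^\dagger V=\sum_i M_i=\id_d$, so $V$ is an isometry. With $\Pi_i^{(0)}=\sum_k\ket{i,k}\!\bra{i,k}$ one gets $V^\dagger\Pi_i^{(0)}V=M_i$, and these projectors sum to the identity. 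Since $\sum_i M_i=\id_d$ and rank is subadditive, $D_{\min}=\sum_i r_i\ge d$. Also $r_i\le d$, so $D_{\min}\le dn$.

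\textbf{Lower bound.} Take any dilation $(V,\{\Pi_i^{(0)}\})$ on $\C^{D_{\mathrm{Nai}}}$. Then $M_i=(\Pi_i^{(0)}V)^\dagger(\Pi_i^{(0)}V)$, so $\operatorname{rank}M_i\le\operatorname{rank}\Pi_i^{(0)}$. The projectors $\Pi_i^{(0)}$ are mutually orthogonal and sum to the identity, hence
\begin{equation}
D_{\mathrm{Nai}}=\sum_i\operatorname{rank}\Pi_i^{(0)}\ge D_{\min}.
\end{equation}

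\textbf{Canonical construction.} Set $D_{\mathrm{Nai}}=dn$ and $V=\sum_i\ket{i}\otimes\sqrt{M_i}$, with $\Pi_i^{(0)}=\projj{i}\otimes\id_d$. The same two identities hold: $V^\dagger V=\sum_i M_i=\id_d$ and $V^\dagger\Pi_i^{(0)}V=M_i$.

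\textbf{Unitary form.} Fix the embedding $\eta_{\mathrm{Nai}}$ as the isometry onto the first $d$ basis vectors. The columns of $V$ form an orthonormal $d$-frame in $\C^{D_{\mathrm{Nai}}}$. Extend it, for instance by Gram--Schmidt on the orthogonal complement of the range of $V$, to an orthonormal basis, and let $U_{\mathbf M}$ be the matrix with these vectors as columns. Then $U_{\mathbf M}\eta_{\mathrm{Nai}}=V$.

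\textbf{Main obstacle.} The calculations themselves are routine; the only step that needs care is the lower bound. It must apply to every dilation, not just the constructed one, and this is exactly where the orthogonality of the projective partition is used to turn the individual rank inequalities into a bound on the total dimension.
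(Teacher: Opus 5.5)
Your proposal is correct and follows the argument the paper relies on. The paper gives no proof of its own: it cites Chen et al.\ for existence and for the minimal realisation. The only reasoning it spells out is the rank count you use for the lower bound, $\operatorname{rank}M_i\le\operatorname{rank}\Pi^{(0)}_i$ with the readout ranks summing to $D_{\mathrm{Nai}}$. Your explicit spectral construction on $\C^{D_{\min}}$, the canonical $dn$-dimensional dilation and the unitary completion make those cited steps self-contained.
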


The dilation dimension is realisation-dependent: the canonical tensor-product realisation uses $\C^{D_{\mathrm{Nai}}}=\C^{d}\otimes\C^{n}$, resulting in $D_{\mathrm{Nai}}=dn$ with $\Pi_{i}^{(0)}=\id_{d}\otimes\projj{i}$. The minimal realisation attains bound \eqref{eq:Dmin} by operating on a direct sum of system space and ancilla (see \cite{Chen2007} for technical details), which satisfies $d\le D_{\min}\le dn$ and equals $n$ for POVMs whose $n$ effects are all nonzero and rank one; if only $k$ effects are nonzero and rank one, $D_{\min}=k$. For the minimal-dilation rank-one benchmarks below all $n$ effects are assumed nonzero, so $D_{\min}=n$. Since $\operatorname{rank}M_i\le\operatorname{rank}\Pi^{(0)}_i$ and the readout ranks sum to $D_{\mathrm{Nai}}$, a minimal realisation forces $\operatorname{rank}\Pi^{(0)}_i=\operatorname{rank}M_i$ for every $i$: the readout of a minimal dilation encodes the rank profile of $\mathbf M$, so minimal-dilation constructions presuppose that this profile is a public promise shared by the whole target family (Definition~\ref{def:coherent-oracle}).

In the storage-and-retrieval (SAR) problem the experimenter has $N$ black-box uses of POVM channel $\PP_\mathbf{M}$, encodes the acquired information into a quantum program register $\HH_{\mathrm{prog}}$, and must later reproduce $\PP_{\mathbf{M}}$'s outcome statistics on an arbitrary input $\rho\in\Omega(\C^{d})$ presented at retrieval~\cite{lewandowska2022storage,sedlak2019optimal}. To quantify retrieval accuracy we use the worst-case channel fidelity.

\begin{deff}[Worst-case channel fidelity]
 For two CPTP maps $\Phi_{1},\Phi_{2}:\BB(\C^{d})\to\BB(\C^{d'})$ we define the worst-case (input) channel fidelity
\begin{equation}
F_{\mathrm{ch}}(\Phi_{1},\Phi_{2})\;:=\;
\min_{\rho\in\Omega(\C^{d})}\,F\!\bigl(\Phi_{1}(\rho),\Phi_{2}(\rho)\bigr),
\end{equation}
where $F(\sigma_{1},\sigma_{2})=\|\sqrt{\sigma_{1}}\sqrt{\sigma_{2}}\|_{1}$ is the Uhlmann state fidelity \cite{uhlmann1976transition,jozsa1994fidelity}. 
\end{deff}

\begin{remark}\label{rem:persistent}
    In this work, we count program memory as the dimension of the Hilbert space in which the stored program state resides. This is \emph{persistent} memory, held between storage and retrieval; the decoder of a compressed program may temporarily recreate the full carrier at retrieval, so the bounds below concern storage, not peak workspace or the efficiency of the decoding circuit. For a channel $\Phi: \mathcal{B}(\mathbb{C}^d) \to \mathcal{B}(\mathbb{C}^{d'})$, the universal pPBT-based construction stores its program on a carrier of dimension $(dd')^N$. Because this program is invariant under relabelling the ports (the argument of Lemma~\ref{lem:direct-symmetry} applies to any $\Phi$), it can be compressed to a dimension polynomial in $N$ (Lemma~\ref{lem:inv-compression}), and for POVM channels further still (Section~\ref{sec:direct-compression}).
\end{remark}

\begin{deff}[Exact heralded retrieval]\label{def:exact-retrieval}
A storage-and-retrieval scheme for a family of POVMs consists of a storage procedure that, from $N$ oracle queries to the unknown $\mathbf M$, prepares a program state $\sigma_{\mathbf M}$ on a fixed register, and of a fixed retrieval instrument $\{\mathcal R_{\mathsf s},\mathcal R_{\mathsf f}\}$ acting on the program and the input. When storage is randomized, $\sigma_{\mathbf M}$ denotes the average of the prepared program states over the randomness of the preparation, so that $\mathcal R_{\mathsf s}(\sigma_{\mathbf M}\otimes\rho)$ is the operational successful output. The scheme retrieves $\mathbf M$ \emph{exactly with heralded success} if
\begin{equation}\label{eq:exact-retrieval}
\mathcal R_{\mathsf s}(\sigma_{\mathbf M}\otimes\rho)=s_{\mathbf M}(\rho)\,\PP_{\mathbf M}(\rho)\qquad\text{for every state }\rho,
\end{equation}
with $s_{\mathbf M}(\rho)>0$ the success probability. The program memory is the pair $(Q,C)$ of qubits and bits in which $\sigma_{\mathbf M}$ is stored, counted as in Remark~\ref{rem:persistent}.
\end{deff}

Condition~\eqref{eq:exact-retrieval} is $F_{\mathrm{ch}}(\hat\PP^{\mathrm{cond}}_{\mathbf M},\PP_{\mathbf M})=1$ for the conditional channel $\hat\PP^{\mathrm{cond}}_{\mathbf M}(\rho)=\mathcal R_{\mathsf s}(\sigma_{\mathbf M}\otimes\rho)/s_{\mathbf M}(\rho)$. The constructions of Sections~\ref{direct}, \ref{naimark} and~\ref{kot-osz} fix $N$ and achieve an input-independent success $s_{\mathbf M}(\rho)=p$ given in closed form; the lower bounds of Section~\ref{sec:lower} assume only~\eqref{eq:exact-retrieval} with positive success for every input and every target, and Lemma~\ref{lem:const-success} shows that for a target with a non-scalar effect the success is then automatically input independent. Two regimes are compared below: at fixed $N$ (Section~\ref{comparison}) the quantity of interest is the allocation $(Q,C)$ at the success $p(N)$; in the scaling statements (Corollaries~\ref{cor:direct-outcome-scaling} and~\ref{cor:outcome-scaling}, Remark~\ref{rem:success-lower}) a target success $p_0$ is prescribed and $N$ is the smallest query count attaining it.

\subsection{Port-based teleportation and PSAR.}
In this subsection, $D$ denotes the dimension of the Hilbert space on which the stored unitary acts.

We use probabilistic port-based teleportation, in which retrieval is exact conditioned on a heralded success event.

\begin{thm}[Probabilistic port-based teleportation {\cite{ishizaka2009quantum}}]
For every $D$ and $N$ there exists a two-outcome quantum instrument $\{\mathcal T_D^{(\textsf{s})},\mathcal T_D^{(\textsf{f})}\}$ whose outcomes are labelled by $\textsf{s}$ (success) and $\textsf{f}$ (failure), satisfying
\begin{equation} \label{eq:ppbt-instrument}
    \mathcal T_D^{(\textsf{s})}=\lambda(N,D)\,\mathcal I_D,
\end{equation}
with the failure branch given by the complementary CP map, so that the two branches form a trace-preserving quantum instrument. The input state is transmitted exactly whenever the success outcome $\textsf{s}$ occurs. The success probability, optimal among probabilistic port-based teleportation protocols with $N$ ports, over resource states and measurements~\cite{ishizaka2009quantum,studzinski2017port}, equals
\begin{equation}
    \lambda(N,D)=\frac{N}{N-1+D^2}=\;1-\frac{D^{2}-1}{N-1+D^{2}}\;=\;1-\bigt{\frac{D^{2}}{N}}, \quad N \gg D^2.
\end{equation}
\end{thm}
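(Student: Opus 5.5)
The statement is the probabilistic port-based teleportation theorem: an instrument whose success branch is exactly $\lambda(N,D)\,\mathcal I_D$, with $\lambda(N,D)=N/(N-1+D^2)$ optimal. I would prove it constructively in the standard Ishizaka--Hiroshima way, and handle optimality through the symmetry reduction of Studziński et al.

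\textbf{Setup and the success branch.} Let the sender hold the input system $C$ and the ports $A_1,\dots,A_N$. The receiver holds $B_1,\dots,B_N$. The shared resource is $\ket{\Psi}=(X_A\otimes\id_B)\ket{\psi^+}^{\otimes N}$, where $X_A\ge0$ is a symmetric operator on $A_1\cdots A_N$ to be optimised. The sender performs a POVM $\{\Pi_1,\dots,\Pi_N,\Pi_0\}$ on $A C$, where $\Pi_0$ is the failure outcome. On outcome $a$ the receiver keeps port $B_a$ and discards the rest.

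Fix the covariance ansatz $\Pi_a=(\text{perm. to port }a)\,\Pi_1\,(\text{perm.})^{\dagger}$. Require that $X^\dagger\Pi_a X$ restricted to the pair $A_aC$ be proportional to the projector $P^+_{A_aC}$ onto the maximally entangled state, tensored with operators on the other ports. With this choice, outcome $a$ projects $CA_a$ onto $\psi^+$, and standard teleportation identity gives output $\propto\rho$ on $B_a$ with no correction. This makes the success branch (the sum over $a\ne0$) exactly a multiple of $\mathcal I_D$. A multiple of the identity channel is automatically input-independent, because the trace-preserving sum fixes $\mathcal T^{(\mathsf f)}$ as the complementary CP map.

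\textbf{Computing $\lambda$.} The operators $\sigma_a=\id_{\bar a}\otimes P^+_{A_aC}$ generate the walled Brauer algebra. Decompose them by Schur--Weyl duality into irreps $\mu\vdash N-1$ of $S_{N-1}$ and $\alpha\vdash N$, with $\alpha=\mu+\square$. Writing the success probability in these sectors, it is a linear functional of the chosen weights, subject to $\sum_a\Pi_a\le\id$ (positivity of $\Pi_0$). Optimising the weights over irreps yields
\[
\lambda=\frac1{D^{2}}\sum_{\alpha}\min_{\mu\in\alpha}\frac{m_\alpha d_\mu}{m_\mu}\cdot\frac{\dots}{\dots},
\]
which Studziński et al.\ show collapses to $N/(N-1+D^2)$. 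I would cite this closed form rather than rederive the irrep sums.

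\textbf{Optimality and the main obstacle.} For optimality, any protocol can be twirled over $U^{\otimes N}\otimes\bar U$ and over port permutations without reducing the success probability. This reduces to the covariant family above, so the SDP is diagonal in the irreps. Its dual then gives the matching upper bound: exhibit a dual-feasible operator with value $N/(N-1+D^2)$. Constructing that dual certificate with the explicit irrep dimension ratios is the hard step.

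The asymptotic form is elementary: $1-\lambda=(D^2-1)/(N-1+D^2)=\Theta(D^2/N)$.
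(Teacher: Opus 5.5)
The paper does not prove this theorem. It imports it, together with the optimal resource $O_A=X_A^{1/2}$, $X_A=\sum_\mu c_\mu P_\mu$, from Ishizaka--Hiroshima and Studzi\'nski et al. Where you defer to those works you are on the same footing as the paper. Your structural step is also sound: with $O_A$ your resource operator, the condition $O_A^\dagger\Pi_aO_A=P^+_{A_aC}\otimes\Theta_{\bar a}$ makes the success branch proportional to $\mathcal I_D$ with an input-independent weight.

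The gap is in the step that is supposed to produce the number. Write $s_\nu,u_\nu$ for the dimensions of the symmetric-group and $\UU(D)$ irreps labelled $\nu$ (all diagrams with at most $D$ rows), and $\Theta=\sum_{\alpha\vdash N-1}x_\alpha P_\alpha$. On the part of the support of $\sum_aP^+_{A_aC}\otimes P^{(\bar a)}_\alpha$ carrying the $S_N$-irrep $\mu=\alpha+\square$, this operator has eigenvalue $e_\mu(\alpha)=\frac{N}{D}\frac{s_\alpha u_\mu}{s_\mu u_\alpha}$. So at fixed resource weights the best POVM gives
\[
p(c)=D^{-N}\sum_{\alpha}u_\alpha^{2}\min_{\mu\in\alpha}\frac{c_\mu s_\mu}{u_\mu}.
\]
A sum of minima over $\mu\in\alpha$, as you write, is of this type. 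It does not by itself collapse to $N/(N-1+D^2)$. With $c\equiv1$ (the maximally entangled resource) it is strictly smaller: at $N=D=2$ it gives $1/3$, not $2/5$.

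The stated value is $\max_c p(c)$ subject to $\mathrm{Tr}\,X_A=\sum_\mu c_\mu s_\mu u_\mu=D^N$, an outer optimisation your sketch never performs. The choice $c_\mu\propto u_\mu/s_\mu$ makes all the minima equal and yields
\[
p=\frac{\sum_{\alpha\vdash N-1}u_\alpha^2}{\sum_{\mu\vdash N}u_\mu^2}=\frac{\binom{N+D^2-2}{N-1}}{\binom{N+D^2-1}{N}}=\frac{N}{N-1+D^2},
\]
because $\sum_{\mu\vdash N,\,\ell(\mu)\le D}u_\mu^2=\dim\mathrm{Sym}^N(\C^D\otimes\C^D)$.

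For optimality, the dual certificate you leave as ``the hard step'' is the entire upper bound, and the reduction before it also needs an argument.

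First, the product form must be derived for every exact protocol, not imposed as an ansatz. Exactness holds port by port, since $\mathcal I_D$ has Choi rank one. The port-$a$ output is $D^{1-N}\mathrm{Tr}_{A_aC}\bigl[\hat\Pi_a(\psi^+_{A_aB_a}\otimes\rho_C)\bigr]$ with $\hat\Pi_a=\mathrm{Tr}_{A_{\bar a}}\bigl[O_A^\dagger\Pi_aO_A\bigr]$. Exactness therefore forces $\hat\Pi_a\propto P^+_{A_aC}$, and positivity then forces $O_A^\dagger\Pi_aO_A=P^+_{A_aC}\otimes\Theta_{\bar a}$. The condition $\sum_a\Pi_a\le\id$ becomes $\sum_aP^+_{A_aC}\otimes\Theta_{\bar a}\le X_A\otimes\id_C$ with $X_A=O_A^\dagger O_A$. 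It is this linear SDP in $(X_A,\Theta)$, not the protocol with its pure resource state, that may be twirled.

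Second, the certificate itself. Suppose $t_{\alpha\mu}\ge0$ is supported on pairs $\mu=\alpha+\square$ and satisfies $\sum_\mu t_{\alpha\mu}=u_\alpha^2$ and $\sum_\alpha t_{\alpha\mu}=\frac{N}{N-1+D^2}u_\mu^2$. Bounding each minimum in $p(c)$ by the average with weights $t_{\alpha\mu}/u_\alpha^2$ then gives $p(c)\le\frac{N}{N-1+D^2}D^{-N}\sum_\mu c_\mu s_\mu u_\mu=\frac{N}{N-1+D^2}$.

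Such a $t$ exists. Let $Q_\mu$ and $R_\alpha$ be the projectors onto the blocks $V_\mu\otimes V_\mu\subset\mathrm{Sym}^N(\C^D\otimes\C^D)$ and $V_\alpha\otimes V_\alpha\subset\mathrm{Sym}^{N-1}(\C^D\otimes\C^D)$, and take $t_{\alpha\mu}=\frac{N}{N-1+D^2}\mathrm{Tr}\bigl[R_\alpha\,\mathrm{Tr}_1Q_\mu\bigr]$. Covariance under $\UU(D)\times\UU(D)$ and the Pieri rule place $\mathrm{Tr}_1Q_\mu$ on the blocks with $\mu=\alpha+\square$. Moreover, $\mathrm{Tr}_1$ maps the projector onto $\mathrm{Sym}^N$ to $\frac{N-1+D^2}{N}$ times the projector onto $\mathrm{Sym}^{N-1}$, which gives both marginal conditions.

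Without these steps your proposal establishes only the form of the instrument, not the value $\lambda(N,D)$ and not its optimality.
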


For a given implementation of pPBT, we also resolve the success outcome by the port $a\in\{1,\dots,N\}$ at which the input appears, and write $\mathcal T_D^{(\textsf{s},a)}:\BB(\C^D)\to\BB\bigl((\C^D)^{\otimes N}\bigr)$ for the resulting sub-normalized map to all $N$ output ports. Reading out port $a$, i.e.\ tracing out the other ports, and summing over $a$ recovers $\mathcal T_D^{(\textsf{s})}$.

The coherent version of pPBT leads to probabilistic storage-and-retrieval (PSAR) of unknown unitary channels.

\begin{thm}[PSAR for unitary channels {\cite{sedlak2019optimal}}] \label{PSAR}

Let $U\in \UU(D)$ be an unknown unitary available through $N$ uses during the storage phase. There exists a probabilistic storage-and-retrieval protocol such that
\begin{enumerate}
\item retrieval is exact conditioned on success,
\item the success probability equals
\begin{equation}
\lambda(N,D)=\frac{N}{N-1+D^2},
\end{equation}
\item the program register of the optimal-success protocol has dimension
\begin{equation} \label{eq:psar-memory}
\dim\HH_{\mathrm{prog}}=\binom{N+D^2-1}{N}=\bigt{N^{D^{2}-1}}.
\end{equation}
\end{enumerate}
\end{thm}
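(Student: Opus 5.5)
The plan is to combine the probabilistic port-based teleportation instrument \eqref{eq:ppbt-instrument} with the $N$ uses of $U$ and then compress the resulting program using its permutation symmetry. For storage, take the pPBT resource state on $N$ pairs of $D$-dimensional systems, with the ports $A_1,\dots,A_N$ and the sender's halves $B_1,\dots,B_N$. Apply $U$ to each port, which uses $U$ exactly $N$ times, and keep $(U^{\otimes N}\otimes\id)\ket{\psi}$ as the program. For retrieval on an input $\rho$, run the pPBT measurement on the input together with the $B$ systems. On outcome $(\mathsf s,a)$, output port $a$. Since the same unitary was applied to every port, the output is $U\,\mathcal T^{(\mathsf s,a)}_{\rho}\,U^\dagger$. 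Summing over $a$ and using $\mathcal T_D^{(\mathsf s)}=\lambda(N,D)\mathcal I_D$ gives exactly $\lambda(N,D)\,U\rho U^\dagger$, so retrieval is exact and heralded with success $\lambda(N,D)=N/(N-1+D^2)$. This settles items 1 and 2. Optimality of the success probability is inherited from the optimality of pPBT: any PSAR protocol with success $p$, applied to a maximally entangled half, yields a pPBT-type protocol with success $p$. I would cite \cite{sedlak2019optimal} for this step rather than redo it.

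Item 3 needs the memory bound, and I expect it to be the main obstacle. The key observation is that only the action on the symmetric structure matters. I would use the known fact that the optimal pPBT resource state can be chosen as $(O_A\otimes\id_B)$ applied to maximally entangled pairs, with $O_A$ permutation invariant. The program state $\bigotimes_k (U\otimes\id)\ket{\phi^+}_{A_kB_k}$ lives in $\mathrm{Sym}$-invariant structure. Equivalently, write it as $\kett{U}^{\otimes N}$ using the vectorization $\kett{U}\in\C^{D^2}$. Then $\kett{U}^{\otimes N}$ lies in the symmetric subspace $\mathrm{Sym}^N(\C^{D^2})$, whose dimension is $\binom{N+D^2-1}{N}$.

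Two things remain to check. First, the optimal protocol stores, up to a fixed $U$-independent operator, exactly this symmetric vector: $O_A$ commutes with the port action $U^{\otimes N}$ on the relevant part, so it can be moved into the retrieval instrument. Second, because the retrieval measurement can be composed with the fixed isometric embedding $\mathrm{Sym}^N(\C^{D^2})\hookrightarrow(\C^{D^2})^{\otimes N}$, the program may be stored compressed with no loss of exactness or success. This gives \eqref{eq:psar-memory}.

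The asymptotic $\Theta(N^{D^2-1})$ follows from $\binom{N+m-1}{N}=\binom{N+m-1}{m-1}\sim N^{m-1}/(m-1)!$ with $m=D^2$. The delicate point is justifying that the resource operator $O_A$ can be absorbed so that the stored state is genuinely $\kett{U}^{\otimes N}$ up to a fixed isometry. If that fails, the fallback is Schur–Weyl duality: restrict to the $U^{\otimes N}\otimes\bar U^{\otimes N}$-covariant block structure, and note that the orbit $\{\kett{U}^{\otimes N}\}$ spans at most the symmetric subspace, so any program vector obtained by a fixed operator from it lives in a space of at most that dimension.
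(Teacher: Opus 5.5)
The paper gives no proof of this theorem. It imports it from~\cite{sedlak2019optimal}, and Remark~\ref{proofPoly} attributes the register to Schur--Weyl duality, i.e.\ to storing the program in $\bigoplus_{\mu\vdash N,\,\ell(\mu)\le D}\mathcal{U}_\mu\otimes\mathcal{U}_\mu$ after discarding the $S_N$ multiplicity spaces. Your items~1 and~2 are the standard pPBT construction and are fine. Your route to item~3 is different and more elementary, and it is correct in the form of your fallback.

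The program is, up to normalisation, $(O\otimes\id)|U\rangle\!\rangle^{\otimes N}$ with $|U\rangle\!\rangle=(U\otimes\id_D)\ket{\phi^+}$ and the systems regrouped into port--partner pairs. Moreover $|U\rangle\!\rangle^{\otimes N}\in\mathrm{Sym}^N(\C^{D^2})$. There is an even more direct argument than the image-dimension count. Since $O=\sum_\mu\sqrt{c_\mu}P_\mu$ commutes with every port permutation (Lemma~\ref{lem:direct-symmetry}), $O\otimes\id$ commutes with simultaneous permutations of the $N$ pairs and maps $\mathrm{Sym}^N(\C^{D^2})$ into itself. A fixed isometry then stores every program in $\binom{N+D^2-1}{N}$ dimensions, and re-embedding before the pPBT measurement leaves exactness and $\lambda(N,D)$ unchanged.

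The two registers are the same space. By the Cauchy identity, $\mathrm{Sym}^N(\C^D\otimes\C^D)\cong\bigoplus_\mu\mathcal{U}_\mu\otimes\mathcal{U}_\mu$, so $\sum_\mu(\dim\mathcal{U}_\mu)^2=[z^N](1-z)^{-D^2}=\binom{N+D^2-1}{N}$. Your version needs nothing beyond $[O,V_\pi]=0$, and it makes Remark~\ref{proofPoly} concrete. It is the pure-state case of Lemma~\ref{lem:inv-compression}, in which only the trivial component $\lambda=(N)$ of $S_N$ on $(\C^{D^2})^{\otimes N}$ is populated. That is why the exponent is $D^2-1$ rather than the $(D^2-1)(D^2+2)/2$ of $K_{D^2}(N)$. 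The Schur--Weyl form, on which the cited construction rests, instead exposes the $\mu$-block structure of the optimal program.

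Two of your steps should be dropped rather than repaired.

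\emph{Moving $O$ into the retrieval.} $O$ cannot be ``moved into the retrieval instrument'', because it is positive but not unitary. Absorbing it into the measurement turns the effects $\Pi_a$ into $(\id\otimes O)\Pi_a(\id\otimes O)$, which are only guaranteed to sum to at most $\id\otimes O^2$. If this were harmless, pPBT with a maximally entangled resource would already attain $\lambda(N,D)$, which it does not. The stored state is therefore not $|U\rangle\!\rangle^{\otimes N}$ up to an isometry, and it need not be, since only its support matters.

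\emph{The optimality reduction.} This is not an argument. A PSAR retrieval acts jointly on the whole program and the input, and does not in general split into a sender's measurement followed by a receiver's port selection. The bound $N/(N-1+D^2)$ therefore has to be imported from~\cite{sedlak2019optimal}, as the paper does.
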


\begin{remark}[Origin of the polynomial memory]\label{proofPoly}
The register of Theorem~\ref{PSAR} rests on Schur-Weyl duality and on the extra structure of unitary families~\cite{sedlak2019optimal,markiewicz2023duality}. Polynomial size needs only that the program is invariant under relabelling the $N$ uses: such a program can be stored without its permutation degrees of freedom (Lemma~\ref{lem:inv-compression}; see~\cite{yang2016efficient} for exact compression of this kind, including a hybrid quantum--classical version). The pPBT program of every channel has this symmetry (Remark~\ref{rem:persistent}), which gives the direct route its polynomial memory (Section~\ref{sec:direct-compression}).
\end{remark}

\begin{lem}[Compression of permutation-invariant programs]\label{lem:inv-compression}
Let a finite group $G$ act unitarily on $\HH$ by $g\mapsto V_g$, and write its isotypic decomposition as $\HH\cong\bigoplus_\lambda\mathcal K_\lambda\otimes\mathcal R_\lambda$ with $V_g\cong\bigoplus_\lambda\id_{\mathcal K_\lambda}\otimes R_\lambda(g)$, where the $R_\lambda$ are pairwise inequivalent irreducible representations on $\mathcal R_\lambda$. Then
\begin{enumerate}
\item every state with $V_g\sigma V_g^\dagger=\sigma$ for all $g\in G$ has the form $\sigma=\bigoplus_\lambda p_\lambda\,\sigma_\lambda\otimes\id_{\mathcal R_\lambda}/\dim\mathcal R_\lambda$;
\item the maps $\mathcal E(\sigma)=\bigoplus_\lambda\Tr_{\mathcal R_\lambda}\bigl(\Pi_\lambda\sigma\Pi_\lambda\bigr)$ and $\mathcal D\bigl(\bigoplus_\lambda\tau_\lambda\bigr)=\bigoplus_\lambda\tau_\lambda\otimes\id_{\mathcal R_\lambda}/\dim\mathcal R_\lambda$, with $\Pi_\lambda$ the isotypic projectors (on an arbitrary operator, $\mathcal D$ first pinches onto the $\lambda$-blocks), are CPTP and independent of $\sigma$, and $\mathcal D\circ\mathcal E(\sigma)=\frac1{|G|}\sum_{g}V_g\sigma V_g^\dagger$. In particular $\mathcal D\circ\mathcal E(\sigma)=\sigma$ for every invariant $\sigma$, and $\mathcal E(\sigma)$ occupies a register of dimension $\sum_\lambda\dim\mathcal K_\lambda$.
\end{enumerate}
For the symmetric group $S_m$ permuting the factors of $(\C^D)^{\otimes m}$, Schur-Weyl duality identifies $\mathcal K_\lambda$ with the irreducible representation of $\UU(D)$ labelled by a Young diagram $\lambda\vdash m$ with $\ell(\lambda)\le D$ rows, so the retained dimension is
\begin{equation}\label{eq:KD}
K_D(m):=\sum_{\lambda\vdash m,\ \ell(\lambda)\le D}\dim\mathcal K_\lambda=\bigo{m^{(D-1)(D+2)/2}}.
\end{equation}
\end{lem}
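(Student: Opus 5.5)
The proof splits into three parts: the structure of invariant states (part 1), the encoder/decoder identity (part 2), and the dimension count for $S_m$.

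\emph{Part 1.} Fix the isotypic decomposition and write $\Pi_\lambda$ for the isotypic projectors. Each $\Pi_\lambda$ is a central element of the group algebra image, $\Pi_\lambda=\frac{\dim\mathcal R_\lambda}{|G|}\sum_g\overline{\chi_\lambda(g)}V_g$. Hence an invariant $\sigma$ commutes with every $\Pi_\lambda$, and so $\sigma=\bigoplus_\lambda\Pi_\lambda\sigma\Pi_\lambda$; there are no off-diagonal blocks between inequivalent isotypes. This also follows from Schur's lemma, since $\mathrm{Hom}_G(\mathcal R_\lambda,\mathcal R_\mu)=0$ for $\lambda\neq\mu$. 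Inside a block, $\sigma_{\lambda\lambda}$ commutes with $\id\otimes R_\lambda(g)$ for all $g$. The commutant of the irreducible $R_\lambda$ is $\BB(\mathcal K_\lambda)\otimes\id$, by Schur's lemma together with the double commutant theorem, so $\sigma_{\lambda\lambda}=A_\lambda\otimes\id$ with $A_\lambda\ge0$. Setting $p_\lambda=\dim\mathcal R_\lambda\Tr A_\lambda$ and $\sigma_\lambda=A_\lambda\dim\mathcal R_\lambda/p_\lambda$ (for $p_\lambda>0$) gives the stated form.

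\emph{Part 2.} The encoder $\mathcal E$ is a composition of two CPTP maps. The first is the pinching $\sigma\mapsto\bigoplus_\lambda\Pi_\lambda\sigma\Pi_\lambda$ with $\sum_\lambda\Pi_\lambda=\id$. The second is the partial trace on each block, and the result is regarded as a block-diagonal state on $\bigoplus_\lambda\mathcal K_\lambda$. Neither map depends on $\sigma$. The decoder $\mathcal D$ is the pinching on $\bigoplus_\lambda\mathcal K_\lambda$ followed by appending the maximally mixed state $\id_{\mathcal R_\lambda}/\dim\mathcal R_\lambda$ block by block. Appending a state conditioned on the block label $\lambda$ is CP, and it is trace preserving because it is a direct sum of trace-preserving maps composed with a pinching.

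For the identity $\mathcal D\circ\mathcal E=\frac1{|G|}\sum_gV_g(\cdot)V_g^\dagger$, I first reduce to a single block. The twirl $\mathcal G$ first kills the off-diagonal blocks: for $X\in\mathrm{Hom}(\mathcal K_\mu\otimes\mathcal R_\mu,\mathcal K_\lambda\otimes\mathcal R_\lambda)$, the twirled operator is an intertwiner between inequivalent irreducibles tensored with a multiplicity factor, and therefore vanishes. On a diagonal block, $\mathcal G$ acts as $\id\otimes\mathcal G_\lambda$, where $\mathcal G_\lambda(Y)=\frac1{|G|}\sum_gR_\lambda(g)YR_\lambda(g)^\dagger$. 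By Schur's lemma, $\mathcal G_\lambda(Y)=\frac{\Tr Y}{\dim\mathcal R_\lambda}\id$. By linearity, on $X=A\otimes Y$ this gives $A\Tr Y\otimes\id/\dim\mathcal R_\lambda$, which is exactly $\mathcal D\circ\mathcal E(X)$. Fixed points of the twirl are precisely the invariant states, so $\mathcal D\circ\mathcal E(\sigma)=\sigma$ for invariant $\sigma$. The output of $\mathcal E$ lives on $\bigoplus_\lambda\mathcal K_\lambda$, which has dimension $\sum_\lambda\dim\mathcal K_\lambda$.

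\emph{The $S_m$ count.} Schur--Weyl duality gives $(\C^D)^{\otimes m}\cong\bigoplus_{\lambda\vdash m,\ \ell(\lambda)\le D}\mathcal K_\lambda\otimes\mathcal R_\lambda$. Here the $\mathcal R_\lambda$ are the Specht modules, which are pairwise inequivalent irreducibles of $S_m$, and the multiplicity spaces $\mathcal K_\lambda$ are the $\UU(D)$ irreducibles. The retained dimension is then controlled by two estimates.
\begin{itemize}
\item The Weyl dimension formula gives $\dim\mathcal K_\lambda=\prod_{i<j}\frac{\lambda_i-\lambda_j+j-i}{j-i}\le(m+D)^{D(D-1)/2}$, since there are $D(D-1)/2$ pairs $i<j$.
\item The number of diagrams $\lambda\vdash m$ with $\ell(\lambda)\le D$ is at most $(m+1)^{D-1}$, because $\lambda$ is determined by $\lambda_1,\dots,\lambda_{D-1}$.
\end{itemize}
Combining these, $K_D(m)=O(m^{D(D-1)/2+D-1})$. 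Since $D(D-1)/2+D-1=(D-1)(D+2)/2$, this is the claimed exponent.

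The main point needing care is the off-diagonal vanishing and the single-block computation, that is, Schur's lemma applied correctly with multiplicities. The remaining steps are routine, and the polynomial bound only requires the crude counting above, not sharp constants.
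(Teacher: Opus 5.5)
Your proposal is correct and follows the paper's own proof: Schur's lemma gives the block form $\bigoplus_\lambda A_\lambda\otimes\id_{\mathcal R_\lambda}$. $\mathcal E$ and $\mathcal D$ are CPTP as compositions of pinching, partial trace and fixed-state preparation. The twirl vanishes on the off-diagonal isotypic blocks and acts as the completely depolarising channel on each $\mathcal R_\lambda$. The bound on $K_D(m)$ comes from $O(m^{D-1})$ admissible diagrams times the Weyl-formula estimate $\dim\mathcal K_\lambda=O(m^{D(D-1)/2})$. You only add detail the paper leaves implicit, namely the explicit central projectors and the constants $(m+1)^{D-1}$ and $(m+D)^{D(D-1)/2}$.
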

\begin{proof}
Item~1 is Schur's lemma: an operator commuting with every $V_g$ acts as $\bigoplus_\lambda X_\lambda\otimes\id_{\mathcal R_\lambda}$. For item~2, observe that $\mathcal E$ and $\mathcal D$ are compositions of a fixed change of basis, a pinching onto the isotypic blocks, a partial trace and the preparation of a fixed state, hence CPTP. The group twirl annihilates the blocks $\lambda\ne\lambda'$ by Schur orthogonality and acts on each $\mathcal R_\lambda$ as the completely depolarising channel, which is exactly $\mathcal D\circ\mathcal E$. The bound in~\eqref{eq:KD} holds because there are $\bigo{m^{D-1}}$ admissible diagrams and, by the Weyl dimension formula, $\dim\mathcal K_\lambda=\bigo{m^{D(D-1)/2}}$.
\end{proof}

We consider three oracle models for an unknown POVM, differing in the structure made available during storage. The direct model $\Omega_{\mathrm{Dir}}$ provides black-box access to the POVM channel $\PP_{\mathbf M}$ (Definition~\ref{def:direct-oracle}); the structured models $\Omega_{\mathrm{Nai}}$ and $\Omega_{\mathrm{Br}}$ provide coherent access to a fixed Naimark dilation (Definition~\ref{def:coherent-oracle}) and to a Kotowski-Oszmaniec branch decomposition (Definition~\ref{def:branch-oracle}). The representations are given as oracle access, their construction is not addressed, and no reduction between the structured models is known. What unifies the three routes is port-based teleportation, and the lower bounds of Section~\ref{sec:lower} apply to all of them. The universal pPBT protocol applies to every channel, in particular to $\PP_{\mathbf M}$.

\begin{lem}[Universal PSAR for POVM measurement channels]
\label{lem:psar-povm-channel}
For a POVM measurement channel $\mathcal{P}_{\mathbf{M}}: \mathcal{B}(\mathbb{C}^d) \to \mathcal{B}(\mathbb{C}^n)$ with $n$ outcomes and access to $N$ queries, the universal pPBT-based storage-and-retrieval protocol yields:
\begin{itemize}
\item Success probability (state-independent): $\lambda(N,d) = \frac{N}{N-1+d^2}$.
\item Program carrier: dimension $(dn)^N$ before compression.
\item On heralded success: retrieved channel equals original POVM channel exactly.
\end{itemize}
\end{lem}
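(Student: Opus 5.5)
The plan is to build the protocol by composing the measurement channel with probabilistic port-based teleportation at the input side, using the pPBT instrument of Theorem~\ref{eq:ppbt-instrument} with $D=d$. In the storage phase we prepare the optimal pPBT resource state on $N$ port pairs $(A_k,B_k)$, each of local dimension $d$. We apply one query of $\PP_{\mathbf M}$ to each port $A_k$, so the $N$ uses are spent in parallel. The program is then
\begin{equation}
\sigma_{\mathbf M}=\bigl(\PP_{\mathbf M}^{\otimes N}\otimes\mathcal I_d^{\otimes N}\bigr)(\Psi_{\mathrm{res}}),
\end{equation}
which lives on $(\C^n\otimes\C^d)^{\otimes N}$ and so has dimension $(dn)^N$. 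This settles the second bullet directly.

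At retrieval, the input $\rho$ and the $N$ retained halves $B_k$ enter the pPBT measurement of Bob/Alice. For the statement we only need the standard property of pPBT in the port-resolved form introduced above. Conditioned on outcome $(\mathsf s,a)$, port $A_a$ carries the input exactly, and on success the output is $\lambda(N,d)\,\rho$ at port $a$. The other ports are traced out.

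The key step is commutation. The channel $\PP_{\mathbf M}^{\otimes N}$ acts only on the $A$ ports, while the pPBT measurement acts only on the $B$ ports and the input, so the two operations commute. We may therefore analyse the protocol as if teleportation had happened first and $\PP_{\mathbf M}$ had been applied to port $a$ afterwards. Reading out port $a$ (whose classical outcome register lies in $\C^n$) and tracing the rest gives $\sum_a \PP_{\mathbf M}\circ\mathrm{Tr}_{\bar a}\circ\mathcal T^{(\mathsf s,a)}_d(\rho)$. Because $\PP_{\mathbf M}$ is trace preserving, tracing the other ports before or after applying it makes no difference. The sum therefore equals $\PP_{\mathbf M}(\mathcal T^{(\mathsf s)}_d(\rho))=\lambda(N,d)\PP_{\mathbf M}(\rho)$. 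This gives exact heralded retrieval in the sense of \eqref{eq:exact-retrieval}, with $s_{\mathbf M}(\rho)=\lambda(N,d)$ independent of both $\rho$ and $\mathbf M$.

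The main point requiring care is making this commutation rigorous in the Choi/teleportation picture. The statement $\mathcal T^{(\mathsf s,a)}$ is a map on the $A$ ports obtained from the resource and the measurement, so applying local channels on $A$ before measuring must be justified by linearity and the tensor-product structure of the operations. Once that is in place, the success probability is inherited unchanged from the pPBT theorem, and exactness follows.
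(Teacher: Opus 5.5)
Your proposal is correct and follows essentially the paper's argument. Apply $\PP_{\mathbf M}$ to one half of each port of the optimal $d$-dimensional pPBT resource (the paper calls this half $B^N$ and measures $A^N$ together with the input, the reverse of your labelling), read off the $(dn)^N$ carrier, and use that this local channel commutes with the teleportation measurement on the other halves, so that success yields $\lambda(N,d)\,\PP_{\mathbf M}(\rho)$. Your explicit sum over ports, using trace preservation of $\PP_{\mathbf M}$ and $\sum_a\mathrm{Tr}_{\bar a}\mathcal T^{(\mathsf s,a)}_d=\mathcal T^{(\mathsf s)}_d$, spells out a step the paper states in one line. One small imprecision: the weight carried by a single port $a$ is $\lambda(N,d)/N$ for the optimal symmetric protocol, not $\lambda(N,d)$, but your summation already handles this correctly.
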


\begin{proof}
Let $\omega_{AB}=\projj{\zeta}$ be the \emph{optimal} $N$-port pPBT resource in dimension $d$, $\ket{\zeta}=(O_A\otimes\id)\ket{\psi^+}^{\otimes N}$, with $O_A=X_A^{1/2}$, $X_A=\sum_\mu c_\mu P_\mu$, $c_\mu\ge0$, $\Tr X_A=d^N$, and $P_\mu$ the isotypic Young projectors of $S_N$ on $(\C^d)^{\otimes N}$~\cite[Thm.~8]{studzinski2017port} - a single globally optimized entangled state.
Denoting reference systems $A^{N}$ and program systems $B^{N}$, during storage, apply $\PP_{\mathbf M}$ to each program system, obtaining the program
\begin{equation}
    \sigma_{\mathrm{prog}}=\bigl(\mathrm{id}_{A^{N}}\otimes\PP_{\mathbf M}^{\otimes N}\bigr)(\omega_{AB}).
\end{equation}
Thus, each program system becomes an $n$-dimensional classical outcome register, and the program lives on the $N$ reference systems ($\mathbb C^{d}$ each) and the $N$ outcome registers ($\mathbb C^{n}$ each), resulting in a carrier of dimension $(dn)^{N}$. This carrier is not the memory actually required: the program is invariant under relabelling the ports and can be compressed to a size polynomial in $N$ (Proposition~\ref{prop:direct-cq}).

At retrieval, the fresh input $\rho\in\Omega(\mathbb C^{d})$ is teleported by the optimal pPBT protocol: a joint measurement on $\rho$ and the reference systems $A^{N}$ heralds a success flag $\mathsf s$ and selects a port. Because $\PP_{\mathbf M}^{\otimes N}$ acts on the program systems $B^{N}$ while the teleportation measurement acts on $\rho$ and $A^{N}$, the two commute; the result is identical to first teleporting $\rho$ and then applying $\PP_{\mathbf M}$ to the selected port. Conditioned on $\mathsf s$, the outcome register of the selected output port therefore carries $\PP_{\mathbf M}(\rho)$ exactly, via the Choi-Jamio\l{}kowski identity. Due to optimality of $\omega_{AB}$ and given the fact that teleported system has dimension $d$, the success is state-independent with the optimal \cite{sedlak2019optimal,studzinski2017port} pPBT success probability $\lambda(N,d)=\tfrac{N}{N-1+d^{2}}$.
\end{proof}

\section{Storage and retrieval via the POVM channel} \label{direct}

The most direct approach regards the measurement as its quantum-to-classical channel, with black-box access only to $\PP_{\mathbf M}$ through the oracle $\Omega_{\mathrm{Dir}}(\mathbf M)$ of Definition~\ref{def:direct-oracle}. Conditioned on the pPBT success event, the protocol produces a retrieval channel, denoted by $\hat{\PP}^{\mathrm{cond}}_{\mathbf M}$.

\begin{deff}[Direct POVM-channel oracle $\Omega_{\mathrm{Dir}}$]\label{def:direct-oracle}
     A direct oracle $\Omega_{\mathrm{Dir}}(\mathbf M)$ for an $n$-outcome POVM $\mathbf M$ on $\C^{d}$ provides black-box access to the POVM channel $\PP_{\mathbf M}:\BB(\C^{d})\to\BB(\C^{n})$ as a quantum-to-classical channel. A single query allows the user to prepare an arbitrary input state on $\mathbb{C}^d$, including a state entangled with a local ancillary system, and apply $\PP_{\mathbf M}$ to obtain a classical outcome label in $\{1,\ldots,n\}$. This output is stored in the program register.
\end{deff}

Its program is first stated on the uncompressed carrier $(dn)^N$; Section~\ref{sec:direct-compression} shows that it compresses exactly to polynomial size, so the structured constructions of Sections~\ref{naimark} and~\ref{kot-osz} are to be compared with the compressed program rather than with the carrier.

\begin{propo}[Direct POVM-channel construction]\label{prop:direct-ppbt}
Direct application of Lemma~\ref{lem:psar-povm-channel} to the measurement channel $\mathcal{P}_{\mathbf{M}}:\mathcal{B}(\mathbb{C}^{d})\to\mathcal{B}(\mathbb{C}^{n})$ yields:
\begin{itemize}
\item[(i)] \emph{Exactness.}  On the pPBT success flag the retrieved channel equals $\mathcal{P}_{\mathbf{M}}$ exactly, $F_{\mathrm{ch}}\bigl(\hat{\mathcal{P}}^{\mathrm{cond}}_{\mathbf{M}},\mathcal{P}_{\mathbf{M}}\bigr)=1$.
\item[(ii)] \emph{Heralding.}  Success is state-independent with probability
\begin{equation}
    \lambda(N,d)=\frac{N}{N-1+d^{2}}=1-\bigt{\frac{d^{2}}{N}}, \quad N \gg d^2,
\end{equation}
where the success rate depends solely on the input dimension $d$ and the number of oracle uses $N$.
\item[(iii)] \emph{Memory.} Stored uncompressed, the program occupies a carrier of dimension $(dn)^{N}$, i.e.\ $\lceil N\log_2 d\rceil$ qubits and at most $\lceil N\log_2 n\rceil$ classical bits.
\end{itemize}
\end{propo}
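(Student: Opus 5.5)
The proposition follows by specialising Lemma~\ref{lem:psar-povm-channel} to $\Phi=\PP_{\mathbf M}$ and reading off each item, so the plan is to verify the three claims in order. For (i), I would take the program $\sigma_{\mathrm{prog}}=(\mathrm{id}_{A^N}\otimes\PP_{\mathbf M}^{\otimes N})(\omega_{AB})$ and use the port-resolved success maps $\mathcal T_d^{(\mathsf s,a)}$. Since $\PP_{\mathbf M}^{\otimes N}$ acts only on $B^N$ and the pPBT measurement acts on the input and $A^N$, the two commute. Summing over $a$ and reading out port $a$ then gives
\begin{equation}
\mathcal R_{\mathsf s}(\sigma_{\mathrm{prog}}\otimes\rho)=\PP_{\mathbf M}\bigl(\mathcal T^{(\mathsf s)}_d(\rho)\bigr)=\lambda(N,d)\,\PP_{\mathbf M}(\rho)
\end{equation}
by \eqref{eq:ppbt-instrument}. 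This is exactly condition \eqref{eq:exact-retrieval} with $s_{\mathbf M}(\rho)=\lambda(N,d)$. Hence the conditional channel $\hat\PP^{\mathrm{cond}}_{\mathbf M}$ equals $\PP_{\mathbf M}$, so $F(\hat\PP^{\mathrm{cond}}_{\mathbf M}(\rho),\PP_{\mathbf M}(\rho))=1$ for every $\rho$, and the minimum over inputs is $1$.

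For (ii), the success probability is the trace of the display above, namely $\lambda(N,d)\Tr\PP_{\mathbf M}(\rho)=\lambda(N,d)$. It is independent of $\rho$ because $\PP_{\mathbf M}$ is trace preserving and the pPBT success branch is proportional to the identity channel on $\C^d$. The asymptotic form comes from the identity $\lambda=1-\frac{d^2-1}{N-1+d^2}$. For $N\gg d^2$ the correction term is between $\frac{d^2-1}{2N}$ and $\frac{d^2}{N}$, which gives $\Theta(d^2/N)$. The teleported dimension is $d$, not $dn$, because teleportation is applied to the input before the measurement channel. I expect this to be the only subtle point, and it is settled by the commutation argument above.

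For (iii), I would count registers directly. The program consists of $N$ reference systems $A_j\cong\C^d$, which are genuinely quantum, and $N$ outcome registers. Each outcome register is diagonal in the computational basis, because the range of $\PP_{\mathbf M}$ consists of diagonal states, so it can be stored as a classical label in $\{1,\dots,n\}$. The total carrier dimension is therefore $d^N n^N=(dn)^N$. Encoding $A^N$ requires $\lceil\log_2 d^N\rceil=\lceil N\log_2 d\rceil$ qubits. The $n^N$ possible outcome strings fit into $\lceil N\log_2 n\rceil$ bits, and fewer suffice if some effects vanish, which is why the statement says ``at most''.

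The main obstacle is only bookkeeping: the program state is a classical-quantum state $\sum_{\mathbf x}\ketbra{\mathbf x}{\mathbf x}\otimes\tau_{\mathbf x}$ with $\tau_{\mathbf x}$ on $A^N$. The plan is to check that retrieval uses the label only by reading out the selected port's register. Under that use the label can be kept as classical memory without changing the success statistics, which justifies counting it in bits rather than qubits.
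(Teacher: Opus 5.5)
Your proposal is correct and follows the paper's route. The paper's proof is an appeal to Lemma~\ref{lem:psar-povm-channel}, whose proof uses the same ingredients you list: the commutation of $\PP_{\mathbf M}^{\otimes N}$ (on $B^N$) with the pPBT measurement (on the input and $A^N$), the identity $\mathcal T_d^{(\mathsf s)}=\lambda(N,d)\,\mathcal I_d$ for state-independent success, and the count of $N$ reference systems plus $N$ classical outcome registers giving the $(dn)^N$ carrier. Your explicit bounds on $\frac{d^2-1}{N-1+d^2}$ and your classical--quantum bookkeeping of the labels only spell out that one-line argument in more detail. The paper's ``at most'' refers to encoding the outcome string jointly rather than per query, but your reading also gives a valid upper bound.
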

\begin{proof}
Direct consequence of Lemma~\ref{lem:psar-povm-channel} applied to $\mathcal{P}_{\mathbf{M}}:\mathcal{B}(\mathbb{C}^d)\rightarrow\mathcal{B}(\mathbb{C}^n)$.
\end{proof}

\subsection{Compressing the direct program}\label{sec:direct-compression}

The program of Proposition~\ref{prop:direct-ppbt} is not stored efficiently on the carrier $(dn)^N$: it is invariant under relabelling the ports and its outcome register is classical, and both properties can be used for compression. For $\pi\in S_N$ let $V^A_\pi$ and $V^C_\pi$ permute the $N$ reference systems and the $N$ outcome registers, and write $\pi\cdot o$ for the permuted word, so that $V^C_\pi\ket{o}=\ket{\pi\cdot o}$ for $o\in[n]^N$.

\begin{lem}[Symmetry of the direct program]\label{lem:direct-symmetry}
For every $n$-outcome POVM $\mathbf M$ on $\C^d$, the program $\sigma_{\mathbf M}=(\mathrm{id}_{A^N}\otimes\PP_{\mathbf M}^{\otimes N})(\omega_{AB})$ of Lemma~\ref{lem:psar-povm-channel} has the form
\begin{equation}\label{eq:direct-cq}
\sigma_{\mathbf M}=\sum_{o\in[n]^N}\rho_{\mathbf M}(o)\otimes\projj{o},
\end{equation}
with positive operators $\rho_{\mathbf M}(o)$ on the reference systems, and it is invariant under joint port permutations, $(V^A_\pi\otimes V^C_\pi)\,\sigma_{\mathbf M}\,(V^A_\pi\otimes V^C_\pi)^\dagger=\sigma_{\mathbf M}$ for all $\pi\in S_N$. Equivalently,
\begin{equation}\label{eq:direct-covariance}
\rho_{\mathbf M}(\pi\cdot o)=V^A_\pi\,\rho_{\mathbf M}(o)\,V^{A\dagger}_\pi\qquad\text{for all }\pi\in S_N,\ o\in[n]^N.
\end{equation}
\end{lem}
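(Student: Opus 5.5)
We prove the two assertions separately: the classical–quantum form by a direct computation of $\PP_{\mathbf M}^{\otimes N}$, and the invariance from the permutation symmetry of the resource state $\omega_{AB}$.

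\emph{The cq form.} Because $\PP_{\mathbf M}$ is quantum-to-classical, $\PP_{\mathbf M}^{\otimes N}(X)=\sum_{o\in[n]^N}\Tr(M_{o}X)\projj{o}$ for every operator $X$ on $(\C^d)^{\otimes N}$, where $M_o:=M_{o_1}\otimes\cdots\otimes M_{o_N}$. Applying $\mathrm{id}_{A^N}\otimes\PP_{\mathbf M}^{\otimes N}$ to $\omega_{AB}$ therefore gives \eqref{eq:direct-cq} with
\begin{equation*}
\rho_{\mathbf M}(o)=\Tr_{B^N}\bigl[(\id_{A^N}\otimes M_o)\,\omega_{AB}\bigr].
\end{equation*}
Each $\rho_{\mathbf M}(o)$ is positive, since it is the partial trace of $(\id\otimes M_o^{1/2})\,\omega_{AB}\,(\id\otimes M_o^{1/2})\ge0$.

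\emph{Symmetry of the resource.} Let $W_\pi=V^A_\pi\otimes V^B_\pi$ denote the joint permutation of the $N$ ports. The state $\ket{\psi^+}^{\otimes N}$, viewed on the pairs $(A_iB_i)$, is invariant under $W_\pi$. The operator $O_A=\sum_\mu c_\mu^{1/2}P_\mu$ is a function of the isotypic projectors $P_\mu$, which are central in the $S_N$ group algebra and hence commute with every $V^A_\pi$. Consequently
\begin{equation*}
W_\pi\ket{\zeta}=(V^A_\pi O_A V^{A\dagger}_\pi\otimes\id)\,W_\pi\ket{\psi^+}^{\otimes N}=\ket{\zeta},
\end{equation*}
so that $W_\pi\,\omega_{AB}\,W_\pi^\dagger=\omega_{AB}$. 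This step is the only point that uses the specific structure of the optimal resource of \cite[Thm.~8]{studzinski2017port}, and it is the only place where any real care is needed. For a different resource one would first have to argue that twirling over $S_N$ preserves optimality, but with the stated form the property holds directly.

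\emph{Covariance of the channel.} We have $\PP_{\mathbf M}^{\otimes N}(V^B_\pi X V^{B\dagger}_\pi)=V^C_\pi\,\PP_{\mathbf M}^{\otimes N}(X)\,V^{C\dagger}_\pi$, because permuting the inputs of a tensor power of a channel permutes its outputs; explicitly, $V^{B\dagger}_\pi M_o V^B_\pi=M_{\pi^{-1}\cdot o}$. Combining this with the symmetry of the resource,
\begin{equation*}
(V^A_\pi\otimes V^C_\pi)\,\sigma_{\mathbf M}\,(V^A_\pi\otimes V^C_\pi)^\dagger=(\mathrm{id}\otimes\PP_{\mathbf M}^{\otimes N})\bigl(W_\pi\,\omega_{AB}\,W_\pi^\dagger\bigr)=\sigma_{\mathbf M},
\end{equation*}
which is the claimed joint invariance.

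\emph{Equivalence with \eqref{eq:direct-covariance}.} Expanding the left-hand side in the cq form gives $\sum_o V^A_\pi\,\rho_{\mathbf M}(o)\,V^{A\dagger}_\pi\otimes\projj{\pi\cdot o}$. Comparing the coefficient of $\projj{\pi\cdot o}$ with that in $\sigma_{\mathbf M}$ shows that invariance is equivalent to $\rho_{\mathbf M}(\pi\cdot o)=V^A_\pi\,\rho_{\mathbf M}(o)\,V^{A\dagger}_\pi$ for all $\pi\in S_N$ and $o\in[n]^N$. The projectors $\projj{o}$ are orthogonal, so this comparison is valid in both directions.
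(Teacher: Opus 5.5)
Your proof is correct and follows essentially the same route as the paper. The cq form comes from classicality of the outcome registers, and centrality of the isotypic projectors $P_\mu$ makes $O_A$ commute with port permutations, so $\omega_{AB}$ is jointly permutation invariant; permutation covariance of $\PP_{\mathbf M}^{\otimes N}$ followed by coefficient comparison then gives \eqref{eq:direct-covariance}. You spell out the explicit form and positivity of $\rho_{\mathbf M}(o)$ and the identity $V^{B\dagger}_\pi M_oV^B_\pi=M_{\pi^{-1}\cdot o}$, which the paper leaves implicit.
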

\begin{proof}
Equation~\eqref{eq:direct-cq} holds because $\PP_{\mathbf M}$ is a measurement, so every outcome register is classical. For the symmetry, write $\omega_{AB}=\projj{\zeta}$ with $\ket{\zeta}=(O_A\otimes\id)\ket{\psi^+}^{\otimes N}$ and $O_A=X_A^{1/2}$, $X_A=\sum_\mu c_\mu P_\mu$, as in the proof of Lemma~\ref{lem:psar-povm-channel}. Each isotypic Young projector $P_\mu=\tfrac{d_\mu}{N!}\sum_{\sigma\in S_N}\chi^\mu(\sigma^{-1})V^A_\sigma$, where $d_\mu$ and $\chi^\mu$ are the dimension and character of the irreducible representation $\mu$ of $S_N$, is central in the group algebra~\cite[Suppl.\ Inf., Fact~19]{studzinski2017port}; since the $P_\mu$ are mutually orthogonal, $O_A=\sum_\mu\sqrt{c_\mu}\,P_\mu$ commutes with every port permutation, $[O_A,V^A_\pi]=0$. As $\ket{\psi^+}^{\otimes N}$ is invariant under $V^A_\pi\otimes V^B_\pi$, the resource $\omega_{AB}$ is invariant under joint permutations of the ports, and so is $\sigma_{\mathbf M}$, because $\PP_{\mathbf M}^{\otimes N}$ applies the same channel to every port. Inserting~\eqref{eq:direct-cq} into this invariance and comparing the coefficients of $\projj{\pi\cdot o}$ gives~\eqref{eq:direct-covariance}.
\end{proof}

Treating each pair of reference system and outcome register as a single system $\C^{dn}$, Lemmas~\ref{lem:direct-symmetry} and~\ref{lem:inv-compression} already compress $\sigma_{\mathbf M}$ to dimension $K_{dn}(N)=\bigo{N^{(dn-1)(dn+2)/2}}$, polynomial in $N$. Keeping the outcome register classical gives a smaller exponent. For a word $o$ let $t(o)=(t_1,\dots,t_n)$ be its \emph{type}, $t_j=|\{k:o_k=j\}|$; let $o^t$ be the sorted word of type $t$; and fix $\pi_o\in S_N$ with $\pi_o\cdot o^{t(o)}=o$. The entries $t_j\in\{0,1,\dots,N\}$ sum to $N$; zero counts occur, and we set $K_d(0)=1$. Types range over $T_{n,N}=\{t\in\mathbb N^n:\sum_jt_j=N\}$, and the Young subgroup $S_t=S_{t_1}\times\dots\times S_{t_n}$ is the stabiliser of $o^t$.

\begin{propo}[Compressed direct program]\label{prop:direct-cq}
There are encoding and decoding CPTP maps, independent of $\mathbf M$, such that for every $n$-outcome POVM $\mathbf M$ on $\C^d$ the program $\sigma_{\mathbf M}$ is recovered exactly from a hybrid program consisting of a classical type $t\in T_{n,N}$ and a quantum register of dimension $\prod_{j=1}^nK_d(t_j)$. Consequently, the direct protocol of Proposition~\ref{prop:direct-ppbt}, with the same $N$ queries, the same exactness and success probability $\lambda(N,d)$, can be stored with
\begin{equation}\label{eq:direct-cq-memory}
Q=\Bigl\lceil\log_2\max_{t\in T_{n,N}}\prod_{j=1}^nK_d(t_j)\Bigr\rceil,
\qquad
C=\Bigl\lceil\log_2\binom{N+n-1}{n-1}\Bigr\rceil,
\end{equation}
and logical direct-sum dimension $\sum_{t\in T_{n,N}}\prod_jK_d(t_j)=\bigo{N^{(n-1)+n(d-1)(d+2)/2}}$ (Remark~\ref{rem:logical-dim}).
\end{propo}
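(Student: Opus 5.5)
The plan is to use the classical–quantum form \eqref{eq:direct-cq} and the covariance \eqref{eq:direct-covariance} of Lemma~\ref{lem:direct-symmetry} to reduce every word $o$ to its sorted representative $o^{t(o)}$. Then Lemma~\ref{lem:inv-compression} is applied to the residual Young-subgroup symmetry $S_t$.

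\textbf{Encoder.} The encoder $\mathcal E$ first measures the outcome registers in the computational basis. This does not disturb $\sigma_{\mathbf M}$, since the outcome registers are already classical. On reading $o$, it records the type $t=t(o)$ classically and applies $V^{A\dagger}_{\pi_o}$ to the reference systems. By \eqref{eq:direct-covariance}, this maps $\rho_{\mathbf M}(o)$ to $\rho_{\mathbf M}(o^{t})$. The resulting hybrid state is
\[
\sum_t \projj{t}\otimes \sum_{o:\,t(o)=t}\rho_{\mathbf M}(o^t)=\sum_t \tbinom{N}{t}\projj{t}\otimes\rho_{\mathbf M}(o^t),
\]
and the word $o$ itself is then discarded.

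\textbf{Residual symmetry.} Because $S_t$ stabilises $o^t$, \eqref{eq:direct-covariance} gives $V^A_\pi\rho_{\mathbf M}(o^t)V^{A\dagger}_\pi=\rho_{\mathbf M}(o^t)$ for all $\pi\in S_t$. The reference space splits as $(\C^d)^{\otimes N}=\bigotimes_j(\C^d)^{\otimes t_j}$, with $S_{t_j}$ acting on the $j$-th block. Its irreducible representations are tensor products $\mu_1\otimes\dots\otimes\mu_n$ of irreducible representations of the factors, and by Schur–Weyl duality the multiplicity spaces are $\bigotimes_j\mathcal K_{\mu_j}$. Lemma~\ref{lem:inv-compression} therefore applies to $G=S_t$ and gives a CPTP pair $(\mathcal E_t,\mathcal D_t)$ with $\mathcal D_t\circ\mathcal E_t(\tau)=\tau$ for every $S_t$-invariant $\tau$. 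The retained dimension is
\[
\sum_{\mu_1,\dots,\mu_n}\prod_j\dim\mathcal K_{\mu_j}=\prod_jK_d(t_j),
\]
with $K_d(0)=1$ covering empty blocks. The encoder, conditioned on $t$, applies $\mathcal E_t$.

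\textbf{Decoder.} Conditioned on $t$, the decoder applies $\mathcal D_t$, recovering $\rho_{\mathbf M}(o^t)$ up to normalisation. It then draws a uniformly random $\pi\in S_N$, applies $V^A_\pi$ to the reference systems and writes $\ket{\pi\cdot o^t}$ into the outcome registers. Since $\pi\cdot o^t$ is uniform over the $\binom{N}{t}$ words of type $t$, and by \eqref{eq:direct-covariance} the reference state attached to each such word is $\rho_{\mathbf M}(\pi\cdot o^t)$, the output is
\[
\sum_{o:\,t(o)=t}\rho_{\mathbf M}(o)\otimes\projj{o}.
\]
Summing over $t$ gives back $\sigma_{\mathbf M}$. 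The map is well defined regardless of the coset representative $\pi_o$ chosen, again because of the $S_t$-invariance. Both $\mathcal E$ and $\mathcal D$ are independent of $\mathbf M$, since they use only the group structure.

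\textbf{Memory count.} Storing the type needs $C=\lceil\log_2|T_{n,N}|\rceil=\lceil\log_2\binom{N+n-1}{n-1}\rceil$ bits. The quantum register must accommodate the largest $t$-block, which gives $Q$ as in \eqref{eq:direct-cq-memory}. The logical direct-sum dimension is obtained by bounding each factor with \eqref{eq:KD}, $K_d(t_j)=\bigo{N^{(d-1)(d+2)/2}}$, and multiplying by $|T_{n,N}|=\bigo{N^{n-1}}$. Exactness and the success probability $\lambda(N,d)$ are inherited unchanged from Proposition~\ref{prop:direct-ppbt}, because retrieval acts on the exactly reconstructed $\sigma_{\mathbf M}$.

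\textbf{Main obstacle.} The delicate step is the residual-symmetry step. One must check that the irreducible representations of the Young subgroup acting on the block tensor product give multiplicities that factorise exactly as $\prod_jK_d(t_j)$, as opposed to something coarser from restricting $S_N$-irreducible representations. The first thing to try is to write the $S_t$-isotypic decomposition directly as the tensor product of the $S_{t_j}$ Schur–Weyl decompositions on each block.
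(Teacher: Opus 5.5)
Your proposal is correct and follows the paper's proof essentially step for step. You measure the outcome registers, re-align the reference systems with $V^{A\dagger}_{\pi_o}$ so that only the type survives, compress the residual $S_t$-invariant block with Lemma~\ref{lem:inv-compression} (the product Schur--Weyl decomposition giving $\prod_j K_d(t_j)$), and decode with a uniformly random joint port permutation. The ``main obstacle'' you flag is already settled by your own observation that the irreducible representations of $S_t=S_{t_1}\times\dots\times S_{t_n}$ are the pairwise inequivalent tensor products of those of the factors, which is exactly the fact the paper invokes.
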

\begin{proof}
\emph{Encoding.} (1) Measure the outcome registers in the computational basis; by~\eqref{eq:direct-cq} this does not disturb $\sigma_{\mathbf M}$ and yields a word $o$ with probability $\Tr\rho_{\mathbf M}(o)$. (2) Apply $V^{A\dagger}_{\pi_o}$ to the reference systems. By~\eqref{eq:direct-covariance} they are left in the normalised state of $\rho_{\mathbf M}(o^t)$, $t=t(o)$, which depends on $o$ only through $t$; record $t$ and discard $o$. (3) By~\eqref{eq:direct-covariance} with $\pi\in S_t$, the state $\rho_{\mathbf M}(o^t)$ is invariant under $S_t$, which acts on $\bigotimes_j(\C^d)^{\otimes t_j}$ factor by factor. Its isotypic decomposition is therefore the tensor product of those of the $S_{t_j}$, and the map $\mathcal E$ of Lemma~\ref{lem:inv-compression} for $S_t$ retains a register of dimension $\prod_jK_d(t_j)$.

\emph{Decoding.} Given $t$, apply the corresponding map $\mathcal D$, prepare $\ket{o^t}$ on the outcome registers, and apply $V^A_\pi\otimes V^C_\pi$ for a uniformly random $\pi\in S_N$. By Lemma~\ref{lem:inv-compression} the first step returns $\rho_{\mathbf M}(o^t)$ exactly. By~\eqref{eq:direct-covariance} all words of a given type are equally likely and $\rho_{\mathbf M}(\pi\cdot o^t)=V^A_\pi\rho_{\mathbf M}(o^t)V^{A\dagger}_\pi$; since type $t$ is recorded with probability $m_t\Tr\rho_{\mathbf M}(o^t)$, $m_t=N!/\prod_jt_j!$, and each of its $m_t$ words is produced by $N!/m_t$ permutations, the random permutation reproduces the $t$-block of~\eqref{eq:direct-cq}; hence the decoder outputs $\sigma_{\mathbf M}$. Retrieval then proceeds as in Lemma~\ref{lem:psar-povm-channel}, so exactness and $\lambda(N,d)$ are unchanged.

\emph{Memory.} $|T_{n,N}|=\binom{N+n-1}{n-1}=\bigo{N^{n-1}}$ gives $C$, the largest block gives $Q$, and $K_d(t_j)=\bigo{N^{(d-1)(d+2)/2}}$ by~\eqref{eq:KD} gives the total.
\end{proof}

\begin{remark}\label{rem:direct-cq}
Steps~(2) and~(3) of the encoding use different properties of the program. Step~(2) uses~\eqref{eq:direct-covariance}: the port permutation is undone on the reference systems \emph{together with} the outcomes. Storing only the type of the outcome string without this re-alignment discards the pairing between reference systems and outcomes; with the pPBT decoder and any classical post-processing that returns an outcome on every success, retrieval then fails to be exact (Proposition~\ref{prop:type-only}). Additional rejection can restore exactness in special cases at a lower success probability: in the example of Proposition~\ref{prop:type-only}, rejecting the mixed type $(1,1)$ retrieves the computational-basis measurement exactly with success $\tfrac15$. Step~(3) uses that the outcomes are classical, which replaces the symmetry $\UU(dn)$ of the generic compression by $\UU(d)^{\times n}$ acting block by block and lowers the exponent from $(dn-1)(dn+2)/2$ to $(n-1)+n(d-1)(d+2)/2$. For the qubit SIC POVM at $N=50$ the generic compression needs $\lceil\log_2K_8(50)\rceil=61$ qubits, against $(Q,C)=(23,15)$ for~\eqref{eq:direct-cq-memory} (Table~\ref{tab:numerical}). We make no optimality claim for~\eqref{eq:direct-cq-memory}; whether the memory of the direct route can be reduced further, and whether its success probability can be improved under the direct access model, remain open.
\end{remark}

\begin{remark}[Logical dimension and allocations]\label{rem:logical-dim}
Write $D_t=\prod_jK_d(t_j)$ and $T=|T_{n,N}|$. Packing the type blocks into orthogonal quantum sectors, as in~\eqref{eq:direct-rank1-quantum} below, gives a purely quantum program of logical dimension $\sum_tD_t=[z^N](1-z)^{-nd}(1-z^2)^{-nd(d-1)/2}$, by Littlewood's identity $\sum_{m\ge0}K_d(m)z^m=(1-z)^{-d}(1-z^2)^{-d(d-1)/2}$~\cite[Ch.~I.5]{macdonald1995symmetric}. Keeping the type classical instead uses a fixed register of dimension $\max_tD_t$ and an alphabet of size $T$, of unrounded capacity $T\max_tD_t\ge\sum_tD_t$, which~\eqref{eq:direct-cq-memory} rounds separately; the two allocations share the polynomial exponent but not the finite-size total. Since the compressed block of type $t$ is also block diagonal in the Young diagrams $\lambda_j\vdash t_j$ (Lemma~\ref{lem:inv-compression}), these labels may be kept classical as well, retaining only $\bigotimes_j\mathcal K_{\lambda_j}$ quantumly~\cite{yang2016efficient}: then $Q=\lceil\log_2\max_{t,\lambda}\prod_j\dim\mathcal K_{\lambda_j}\rceil$ and $C=\lceil\log_2\sum_t\prod_jp_d(t_j)\rceil$, with $p_d(m)$ the number of partitions of $m$ into at most $d$ parts. For the qubit SIC POVM at $N=50$ the three allocations give $(Q,C)=(36,0)$, $(23,15)$ and $(16,25)$.
\end{remark}

\begin{propo}[Compressed direct program for rank-one POVMs]\label{prop:direct-rank1}
If every effect of $\mathbf M$ has rank at most one, the quantum register of Proposition~\ref{prop:direct-cq} can be replaced, with the same classical type record, the same decoder and the same success probability $\lambda(N,d)$, by a register of dimension $\prod_{j=1}^n\binom{t_j+d-1}{d-1}$, so that
\begin{equation}\label{eq:direct-rank1-memory}
Q_{\mathrm{rank}\text{-}1}=\Bigl\lceil\log_2\max_{t\in T_{n,N}}\prod_{j=1}^n\binom{t_j+d-1}{d-1}\Bigr\rceil,
\end{equation}
and the logical direct-sum dimension $\sum_{t\in T_{n,N}}\prod_{j}\binom{t_j+d-1}{d-1}=\binom{N+nd-1}{N}$ is polynomial of degree $nd-1$ in $N$. Alternatively the type may be stored in the quantum register: embedding the blocks into orthogonal sectors of $\bigoplus_{t\in T_{n,N}}\mathcal S_t$ gives a purely quantum program with the same decoder and success probability and
\begin{equation}\label{eq:direct-rank1-quantum}
Q^{\mathrm q}_{\mathrm{rank}\text{-}1}=\Bigl\lceil\log_2\binom{N+nd-1}{N}\Bigr\rceil,\qquad C=0 .
\end{equation}
\end{propo}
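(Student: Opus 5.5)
The plan is to show that, for rank-one effects, the block $\rho_{\mathbf M}(o^t)$ of the type-$t$ word is supported on a symmetric subspace. Concretely, the claim is that it lives on $\mathcal S_t:=\bigotimes_j\mathrm{Sym}^{t_j}(\C^d)$, whose dimension is $\prod_j\binom{t_j+d-1}{d-1}$. With that in hand, the compression in step~(3) of Proposition~\ref{prop:direct-cq} can be replaced by an isometric restriction to $\mathcal S_t$.

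\textbf{Support of $\rho_{\mathbf M}(o)$.} I start from $\ket{\zeta}=(O_A\otimes\id)\ket{\psi^+}^{\otimes N}$. Applying $\PP_{\mathbf M}$ to port $k$ and reading outcome $o_k$ leaves
\[
\rho_{\mathbf M}(o)\propto O_A\Bigl(\bigotimes_k M_{o_k}^{T}\Bigr)O_A^\dagger ,
\]
by the transpose trick on $\ket{\psi^+}$. If $M_j=\ket{m_j}\!\bra{m_j}$, then $\bigotimes_kM_{o_k}^T$ is the rank-one projector onto $\bigotimes_k\ket{\bar m_{o_k}}$. For the sorted word $o^t$ this vector is $\bigotimes_j\ket{\bar m_j}^{\otimes t_j}$, which lies in $\mathcal S_t$. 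Effects equal to zero give zero-probability words and can be ignored.

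The next step is to check that $O_A$ preserves $\mathcal S_t$. The operator $O_A=\sum_\mu\sqrt{c_\mu}P_\mu$ lies in the centre of the $S_N$ group algebra, so it commutes with every $V^A_\pi$, in particular with $\pi\in S_t$. Since $\mathcal S_t$ is the joint $+1$ eigenspace of the $V^A_\pi$ with $\pi\in S_t$, it is invariant under $O_A$. Hence $\rho_{\mathbf M}(o^t)$ is a rank-one operator supported in $\mathcal S_t$.

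\textbf{Encoder and decoder.} The encoder keeps steps~(1)--(2) of Proposition~\ref{prop:direct-cq} and then applies the fixed isometry $W_t^\dagger$ from $\mathcal S_t$ to $\C^{\dim\mathcal S_t}$. This map is made CPTP on arbitrary inputs by sending the complement of $\mathcal S_t$ to a fixed state, which never occurs here. The decoder applies $W_t$, prepares $\ket{o^t}$, and applies the uniformly random permutation, exactly as before. The argument of Proposition~\ref{prop:direct-cq} then gives $\sigma_{\mathbf M}$ exactly, so the retrieval step, the exactness and the success probability $\lambda(N,d)$ are all unchanged. Taking the maximum over $t$ yields \eqref{eq:direct-rank1-memory}.

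\textbf{Dimension count and purely quantum version.} The identity
\[
\sum_{t}\prod_j\binom{t_j+d-1}{d-1}=\binom{N+nd-1}{N}
\]
follows by comparing coefficients of $z^N$ in $\prod_{j=1}^n(1-z)^{-d}=(1-z)^{-nd}$. Equivalently, $\bigoplus_t\mathcal S_t\cong\mathrm{Sym}^N(\C^{nd})$ via $\C^{nd}=\bigoplus_j\C^d$. For the purely quantum variant, the classical value $t$ and the block $\tau_t$ are encoded as $\bigoplus_t p_t\tau_t$ in the orthogonal sectors of $\bigoplus_t\mathcal S_t$. The decoder first measures the sector label, which is nondisturbing, and then proceeds as before. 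This gives \eqref{eq:direct-rank1-quantum}.

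The main obstacle is the invariance of $\mathcal S_t$ under $O_A$, and in particular making sure that the optimal resource really is built from central elements. This was already used in Lemma~\ref{lem:direct-symmetry}, so it is available. A secondary point is that the rank-at-most-one hypothesis is needed precisely so that $M_j^{T\otimes t_j}$ is supported on the symmetric subspace. This holds for rank one, since the tensor power of a single vector is symmetric, but it fails in general for higher rank.
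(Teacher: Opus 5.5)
Your proposal is correct and follows essentially the same route as the paper's proof in Appendix~\ref{app:rank1-proof}. The steps match one for one: the transpose-trick formula for $\rho_{\mathbf M}(o)$, the rank-one vector $\bigotimes_j\ket{\bar m_j}^{\otimes t_j}\in\mathcal S_t$, invariance of $\mathcal S_t$ under $O_A$ via $[O_A,V^A_\pi]=0$ from Lemma~\ref{lem:direct-symmetry}, an isometric restriction after the re-alignment of Proposition~\ref{prop:direct-cq}, orthogonal-sector packing for the purely quantum allocation, and the generating-function count. The only differences are cosmetic: your $W_t$ is the adjoint of the paper's, and your identification $\bigoplus_t\mathcal S_t\cong\mathrm{Sym}^N(\C^{nd})$ is a tidy structural restatement of the closed form.
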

The proof, which identifies the support of the conditional operators of each type as $\bigotimes_j\mathrm{Sym}^{t_j}(\C^d)$, is given in Appendix~\ref{app:rank1-proof}.

\begin{remark}
For the benchmark families of Table~\ref{tab:numerical}, in which all $n$ effects are nonzero and rank one, the purely quantum program~\eqref{eq:direct-rank1-quantum} is never larger or less successful than the minimal Naimark register of Theorem~\ref{thm:ppbt-exact-naimark}: then $D_{\min}=n\ge d$, so $nd\le n^2$, $\binom{N+nd-1}{N}\le\binom{N+n^2-1}{N}$ and $\lambda(N,d)\ge\lambda(N,n)$, with equality in both exactly when $n=d$. If only $k<n$ effects are nonzero, the public rank profile tells both encoders which outcomes are active; then $D_{\min}=k$, the direct encoder uses $k$ outcome types, and the same comparison holds with $n$ replaced by $k$. The refinement uses the promise that the effects are rank one, exactly as the minimal Naimark benchmarks do; no description of the effects themselves is needed. We make no optimality claim for either register.
\end{remark}

The re-alignment in step~(2) cannot be dispensed with if the decoder is to return an outcome on every pPBT success. The reason is a rigidity property of exact port-based teleportation.

\begin{lem}[Rigidity of successful pPBT]\label{lem:ppbt-rigidity}
For the optimal pPBT protocol of Lemma~\ref{lem:psar-povm-channel} and every port $a$,
\begin{equation}\label{eq:ppbt-rigidity}
    \mathcal T_d^{(\textsf{s},a)}(\rho)=\frac{\lambda(N,d)}{N}\,\rho_{B_a}\otimes\xi_a,
\end{equation}
where $\rho_{B_a}$ is $\rho$ placed on port $B_a$ and $\xi_a$ is a state of the remaining $N-1$ ports that does not depend on $\rho$. Each single-port marginal of $\xi_a$ is $\id_d/d$.
\end{lem}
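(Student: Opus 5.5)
The plan is to split \eqref{eq:ppbt-rigidity} into two ingredients and prove each separately. The first ingredient is per-port exactness of the optimal protocol: the reduced map satisfies $\Tr_{\bar a}\,\mathcal T_d^{(\textsf{s},a)}(\rho)=c_a\,\rho$, where $\bar a$ denotes all ports except $B_a$. The second is a purely structural factorisation lemma for completely positive maps whose reduced map is proportional to the identity.

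\emph{Step 1 (per-port exactness and the constant).} In the optimal pPBT scheme of~\cite{ishizaka2009quantum,studzinski2017port}, the success POVM element heralding port $a$ is constructed so that the input appears undistorted on $B_a$. Concretely, for the resource $\ket{\zeta}=(O_A\otimes\id)\ket{\psi^+}^{\otimes N}$ the success effects have the form $\Pi_a=(\id\otimes O_A^{-1}\text{-dressed } P^+_{a})$ restricted to the support of $X_A$. I will take the statement $\Tr_{\bar a}\mathcal T_d^{(\textsf{s},a)}=c_a\,\mathcal I_d$ as part of what ``the input appears at port $a$'' means in the port-resolved instrument, and cite it from the construction.

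Next I fix $c_a$. The resource and the measurement are covariant under joint port permutations, since $[O_A,V^A_\pi]=0$ as in Lemma~\ref{lem:direct-symmetry}. Hence $c_a$ does not depend on $a$. Summing over $a$ and using \eqref{eq:ppbt-instrument} then gives $Nc_a=\lambda(N,d)$.

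\emph{Step 2 (factorisation).} Let $\Phi=\mathcal T_d^{(\textsf{s},a)}$, written in Kraus form as $\Phi(\rho)=\sum_k K_k\rho K_k^\dagger$ with $K_k:\C^d\to\C^d_{B_a}\otimes\mathcal E$, where $\mathcal E$ is the space of the other $N-1$ ports. Fix an orthonormal basis $\{\ket e\}$ of $\mathcal E$. Then
\[
\Tr_{\mathcal E}\Phi(\rho)=\sum_{k,e}(\id\otimes\bra e)K_k\,\rho\,K_k^\dagger(\id\otimes\ket e)=c\,\rho .
\]
The operators $(\id\otimes\bra e)K_k$ are Kraus operators of a map proportional to the identity channel. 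By uniqueness of Kraus representations up to isometry, each of them equals $\alpha_{ke}\id_d$.

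It follows that $K_k=\id_d\otimes\ket{v_k}$ with $\ket{v_k}=\sum_e\alpha_{ke}\ket e$. Therefore $\Phi(\rho)=\rho\otimes\sum_k\projj{v_k}$. Normalising $\sum_k\projj{v_k}$ by $c=\lambda/N$ defines $\xi_a$, which is independent of $\rho$.

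\emph{Step 3 (marginals).} Use $U\otimes\bar U$ covariance. The operator $O_A$ is a function of the Young projectors and therefore commutes with $U^{\otimes N}$. Also $\ket{\psi^+}^{\otimes N}$ is invariant under $\bar U^{\otimes N}\otimes U^{\otimes N}$. The pPBT measurement on the input and $A^N$ is covariant under $\bar U^{\otimes (N+1)}$. Combining these gives
\[
\mathcal T_d^{(\textsf{s},a)}(U\rho U^\dagger)=U^{\otimes N}\,\mathcal T_d^{(\textsf{s},a)}(\rho)\,U^{\dagger\otimes N}.
\]
Inserting \eqref{eq:ppbt-rigidity} shows that $\xi_a$ commutes with $U^{\otimes(N-1)}$ for every $U$. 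Hence each single-port marginal of $\xi_a$ commutes with all $U\in\UU(d)$, so by Schur's lemma it equals $\id_d/d$.

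I expect the main obstacle to be Step~1: establishing exact per-port transmission, together with the unitary covariance of the optimal measurement, rigorously from the cited construction rather than by assertion. Step~2 is routine.
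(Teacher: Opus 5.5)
This is essentially the paper's proof: the constant is fixed by port-permutation covariance and summing over $a$, your Kraus-operator argument in Step~2 is the explicit form of the paper's remark that every Stinespring dilation of the identity channel is trivial, and Step~3 is the same unitary-covariance argument. Two small corrections: first, in Step~3 the measurement is covariant under $U$ on the input and $\bar U$ on each reference system (as your ``$U\otimes\bar U$'' heading indicates), not under $\bar U^{\otimes(N+1)}$, which combined with the resource symmetry would put $\mathcal T_d^{(\textsf{s},a)}(\bar U\rho U^{T})$ on the left; second, Step~1 is not a real obstacle, because the completely positive maps $\Tr_{\bar a}\mathcal T_d^{(\textsf{s},a)}$ sum to $\lambda(N,d)\,\mathcal I_d$, whose Choi matrix has rank one, so each of them is automatically a multiple of $\mathcal I_d$.
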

\begin{proof}
Exactness means that tracing out the remaining ports leaves $c\,\rho$ on $B_a$, for a scalar $c$ that may a priori depend on $\rho$; since the map is linear in $\rho$, $c$ is constant, and permutation covariance of the protocol together with~\eqref{eq:ppbt-instrument} gives $c=\lambda(N,d)/N$. Hence $\mathcal T_d^{(\textsf{s},a)}/c$ is a channel whose marginal on $B_a$ is the identity channel. Every Stinespring dilation of the identity channel is the trivial one followed by the preparation of a fixed state of the environment~\cite{watrous2018theory}, which gives~\eqref{eq:ppbt-rigidity}. Finally, $O_A$ is a combination of port permutations, and the optimal POVM of~\cite[Thm.~8]{studzinski2017port} is built from maximally entangled projectors and port permutations, so the protocol is unitarily covariant: $\mathcal T_d^{(\textsf{s},a)}(U\rho U^\dagger)=U^{\otimes N}\,\mathcal T_d^{(\textsf{s},a)}(\rho)\,U^{\otimes N\dagger}$ for every $U\in\UU(d)$. Comparing with~\eqref{eq:ppbt-rigidity}, $\xi_a$ commutes with every $U^{\otimes(N-1)}$, so each of its single-port marginals commutes with every $U$ and equals $\id_d/d$.
\end{proof}

\begin{propo}[Type-only storage of the direct program]\label{prop:type-only}
Suppose the direct program is stored as its reference systems $A^N$ together with only the type $t$ of the outcome word, and that at retrieval the outcome of the selected port is read from a uniformly random word of type $t$. Conditioned on success, the retrieved measurement is then the POVM with effects
\begin{equation}\label{eq:type-only}
    \frac1N\,M_j+\Bigl(1-\frac1N\Bigr)\frac{\Tr M_j}{d}\,\id_d,\qquad j=1,\dots,n,
\end{equation}
that is, $\mathbf M$ preceded by a depolarising channel of visibility $1/N$. More generally, no classical post-processing whose output probabilities depend only on the selected port and on $t$, which returns an outcome on every pPBT success without introducing an additional failure event, restores exactness: for $d=n=N=2$ and the computational-basis measurement, every such post-processing has worst-case total-variation error at least $\tfrac14$.
\end{propo}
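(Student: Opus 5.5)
The plan is to compute the conditional retrieved statistics port by port, using Lemma~\ref{lem:ppbt-rigidity}, and then to observe that type-only readout is linear in the indicator vectors of the outcome word.

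\textbf{Step 1: measuring first does not change the success branch.} Storage applies $\PP_{\mathbf M}^{\otimes N}$ to $B^N$, and the pPBT measurement acts on $\rho\otimes A^N$. These commute, exactly as in the proof of Lemma~\ref{lem:psar-povm-channel}. So the joint (unnormalised) law of the success flag, the port $a$ and the word $o$ equals $\PP_{\mathbf M}^{\otimes N}$ applied to $\mathcal T_d^{(\textsf{s},a)}(\rho)$. By~\eqref{eq:ppbt-rigidity} this is
\[
\tfrac{\lambda(N,d)}{N}\,\PP_{\mathbf M}(\rho)_{a}\otimes\PP_{\mathbf M}^{\otimes(N-1)}(\xi_a).
\]
Conditioned on success at $a$, the letter $o_a$ is therefore distributed as $\Tr(M_j\rho)$. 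Each other letter $o_k$, $k\ne a$, has marginal $\Tr(M_j\,\id_d/d)=\Tr M_j/d$, because every single-port marginal of $\xi_a$ is $\id_d/d$. We never need the correlations among the $o_k$, $k\neq a$.

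\textbf{Step 2: the depolarised POVM~\eqref{eq:type-only}.} A uniformly random word of type $t$ has letter $j$ at position $a$ with probability $t_j/N$, whatever $a$ is. Hence the retrieved probability of outcome $j$ is
\[
\mathbb E[t_j]/N=\tfrac1N\Pr(o_a=j)+\tfrac1N\sum_{k\ne a}\Pr(o_k=j)=\tfrac1N\Tr(M_j\rho)+\bigl(1-\tfrac1N\bigr)\tfrac{\Tr M_j}{d}.
\]
This holds for every $a$, so it survives averaging over ports, and it is exactly~\eqref{eq:type-only}. Writing the effects as $\tfrac1N M_j+(1-\tfrac1N)\tfrac{\Tr M_j}{d}\id_d$ identifies them with $\mathbf M$ composed with the depolarising channel of visibility $1/N$.

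\textbf{Step 3: the $\tfrac14$ lower bound for $d=n=N=2$.} Take $M_0=\projj0$, $M_1=\projj1$ and write $p=\bra0\rho\ket0$. By Step~1, conditioned on success at $a$, the letter $o_a$ is $0$ with probability $p$. The other letter is uniform and independent of $o_a$, because the post-measurement state is a product. So the type is $(2,0)$, $(1,1)$, $(0,2)$ with probabilities $p/2$, $1/2$, $(1-p)/2$. The port distribution is uniform, since the success probability is state-independent, and each port has these same type probabilities.

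Any post-processing allowed by the proposition outputs $0$ with probability $q(a,t)$. Averaging over $a$ gives an output probability
\[
f(p)=\bar q_{20}\,\tfrac p2+\bar q_{11}\,\tfrac12+\bar q_{02}\,\tfrac{1-p}2,
\]
which is affine in $p$ with slope $(\bar q_{20}-\bar q_{02})/2\in[-\tfrac12,\tfrac12]$. The target is $f(p)=p$. The total-variation error is $|g(p)|$ with $g(p)=f(p)-p$, and $g$ is affine with slope at most $-\tfrac12$. Hence $g(0)-g(1)\ge\tfrac12$, so $\max(|g(0)|,|g(1)|)\ge\tfrac14$. The error is therefore attained at a computational-basis input.

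\textbf{Expected obstacle.} The main obstacle is justifying Step~1 rigorously: the conditional law of the unselected ports must be the product form from Lemma~\ref{lem:ppbt-rigidity}, independent of $\rho$. This is what makes the readout linear and the other letters uninformative. Everything after that is elementary.
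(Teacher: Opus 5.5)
Your proposal is correct and follows the paper's proof: the commutation argument together with Lemma~\ref{lem:ppbt-rigidity} gives the product law conditioned on success at port $a$, averaging $t_j/N$ then yields~\eqref{eq:type-only}, and the $\tfrac14$ bound comes from comparing the inputs $\projj0$ and $\projj1$. Your affine-in-$p$ bookkeeping for the last step is equivalent to the paper's shortcut, which tracks only the port-averaged probability $x$ assigned on the mixed type $(1,1)$.
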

\begin{proof}
By Lemma~\ref{lem:ppbt-rigidity} and the commutation argument of Lemma~\ref{lem:psar-povm-channel}, conditioned on success at port $a$ the outcome $o_a$ is distributed according to $\PP_{\mathbf M}(\rho)$ and is independent of the outcomes of the remaining ports, whose distribution does not depend on $\rho$ and has single-port marginals $\Tr(M_j)/d$. A uniformly random word of type $t$ shows outcome $j$ at any fixed position with probability $t_j/N$, and the expectation of $t_j$ conditioned on success is $\Tr(M_j\rho)+(N-1)\Tr(M_j)/d$; dividing by $N$ gives~\eqref{eq:type-only}. For the second claim, let $d=n=N=2$ and $M_1=\projj{0}$, $M_2=\projj{1}$. The remaining port then shows either outcome with probability $\tfrac12$, so for $\rho=\projj{0}$ the type is $(2,0)$ or $(1,1)$ with probability $\tfrac12$ each, and for $\rho=\projj{1}$ it is $(1,1)$ or $(0,2)$. Let $x$ be the probability, averaged over the selected port, that the post-processing outputs $2$ on the type $(1,1)$. The error probability is then at least $x/2$ for $\rho=\projj{0}$ and at least $(1-x)/2$ for $\rho=\projj{1}$, so one of them is at least $\tfrac14$.
\end{proof}

\section{Lower bounds and memory scaling}\label{sec:lower}

The construction of Section~\ref{direct} raises the converse question: how much program memory must \emph{any} universal retrieval processor keep, whatever the access model used during storage? The bounds of this section are stated for arbitrary processors in the sense of Definition~\ref{def:exact-retrieval} and apply to every storage access model, in particular to the structured constructions of Sections~\ref{naimark} and~\ref{kot-osz} below, whenever the processor covers the full target family named in the statement and all target-dependent retrieval information is counted in the program. An elementary program-dimension bound answers the question for the number of outcomes: together with the direct construction of Proposition~\ref{prop:direct-ppbt}, it fixes the dependence of the total memory on $n$. Three further bounds then bring in the input dimension (Propositions~\ref{prop:param-lower} and~\ref{prop:qdim-lower}) and the success probability (Proposition~\ref{prop:success-lower}).

\begin{propo}[Outcome-label lower bound]
\label{prop:label-memory-lower}
Let a fixed retrieval processor exactly implement every $n$-outcome POVM on $\mathbb C^d$, with positive heralded success probability for every input and every target POVM, and suppose that all measurement-dependent information available at retrieval is contained in a program register of dimension $K$. Then $K\ge n$, so a program stored in $Q$ qubits and $C$ classical bits satisfies $Q+C\ge\log_2 n$.
\end{propo}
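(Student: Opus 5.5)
The plan is to exhibit $n$ target POVMs whose programs must be perfectly distinguishable, at least after some processing, and conclude $K\ge n$. A natural choice is the trivial POVMs $\mathbf M^{(k)}$, $k=1,\dots,n$, with $M^{(k)}_k=\id_d$ and all other effects zero. Exact heralded retrieval~\eqref{eq:exact-retrieval} then forces $\mathcal R_{\mathsf s}(\sigma_{\mathbf M^{(k)}}\otimes\rho)=s_k(\rho)\projj{k}$ with $s_k(\rho)>0$. Fix an input state $\rho_0$, for instance $\id_d/d$, and define the completely positive map $\Lambda(\sigma):=\mathcal R_{\mathsf s}(\sigma\otimes\rho_0)$ from the program register of dimension $K$ to $\BB(\C^n)$. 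For each $k$ we have $\Lambda(\sigma_k)=s_k\projj{k}$, where $\sigma_k:=\sigma_{\mathbf M^{(k)}}$ and $s_k>0$.

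Next I will show that $\sigma_1,\dots,\sigma_n$ are linearly independent operators. Suppose $\sum_k c_k\sigma_k=0$. Applying $\Lambda$ gives $\sum_k c_ks_k\projj{k}=0$, and since the $\projj{k}$ are linearly independent, every $c_k=0$. This already yields $n\le K^2$, which is too weak, so the argument needs a sharper step.

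For the sharper step I plan to pass to supports. For each $k$, set $E_k:=\Lambda^\dagger(\projj{k})\ge0$. Then $\Tr(E_k\sigma_j)=\langle k|\Lambda(\sigma_j)|k\rangle=s_j\delta_{jk}$. So $E_k$ has nonzero overlap with $\sigma_k$ and zero overlap with every $\sigma_j$, $j\ne k$. Because $E_k\ge0$ and $\sigma_j\ge0$, the condition $\Tr(E_k\sigma_j)=0$ means $E_k$ annihilates $\operatorname{supp}\sigma_j$. Pick a unit vector $v_k\in\operatorname{supp}\sigma_k$ with $\langle v_k|E_k|v_k\rangle>0$; such a vector exists, since otherwise $E_k$ would vanish on $\operatorname{supp}\sigma_k$ and $\Tr(E_k\sigma_k)=0$. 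Now suppose $\sum_j c_jv_j=0$. Applying $E_k$ kills every term with $j\ne k$, because $v_j\in\operatorname{supp}\sigma_j$. Hence $c_kE_kv_k=0$. Since $E_kv_k\ne0$, this forces $c_k=0$. The vectors $v_1,\dots,v_n$ are therefore linearly independent in $\C^K$, which gives $K\ge n$.

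The memory statement then follows. A program with $Q$ qubits and $C$ bits lives in a register of dimension at most $2^{Q+C}$, since a classical alphabet of $2^C$ symbols embeds as diagonal states. So $2^{Q+C}\ge n$, i.e.\ $Q+C\ge\log_2 n$. The main point needing care is that all retrieval-time target dependence sits in $\sigma$. This is exactly the hypothesis of the proposition, and it justifies treating $\mathcal R_{\mathsf s}$ as fixed. A smaller check is that randomized storage is already averaged into $\sigma_{\mathbf M}$ by Definition~\ref{def:exact-retrieval}, so the argument applies to those averaged states.
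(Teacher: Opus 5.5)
Your proposal is correct and follows the paper's proof essentially step by step. It uses the same deterministic-label test POVMs $M^{(k)}_i=\delta_{ik}\id_d$ and the same induced effects on the program register: your $E_k=\Lambda^\dagger(\projj{k})$ at a fixed input is the paper's $F_k$. It then runs the same support argument to produce $n$ linearly independent vectors $v_k$, and finishes with the embedding of the hybrid register into dimension $2^{Q+C}$. The preliminary linear-independence step for the $\sigma_k$, which only gives $n\le K^2$, is unnecessary and can be dropped.
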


\begin{proof}
Consider the $n$ POVMs $\mathbf M^{(j)}$ with $M_i^{(j)}=\delta_{ij}\,\id_d$. Each one is a valid POVM, in the sense that its non-selected effects vanish, which is admissible since effects need only be positive semidefinite with $\sum_i M^{(j)}_i=\id_d$. Assume that each $\mathbf M^{(j)}$ returns outcome $j$ deterministically. Let $\sigma_j$ be the corresponding program states and fix any input state. The processor induces a POVM $\{F_1,\dots,F_n,F_{\rm fail}\}$ on the program register, with $F_j$ the effect for a heralded output of label $j$. 

Exactness gives $\operatorname{Tr}(F_k\sigma_j)=0$ for $k\ne j$, and positive success gives $\operatorname{Tr}(F_j\sigma_j)>0$. Since $F_k$ and $\sigma_j$ are positive semidefinite, the first condition means $F_k\sigma_j=0$: the support of $\sigma_j$ lies in the kernel of every $F_k$ with $k\ne j$. By the second condition we can choose $v_j$ in the support of $\sigma_j$ with $F_jv_j\ne0$. If $\sum_ja_jv_j=0$, applying $F_k$ gives $a_kF_kv_k=0$, hence $a_k=0$; the $n$ vectors $v_j$ are therefore linearly independent, and $K\ge n$. A hybrid program of $Q$ qubits and $C$ classical bits embeds in a register of dimension at most $2^{Q+C}$ (the classical part as a diagonal register), so $Q+C\ge\log_2 K\ge\log_2 n$.
\end{proof}

\begin{remark}\label{rem:label-lower-sideinfo}
The hypothesis that all measurement-dependent information is in the program excludes side information whose form depends on the target. A minimal Naimark readout is such information when the rank profile varies across the family: for the deterministic-label POVMs $\mathbf M^{(j)}$ of the proof, $D_{\min}=d$ and the minimal readout is $\Pi^{(0)}_j=\id_d$ with all other projectors zero, so supplying it free reveals $j$ and would bypass the bound. The Naimark benchmarks of Section~\ref{comparison} use a fixed rank-one profile, for which the hypothesis holds.
\end{remark}

Under direct access the bound is attained, up to constants, for every target success below one.

\begin{cor}[Outcome-count scaling under direct access]\label{cor:direct-outcome-scaling}
Fix $d\ge2$ and a target success $0<p_0<1$, and let $N=\bigl\lceil p_0(d^2-1)/(1-p_0)\bigr\rceil$, which is independent of $n$. Under direct access, every $n$-outcome POVM admits exact heralded retrieval from $N$ queries with success probability at least $p_0$ and
\begin{equation}
 Q=\lceil N\log_2 d\rceil=O_{d,p_0}(1),\qquad C\le\lceil N\log_2 n\rceil=O_{d,p_0}(\log n),
\end{equation}
and the minimum total persistent program memory $Q+C$ is $\Theta_{d,p_0}(\log n)$ as $n\to\infty$.
\end{cor}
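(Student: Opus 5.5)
The proof splits into an achievability part and a converse; the scaling statement follows by combining them.

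\emph{Achievability.} I would use Proposition~\ref{prop:direct-ppbt} with the stated $N$. Success is the state-independent value $\lambda(N,d)=N/(N-1+d^2)$. This is increasing in $N$. The inequality $\lambda(N,d)\ge p_0$ rearranges to $N(1-p_0)\ge p_0(d^2-1)$, so it holds for $N\ge p_0(d^2-1)/(1-p_0)$, and the ceiling choice satisfies it. This $N$ depends only on $d$ and $p_0$, not on $n$. The uncompressed program then consists of $\lceil N\log_2 d\rceil$ qubits on the reference systems and $N$ classical outcome labels. The labels fit in $\lceil N\log_2 n\rceil$ bits, as in item~(iii). Since $N=O_{d,p_0}(1)$, we get $Q=O_{d,p_0}(1)$ and $C=O_{d,p_0}(\log n)$. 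Hence the minimal total memory is $O_{d,p_0}(\log n)$.

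Proposition~\ref{prop:direct-cq} could replace the raw labels by the type, with $C=\lceil\log_2\binom{N+n-1}{n-1}\rceil\le (N)\log_2(n+N)+O(1)$. That is also $O(\log n)$, but the raw storage already suffices. Exactness on success comes directly from Lemma~\ref{lem:psar-povm-channel}.

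\emph{Converse.} Proposition~\ref{prop:label-memory-lower} applies to any processor that retrieves every $n$-outcome POVM exactly with positive success. The direct construction above is such a processor, and so is any competitor achieving success at least $p_0>0$. All target-dependent information is in the program here, because the direct decoder is target-independent. Therefore $Q+C\ge\log_2 n$ for every admissible scheme, which gives $\Omega(\log n)$ with constant $1$, uniformly in $d$ and $p_0$.

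\emph{Combining.} For $n\ge2$ the upper bound reads $Q+C\le \lceil N\log_2 d\rceil+N\log_2 n+1\le c_{d,p_0}\log_2 n$. Together with the lower bound this gives $\Theta_{d,p_0}(\log n)$.

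The main subtlety is verifying that the converse's hypotheses genuinely cover every competing scheme counted in the ``minimum''. In particular, the minimum must range over processors with no target-dependent side information, as noted in Remark~\ref{rem:label-lower-sideinfo}. I would state this explicitly as the class over which the minimum is taken. A second point is that the lower bound needs only positive success, so imposing success $\ge p_0$ can only strengthen it. Everything else is routine arithmetic on $\lambda(N,d)$.
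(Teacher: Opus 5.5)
Your proposal is correct and follows the paper's proof. The choice of $N$ gives $\lambda(N,d)\ge p_0$; Proposition~\ref{prop:direct-ppbt} supplies the uncompressed $(Q,C)$, and the compressed program of Proposition~\ref{prop:direct-cq} needs no more. Proposition~\ref{prop:label-memory-lower} then gives the matching bound $Q+C\ge\log_2 n$, and your remark that the minimum ranges over processors without target-dependent side information (Remark~\ref{rem:label-lower-sideinfo}) is a useful explicit clarification.
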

\begin{proof}
The choice of $N$ gives $\lambda(N,d)=N/(N-1+d^2)\ge p_0$, and Proposition~\ref{prop:direct-ppbt} yields the stated $Q$ and $C$; the compressed program of Proposition~\ref{prop:direct-cq} needs at most as much. Proposition~\ref{prop:label-memory-lower} gives $Q+C\ge\log_2 n$.
\end{proof}

Under branch-oracle access (Section~\ref{kot-osz} below) the same scaling holds for target success below $\tfrac18$, with a classical record that does not grow with the number of queries.

\begin{cor}[Tight outcome-count scaling at fixed input dimension]
\label{cor:outcome-scaling}
Fix $d\ge2$ and a target success $0<p_0<\tfrac18$. Under branch-oracle access during storage (Definition~\ref{def:branch-oracle} below), with no measurement-dependent information supplied free at retrieval, every $n$-outcome POVM admits exact heralded retrieval with
\begin{equation}
 Q=O_{d,p_0}(1),\qquad C=O_d(\log(n+1)),
\end{equation}
and the minimum total persistent program memory $Q+C$ is $\Theta_{d,p_0}(\log n)$ as $n\to\infty$.
\end{cor}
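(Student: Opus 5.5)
The lower bound needs no new work. Proposition~\ref{prop:label-memory-lower} holds for any access model used during storage, provided all measurement-dependent retrieval information sits in the program, and the statement assumes exactly this. It gives $Q+C\ge\log_2 n$. So the proof reduces to a matching upper bound under branch-oracle access: a protocol with success at least $p_0$ for every $n$-outcome POVM, whose quantum register is bounded independently of $n$ and whose classical record has $O_d(\log(n+1))$ bits. Then $Q+C=O_{d,p_0}(1)+O_d(\log(n+1))=O_{d,p_0}(\log n)$, which together with the lower bound gives $\Theta_{d,p_0}(\log n)$.

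For the upper bound I would use the branch construction of Proposition~\ref{prop:finite-readout}, or its measured-branch refinement Proposition~\ref{prop:measured-branch}, stated later in the paper. These give exact heralded retrieval with a quantum register of dimension $\binom{N+4d^2-1}{N}$ (or smaller), which is independent of $n$. The classical record is a table of $2d$ outcome labels, independent of both $N$ and the number of branches, and the success probability is $q\lambda(N,\cdot)$, with $q=\tfrac18$ available for every POVM by Kotowski--Oszmaniec~\cite{kotowski2026pretty}. Since $p_0<\tfrac18$, I would choose $N$ as the smallest integer with $\tfrac18\lambda(N,2d)\ge p_0$, which means $\lambda(N,2d)\ge 8p_0$. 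Because $8p_0<1$ and $\lambda\to1$ as $N\to\infty$, such an $N$ exists. It can be written explicitly as $N=\lceil 8p_0(4d^2-1)/(1-8p_0)\rceil$, and it depends only on $d$ and $p_0$. Substituting gives $Q=\lceil\log_2\binom{N+4d^2-1}{N}\rceil=O_{d,p_0}(1)$. For the record, each of the $2d$ labels lies in $\{1,\dots,n\}$, possibly with a null symbol, so $C\le\lceil 2d\log_2(n+1)\rceil=O_d(\log(n+1))$.

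The main obstacle is checking that the branch protocol really stores every target-dependent item in the program, as the hypothesis of Proposition~\ref{prop:label-memory-lower} and the statement's ``no measurement-dependent information supplied free'' both require. The receiver must not need a relabelling library or knowledge of the branch probabilities, and Proposition~\ref{prop:finite-readout} asserts exactly this. I would also confirm that the success probability does not depend on the input and is bounded below by $\tfrac18\lambda(N,2d)$ uniformly over all $n$-outcome POVMs. The decomposition's non-failure probability $q$ could a priori vary with the target, so I would invoke the uniform bound $q\ge\tfrac18$. If the construction only gives $q\ge\tfrac18$ rather than $q=\tfrac18$, the success is still at least $p_0$. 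This matches the definition of attaining the target, so the argument goes through.
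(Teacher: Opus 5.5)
Your proposal is correct and matches the paper's proof. The paper likewise takes the universal $q=\tfrac18$ and chooses $N=\lceil p_0(4d^2-1)/(q-p_0)\rceil$, which is identical to your $\lceil 8p_0(4d^2-1)/(1-8p_0)\rceil$. It then invokes Proposition~\ref{prop:finite-readout} for $Q$ and $C$ and Proposition~\ref{prop:label-memory-lower} for $Q+C\ge\log_2 n$. One caveat: if you use the measured-branch alternative, take the quantum-type allocation of Corollary~\ref{cor:measured-branch-quantum}, because the classical-type version of Proposition~\ref{prop:measured-branch} adds an $N$-dependent, hence $p_0$-dependent, term to $C$.
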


\begin{proof}
Take the universal value $q=\tfrac18$ of Definition~\ref{def:branch-oracle} and $N=\bigl\lceil p_0(4d^2-1)/(q-p_0)\bigr\rceil$, which is independent of $n$ and satisfies $qN/(N+4d^2-1)\ge p_0$. Proposition~\ref{prop:finite-readout} then yields the stated $Q$ and $C$; Proposition~\ref{prop:label-memory-lower} gives $Q+C\ge\log_2 n$.
\end{proof}

\begin{remark}\label{rem:direct-outcome-scaling}
The $\Theta(\log n)$ scaling is thus not specific to any representation: under direct access it holds for every target success below one, under branch access (Corollary~\ref{cor:outcome-scaling}) for $p_0<\tfrac18$. For $d=2$ a single query succeeds with probability $\lambda(1,2)=\tfrac14$, so for every $p_0\le\tfrac14$, in particular the whole range of Corollary~\ref{cor:outcome-scaling}, the direct route needs one qubit and $\lceil\log_2 n\rceil$ bits, less than the branch construction in both registers. For general $d$, a target $p_0<\tfrac18$ requires at most $\lceil(d^2-1)/7\rceil$ direct queries. The branch construction differs in that its classical record, $2d\lceil\log_2(n+1)\rceil$ bits, does not grow with the number of queries, whereas the direct record does: linearly in $N$, i.e.\ as $1/(1-p_0)$ when $p_0\to1$, if stored uncompressed, and as $(n-1)\log_2N+O_n(1)$ if compressed as in the Proposition~\ref{prop:direct-cq}. 
\end{remark}

This separates the cost of the outcome alphabet from that of the coherent program: outcome-independent \emph{quantum} memory is compatible with exact heralded retrieval, whereas outcome-independent \emph{total} program memory is not, at fixed positive success.
The bound is driven by rank-deficient test POVMs with vanishing effects, which Definition~\ref{def:povm} admits. It concerns the whole class of $n$-outcome POVMs and a universal processor, so it is not a lower bound for a single fixed measurement such as the qubit SIC, nor for a processor that receives measurement-dependent side information for free. It fixes the dependence on $n$ only at fixed $d$ and $p_0$ (by Propositions~\ref{prop:param-lower} and~\ref{prop:success-lower} below), although it does not establish optimality of the PSAR quantum register, of the query complexity, or of the constant multiplying $\log n$ in the classical upper bound.

The label bound counts distinguishable targets, as do the dimension bounds known for programmable processors: distinct unitaries require mutually orthogonal programs in a deterministic gate array~\cite{nielsen1997programmable}, and the program dimension of approximate programmable measurements grows with the size of a well-separated family of target measurements~\cite{perezgarcia2006optimality}. It treats a qubit like a classical bit and does not involve the input dimension. Counting parameters instead of labels gives a bound that sees both.

Both parameter-counting arguments below use that exactness already forces an input-independent success probability, the property that postselected simulation of measurements~\cite{oszmaniec2018simulating} builds in by construction. The nonscalar hypothesis of the lemma cannot be dropped: for $M_i=a_i\id_d$ the effects $A_i=a_iS$ with any nonscalar $0<S\le\id_d$ reproduce the conditional statistics with input-dependent success $\Tr[S\rho]$.

\begin{lem}[Input-independent success]\label{lem:const-success}
Let a fixed processor retrieve $\mathbf M$ exactly with heralded success in the sense of~\eqref{eq:exact-retrieval}, let $F_i$ be its effect on the program register and the input for a heralded output $i\le n$, and let $A_i:=\Tr_{\mathrm{prog}}\bigl[F_i(\sigma_{\mathbf M}\otimes\id_d)\bigr]$, so that $\Pr[\text{output }i\mid\rho]=\Tr[A_i\rho]$, with $S:=\sum_{i\le n}A_i$. If some effect $M_i$ is not proportional to $\id_d$, then $S=s\,\id_d$ with $s>0$, i.e.\ the success probability $s_{\mathbf M}(\rho)=\Tr[S\rho]=s$ does not depend on the input, and $A_i=s\,M_i$ for every $i$.
\end{lem}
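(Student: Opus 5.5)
The plan is to turn exactness into a pointwise operator identity and then exploit that one of the effects is non-scalar. By definition of $A_i$, linearity of the processor gives $\Tr[A_i\rho]=\Pr[\text{output }i\mid\rho]$ for every state $\rho$, and $S=\sum_{i\le n}A_i$ satisfies $\Tr[S\rho]=s_{\mathbf M}(\rho)$. Condition~\eqref{eq:exact-retrieval}, read on the diagonal of $\C^n$, then says
\begin{equation*}
\Tr[A_i\rho]=\Tr[S\rho]\,\Tr[M_i\rho]\qquad\text{for every state }\rho\text{ and every }i .
\end{equation*}
Both sides are real-valued functions on the state space, and all operators involved are Hermitian. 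The left side is linear in $\rho$, while the right side is a product of two linear functionals. So the task reduces to showing that a product $\Tr[S\rho]\Tr[M_i\rho]$ can agree with a linear function on all states only in degenerate cases.

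To do this, I will extend the identity from density operators to all positive operators by homogeneity. Normalising $\rho=X/\Tr X$ and clearing denominators gives
\begin{equation*}
\Tr[A_iX]\,\Tr[X]=\Tr[SX]\,\Tr[M_iX]\qquad\text{for all }X\ge0 .
\end{equation*}
Both sides are now quadratic forms in $X$. Positive operators span the Hermitian matrices, and a quadratic form is determined by its values on an open cone, so the identity holds for all Hermitian $X$. Polarising it yields an identity of symmetric bilinear forms: the symmetrised tensor $A_i\odot\id$ equals $S\odot M_i$ as elements of $\mathrm{Sym}^2$ of the real space of Hermitian operators.

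The next step uses unique factorisation of rank-one symmetric tensors. If $u\odot v=w\odot z$ with all four vectors nonzero, then $\{w,z\}$ is, up to scalars, $\{u,v\}$. Two cases need to be handled separately.
\begin{itemize}
\item If $A_i=0$, then $S\odot M_i=0$. Since $S\neq0$ (success is positive), this forces $M_i=0$, which is proportional to $\id_d$.
\item Otherwise, either $S\propto\id$ and $M_i\propto A_i$, or $M_i\propto\id$ and $S\propto A_i$.
\end{itemize}
Now take the index $i$ for which $M_i$ is not proportional to $\id_d$. Then $M_i\neq0$, so $A_i\neq0$, and the only consistent case is $S=s\,\id_d$. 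Positivity of the success probability gives $s>0$.

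With $S=s\,\id_d$ established, the original identity becomes $\Tr[A_j\rho]=s\Tr[M_j\rho]$ for all states $\rho$ and every $j$. Since states span the Hermitian matrices, this gives $A_j=s\,M_j$ for all $j$. Hence the success probability $\Tr[S\rho]=s$ does not depend on the input.

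The main obstacle is the factorisation step. I will make it concrete rather than quoting tensor facts. Pick Hermitian $X$ with $\Tr X=0$ and $\Tr[M_iX]\neq0$, which exists because $M_i$ is not scalar. The quadratic identity then forces $\Tr[SX]=0$ for every traceless $X$ with $\Tr[M_iX]\neq0$. Such $X$ are dense in the traceless subspace, so by continuity $\Tr[SX]=0$ for all traceless $X$, which means $S\propto\id_d$.
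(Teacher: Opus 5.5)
Your argument is correct and is essentially the paper's. The paper extends the identity to the trace-one hyperplane and substitutes $\rho=\id_d/d+Y$ to isolate $\Tr[SY]\,\Tr[M_iY]=0$ for traceless $Y$; you instead homogenise to a quadratic identity on the positive cone, evaluate it at traceless $X$, and conclude by density where the paper uses that a product of two linear functionals vanishes only if one of them does (the symmetric-tensor factorisation detour is unnecessary, since your concrete final paragraph already closes the argument).
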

\begin{proof}
Exactness reads $\Tr[A_i\rho]=\Tr[S\rho]\,\Tr[M_i\rho]$ for every state $\rho$. Both sides are polynomials in the entries of $\rho$ and agree on the state space, which has nonempty interior in the real affine hyperplane $\{\Tr\rho=1\}$ of Hermitian matrices; hence they agree on the whole hyperplane. Write $\rho=\id_d/d+Y$ with $Y$ traceless Hermitian. The left-hand side is affine in $Y$, so the terms quadratic in $Y$ on the right must vanish: $\Tr[SY]\,\Tr[M_iY]=0$ for every traceless Hermitian $Y$. A product of two linear functionals on a real vector space vanishes identically only if one of them does, and $Y\mapsto\Tr[M_iY]$ does not, since $Y=M_i-\Tr(M_i)\id_d/d\neq0$ gives $\Tr[M_iY]=\Tr[Y^2]>0$. Hence $\Tr[SY]=0$ for every traceless $Y$, so $S=s\,\id_d$ with $s=\Tr S/d>0$. Then $\Tr[(A_i-sM_i)\rho]=0$ for every trace-one Hermitian $\rho$, which forces $A_i-sM_i=0$: the constant term gives $\Tr(A_i-sM_i)=0$ and the linear terms give $\Tr[(A_i-sM_i)Y]=0$ for all traceless $Y$.
\end{proof}

\begin{propo}[Parameter-counting lower bound]\label{prop:param-lower}
Under the hypotheses of Proposition~\ref{prop:label-memory-lower}, let $d\ge2$ and let the program be stored in $Q$ qubits and $C$ classical bits. Then
\begin{equation}\label{eq:param-lower}
    2Q+C\;\ge\;\log_2\bigl((n-1)d^2+1\bigr).
\end{equation}
Both this bound and Proposition~\ref{prop:label-memory-lower} hold for programs prepared at random, with exactness understood after averaging the sub-normalized outcome probabilities over the preparation, and neither requires the preparation to depend continuously on $\mathbf M$.
\end{propo}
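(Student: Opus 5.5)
The plan is a linear-algebraic parameter count. I compare the real dimension of the space in which a hybrid program state can live with the dimension of the space of POVMs that the processor must reach.

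\textbf{Dimension of the program space.} A hybrid program of $Q$ qubits and $C$ classical bits is a state that is block diagonal with respect to the $2^C$ classical labels, each block acting on $\C^{2^Q}$. Every such state therefore lies in the real vector space $\mathcal V=\bigoplus_{c=1}^{2^C}\mathrm{Herm}(\C^{2^Q})$. Its real dimension is $2^C\cdot4^Q=2^{2Q+C}$.

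\textbf{Linearity of the induced effects.} Fix the retrieval effects $F_i$ of the processor, as in Lemma~\ref{lem:const-success}. The map $L:\mathcal V\to\mathrm{Herm}(\C^d)^{\oplus n}$, given by
\[
L(\sigma)=\bigl(\Tr_{\mathrm{prog}}[F_i(\sigma\otimes\id_d)]\bigr)_{i=1}^n,
\]
is real linear. For a randomized storage procedure, the averaged program $\sigma_{\mathbf M}$ is still an element of $\mathcal V$. By linearity, $L(\sigma_{\mathbf M})$ is exactly the averaged sub-normalized outcome tuple. So the randomized case needs no separate treatment, and no continuity of $\mathbf M\mapsto\sigma_{\mathbf M}$ is ever used. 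This accounts for the final sentence of the statement.

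\textbf{Using exactness on nonscalar targets.} Let $\mathbf M$ be any target with at least one effect not proportional to $\id_d$; such targets exist because $d\ge2$. Lemma~\ref{lem:const-success} then gives $L(\sigma_{\mathbf M})=s_{\mathbf M}\mathbf M$ with $s_{\mathbf M}>0$. Hence the image of $L$ contains the tuple $\mathbf M$ itself, and so it contains the linear span of the set $\mathcal N$ of all nonscalar $n$-outcome POVMs.

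\textbf{Span of the POVM set (main step).} I expect the main obstacle to be showing that $\dim\operatorname{span}\mathcal N=(n-1)d^2+1$. Let $\mathcal A=\{(M_i):\sum_iM_i=\id_d\}$, an affine subspace of $\mathrm{Herm}(\C^d)^{\oplus n}$ of real dimension $(n-1)d^2$. It does not contain the origin, so its linear span has dimension $(n-1)d^2+1$. It remains to show that $\mathcal N$ affinely spans $\mathcal A$. My first attempt is to take the interior point $\mathbf M_0=(\id_d/n,\dots,\id_d/n)$ and perturb it along the coordinate directions
\[
(0,\dots,H,\dots,-H,\dots,0),
\]
with $H$ ranging over a basis of $\mathrm{Herm}(\C^d)$. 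These directions span the tangent space of $\mathcal A$. Take each $H$ nonscalar, which is possible since $d\ge2$ (for the identity direction use, e.g., $H=\id_d+X$ with a nonscalar $X$). Then for small $\epsilon>0$ the perturbed tuples are positive semidefinite, hence POVMs, and nonscalar. Moreover $\mathbf M_0$ is itself an affine combination of nonscalar POVMs, for instance $\tfrac12(\mathbf M_0+\epsilon\mathbf H)+\tfrac12(\mathbf M_0-\epsilon\mathbf H)$ for any such perturbation $\mathbf H$. So $\operatorname{aff}\mathcal N=\mathcal A$.

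\textbf{Conclusion.} Combining the steps,
\[
(n-1)d^2+1\le\dim\operatorname{im}L\le\dim\mathcal V=2^{2Q+C},
\]
and taking $\log_2$ gives~\eqref{eq:param-lower}.
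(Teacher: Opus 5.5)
Your proposal is correct and follows the paper's proof essentially step for step. The common chain is: the block-diagonal program space of real dimension $2^C4^Q$; the linear map $\sigma\mapsto(\Tr_{\mathrm{prog}}[F_i(\sigma\otimes\id_d)])_i$ applied to the averaged program; Lemma~\ref{lem:const-success} to place a positive multiple of every nonscalar POVM in its image; and the span of the $(n-1)d^2$-dimensional affine space $\{\sum_iM_i=\id_d\}$, which avoids the origin. The only difference is that you check the affine spanning by explicit perturbations of $(\id_d/n,\dots,\id_d/n)$, whereas the paper notes that the positive-definite nonscalar POVMs form a nonempty open subset of that affine space.

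You do need to treat $n=1$ separately. There your claim that nonscalar targets exist fails, since the only POVM is $(\id_d)$, but the bound reads $2Q+C\ge0$ and is trivial, as the paper notes.
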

\begin{proof}
For $n=1$ the only POVM is $(\id_d)$ and~\eqref{eq:param-lower} reads $2Q+C\ge0$; let $n\ge2$. Let $\sigma_{\mathbf M}$ be the program averaged over its preparation. It is block diagonal, with $2^C$ blocks of size $2^Q$, and such operators form a real vector space of dimension $2^C4^Q$. Let $F_i$ be the effect of the processor, acting on the program register and the input, for a heralded output $i\le n$. Outcome $i$ then occurs with probability $\Tr[F_i(\sigma_{\mathbf M}\otimes\rho)]=\Tr[\rho\,\Gamma_i(\sigma_{\mathbf M})]$, where $\Gamma_i(\sigma)=\Tr_{\mathrm{prog}}[F_i(\sigma\otimes\id_d)]$ is linear in $\sigma$; write $\Gamma=\sum_{i\le n}\Gamma_i$. Exactness states that $\Tr[\rho\,\Gamma_i(\sigma_{\mathbf M})]=\Tr[\rho\,\Gamma(\sigma_{\mathbf M})]\,\Tr(M_i\rho)$ for every $\rho$. If some $M_i$ is not proportional to $\id_d$, Lemma~\ref{lem:const-success} (with $A_i=\Gamma_i(\sigma_{\mathbf M})$) gives $\Gamma(\sigma_{\mathbf M})=p\,\id_d$ with $p>0$ and $\Gamma_i(\sigma_{\mathbf M})=p\,M_i$ for all $i$, so the image of the linear map $\sigma\mapsto(\Gamma_i(\sigma))_{i\le n}$ contains a positive multiple of every POVM with all $M_i\succ0$ and some $M_i$ not proportional to $\id_d$. These POVMs form a non-empty open subset of the affine space $\{\sum_iM_i=\id_d\}$ of real dimension $(n-1)d^2$, which does not contain $0$, so their span has dimension $(n-1)d^2+1$. Since the rank of a linear map is at most the dimension of its domain, $(n-1)d^2+1\le2^C4^Q$.
\end{proof}

\begin{remark}\label{rem:param-lower}
The bound~\eqref{eq:param-lower} holds for every success probability and every access model, and it is nearly attained at low success. The direct protocol with a single query, which is the teleportation-based probabilistic gate array of~\cite{nielsen1997programmable} applied to the POVM channel, stores one $d$-dimensional reference system and one outcome and succeeds with probability $\lambda(1,d)=1/d^2$; the dimension count of the proof evaluates to $nd^2$ for its program, against $(n-1)d^2+1$. In this count a quantum register of dimension $D$ carries $D^2$ real parameters and a classical one only $D$, so a qubit weighs as much as two classical bits (a property of the parameter count, not an achievable exchange rate between the two resources: Proposition~\ref{prop:qdim-lower} shows that no finite classical budget removes the need for quantum memory), and keeping the outcome record classical, as in Proposition~\ref{prop:direct-cq}, is the cheaper choice: storing the outcomes in qubits, as the generic compression of Section~\ref{sec:direct-compression} does, counts twice as much per stored bit. Together with Proposition~\ref{prop:label-memory-lower}, every admissible program satisfies both $Q+C\ge\log_2 n$ and~\eqref{eq:param-lower}; the first bound is the stronger one when $C<2\log_2 n-\log_2\bigl((n-1)d^2+1\bigr)$, roughly $C<\log_2(n/d^2)$, and the second otherwise. The bound does not depend on the success probability, whereas all constructions in this paper need more memory as their success probability approaches its supremum. Proposition~\ref{prop:success-lower} below shows that such growth is unavoidable as the success probability approaches one.
\end{remark}

\begin{propo}[Quantum memory cannot be eliminated]\label{prop:qdim-lower}
Under the hypotheses of Proposition~\ref{prop:label-memory-lower}, let $d,n\ge2$ and let the program consist of a label from a finite classical alphabet and a quantum register of dimension $r$. Then $r\ge d$; in particular $Q\ge\lceil\log_2d\rceil$ for every finite classical budget. Moreover, at $r=d$ the counting of Proposition~\ref{prop:param-lower} forces a classical alphabet of size $L\ge n$, and the single-query direct protocol attains logical dimensions $(r,L)=(d,n)$ at success $1/d^2$.
\end{propo}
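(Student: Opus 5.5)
The plan has three parts. First I establish the dimension bound $r\ge d$ by a covariance argument restricted to one label. Then I read off $L\ge n$ at $r=d$ from the counting of Proposition~\ref{prop:param-lower}. Finally I exhibit the single-query protocol. Throughout, write the averaged program as $\sigma_{\mathbf M}=\bigoplus_{\ell=1}^{L}p_\ell(\mathbf M)\,\tau_\ell(\mathbf M)$ with $\tau_\ell(\mathbf M)$ states on $\C^r$. Write the retrieval effects as $F_i^{\ell}$ on $\C^r\otimes\C^d$, and the linear maps as $\Gamma_i^\ell(\tau)=\Tr_{\mathrm{prog}}[F_i^\ell(\tau\otimes\id_d)]$, so that $\Gamma_i(\sigma_{\mathbf M})=\sum_\ell p_\ell\Gamma_i^\ell(\tau_\ell)$.

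\emph{Step 1 (reduction to one label).} I restrict to the two-outcome projective targets $\mathbf M_\psi=(\projj{\psi},\id_d-\projj{\psi},0,\dots,0)$ for unit $\psi\in\C^d$; these exist because $n\ge2$. Effect $M_1$ is non-scalar since $d\ge2$, so Lemma~\ref{lem:const-success} gives $\sum_\ell p_\ell\Gamma_1^\ell(\tau_\ell)=s\,\projj{\psi}$ and $\sum_\ell p_\ell\Gamma_2^\ell(\tau_\ell)=s(\id_d-\projj{\psi})$. Each summand is positive semidefinite, so for every label: $\Gamma_1^\ell(\tau_\ell)=c_\ell\projj{\psi}$, and $\Gamma_2^\ell(\tau_\ell)$ annihilates $\psi$. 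Some label has $c_\ell>0$. The alphabet is finite and the set of $\psi$ is uncountable, so there is one fixed label $\ell$ with fixed maps $\Gamma_1:=\Gamma_1^\ell$, $\Gamma_2:=\Gamma_2^\ell$. For this label, the set $\Psi$ of $\psi$ admitting a state $\tau_\psi$ on $\C^r$ with $\Gamma_1(\tau_\psi)=c\projj{\psi}$, $c>0$, and $\Gamma_2(\tau_\psi)\psi=0$ is not contained in any proper algebraic subset of the sphere. It is cleanest to pass to a suitable pure state in the support of $\tau_\psi$, as in the proof of Proposition~\ref{prop:label-memory-lower}.

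\emph{Step 2 (the main obstacle: $r\ge d$ from one label).} I must show that a single fixed processor acting on $\C^r\otimes\C^d$ cannot, with $r<d$, realise such rank-one conditional effects for a Zariski-dense set of directions $\psi$. My first attempt goes through the vector-space picture. For $v\in\C^r$ the map $v\mapsto\langle v|F_1|v\rangle$ is a Hermitian quadratic form, valued in $\BB(\C^d)$. I would write $F_1=\sum_k|w_k\rangle\langle w_k|$ with $w_k\in\C^r\otimes\C^d$, viewed as operators $W_k:\C^r\to\C^d$ (after conjugation). Then $\Gamma_1(\tau)=\sum_k \bar W_k\tau\bar W_k^{\dagger}$ up to transposition. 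Rank one for pure $\tau=\projj{v}$ forces all $\bar W_kv$ to be parallel to $\psi$.

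Given this, the directions reached are of the form $\psi\propto\bar W_kv$, so the reachable set lies in the union of the images $\bar W_k(\C^r)$. If $r<d$, each image is a proper subspace. To exclude a countable or dense family of $\psi$, I would also use the $\Gamma_2$ constraint: the requirement $\langle\psi|\Gamma_2(\tau_\psi)|\psi\rangle=0$ together with $\Gamma_1+\Gamma_2\le\Tr_{\mathrm{prog}}$ of the identity. The aim is to show the reachable $\psi$ lie in finitely many subspaces of dimension at most $r$ (one per Kraus direction, after a simultaneous-proportionality argument). That contradicts density when $r<d$. This is the step I expect to need the most care, in particular the passage from mixed $\tau_\psi$ to pure ones and the control of several Kraus operators. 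Then $Q\ge\lceil\log_2 d\rceil$ follows immediately.

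\emph{Step 3 (classical alphabet and attainment).} The program register has real dimension $L r^2$. The proof of Proposition~\ref{prop:param-lower} applies verbatim with $2^C4^Q$ replaced by $Lr^2$, giving $Lr^2\ge(n-1)d^2+1$. At $r=d$ this yields $L\ge n-1+1/d^2$, hence $L\ge n$. For attainment, the single-query direct protocol of Proposition~\ref{prop:direct-ppbt} with $N=1$ stores one reference system $\C^d$ and one classical outcome in $[n]$. That is logical dimensions $(r,L)=(d,n)$, with success $\lambda(1,d)=1/d^2$.
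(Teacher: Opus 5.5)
Your proposal is correct and follows the paper's route. Writing $F_1=\sum_k\projj{w_k}$ gives $\Gamma_1(\projj v)=\sum_k\projj{W_k\bar v}$ with $W_k:\C^r\to\C^d$, so every $\psi$ lies in one of finitely many images $W_k(\C^r)$; these are the paper's subspaces $S_{ca}$, proper when $r<d$. Step~3 (the count $Lr^2\ge(n-1)d^2+1$ and the single-query protocol) matches the paper exactly.

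The step you flag as delicate is not hard:
\begin{itemize}
\item The passage from mixed to pure $\tau_\psi$ is immediate, because a sum of positive rank-one terms equal to $c\projj\psi$ forces each term into $\operatorname{span}\{\psi\}$.
\item The $\Gamma_2$ constraint and any simultaneous-proportionality argument are never needed.
\item You can drop the one-label reduction by taking the union over all labels and Kraus vectors at once, as the paper does. This matters because its justification by uncountability would not by itself give the Zariski density you claim.
\end{itemize}
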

\begin{proof}
\emph{Reduction and induced effects.} Restrict the target family to $\mathbf M_\psi=(\projj\psi,\ \id_d-\projj\psi,\ 0,\dots,0)$, $\psi$ a unit vector. The program for $\psi$ is a classical label $c\in[L]$ together with $\tau_{\psi,c}\ge0$ on $\C^r$, $\sum_c\Tr\tau_{\psi,c}=1$, randomness of the preparation absorbed. For label $c$ the processor applies effects $F^c_1,F^c_2,F^c_\emptyset\ge0$ on $\C^r\otimes\C^d$ summing to the identity. Set $A_\psi:=\sum_c\Tr_r\bigl[F^c_1(\tau_{\psi,c}\otimes\id_d)\bigr]$ and $B_\psi$ likewise with $F^c_2$; both are positive and linear in $\rho$ in the sense that $\Pr[\text{output }1\mid\rho]=\Tr[A_\psi\rho]$ and $\Pr[\text{output }2\mid\rho]=\Tr[B_\psi\rho]$.

\emph{Success is constant.} For $d\ge2$ the effect $\projj\psi$ is not scalar, so Lemma~\ref{lem:const-success}, applied to the block-diagonal program $\sum_c\projj c\otimes\tau_{\psi,c}$, gives $A_\psi+B_\psi=s_\psi\id_d$ with $s_\psi>0$ and $A_\psi=s_\psi\projj\psi$.

\emph{Decomposition.} Write the spectral decomposition $F^c_1=\sum_a\projj{w_{ca}}$ with eigenvalues absorbed, at most $rd$ terms, and expand $\ket{w_{ca}}=\sum_{m=1}^r\ket m\otimes\ket{w_{ca,m}}$ in a basis of $\C^r$. Let $S_{ca}:=\operatorname{span}\{\ket{w_{ca,m}}\}_{m\le r}\subseteq\C^d$; then $\dim S_{ca}\le r$, and $S_{ca}$ depends only on the processor, not on $\psi$. Write $\tau_{\psi,c}=\sum_b\projj{v_{\psi,cb}}$ with at most $r$ terms. A direct computation gives $\Tr_r\bigl[\projj{w}(\projj v\otimes\id_d)\bigr]=\projj u$ with $\ket u=(\bra v\otimes\id_d)\ket w=\sum_m\braket v m\ket{w_{ca,m}}\in S_{ca}$. Hence
\[
A_\psi=\sum_{c,a,b}\projj{u_{\psi,cab}},\qquad \ket{u_{\psi,cab}}\in S_{ca}.
\]

\emph{Support.} Since $A_\psi=s_\psi\projj\psi$ with $s_\psi>0$: for every $\phi\perp\psi$, $0=\bra\phi A_\psi\ket\phi=\sum|\braket\phi{u_{\psi,cab}}|^2$, so every $u_{\psi,cab}\in\operatorname{span}\{\psi\}$, and $A_\psi\neq0$ forces some $u_{\psi,cab}\neq0$. Thus $\psi\in S_{ca}$ for some pair $(c,a)$.

\emph{Counting.} The subspaces $S_{ca}$, $(c,a)\in[L]\times[rd]$, are finitely many and independent of $\psi$, and every unit vector of $\C^d$ lies in their union. If $r<d$, each $S_{ca}$ is a proper subspace, and a finite union of proper subspaces of $\C^d$ is a proper subset (a finite union of proper subspaces of a complex vector space is a proper subset and, being closed under scaling, misses some unit vector). This contradicts the previous step, so $r\ge d$.

\emph{Second statement.} The proof of Proposition~\ref{prop:param-lower} counts real parameters of block-diagonal programs; with $L$ blocks of size $r$ it reads $Lr^2\ge(n-1)d^2+1$. At $r=d$ this is $L\ge n-1+d^{-2}$, i.e.\ $L\ge n$ for integer $L$. The direct protocol with a single query stores one $d$-dimensional reference system and one of $n$ outcomes, and succeeds with probability $\lambda(1,d)=1/d^2$.
\end{proof}

\begin{remark}
The bound $r\ge d$ uses positivity of the induced effects and holds for mixed program blocks and arbitrary fixed retrieval operations; it concerns a finite classical alphabet, and an exact real-valued classical description is outside its scope.
\end{remark}

A dependence on the success probability follows by reduction to approximate programming; the contribution here is the reduction, while the dimension bound it invokes is due to Pérez-García. Pérez-García~\cite[Thm.~1 and Sec.~III]{perezgarcia2006optimality} considers fixed devices that, for every $d$-outcome POVM $\mathbf P$ on $\C^d$, admit a program state whose programmed effects $G_j$ satisfy $\max_\rho\sum_j|\Tr[\rho(P_j-G_j)]|\le\delta$. He shows that their program space has dimension at least $k_d\,\delta^{-(d-1)/2}$ whenever $\delta<1/(800d)$, where $1/k_d=2d\,(20\sqrt{8d})^{d-1}$, and at least $1/(12800\,\delta)$ for $d=2$ and $\delta<1/6400$.

\begin{propo}[Success-dependent lower bound]\label{prop:success-lower}
Under the hypotheses of Proposition~\ref{prop:label-memory-lower}, let $n\ge d\ge2$ and let the heralded success probability be at least $p$ for every input and every target POVM. If $1-p<1/(1600d)$, then
\begin{equation}\label{eq:success-lower}
    K\;\ge\;k_d\,\bigl(2(1-p)\bigr)^{-(d-1)/2},
\end{equation}
and for $d=2$ and $1-p<1/12800$, $K\ge1/\bigl(25600\,(1-p)\bigr)$. Consequently 
\[Q+C\ge\frac{d-1}{2}\log_2\frac{1}{1-p}-O_d(1) \text{ as } p\to1,\]
and $Q+C\ge\log_2\frac{1}{1-p}-\log_2 25600$ for $d=2$.
\end{propo}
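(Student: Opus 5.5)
The plan is to convert an exact heralded processor with success at least $p$ into a deterministic approximate programmable measurement device on the same program register. Then the Pérez-García dimension bound applies with $\delta=2(1-p)$. Since $n\ge d$, every $d$-outcome POVM $\mathbf P$ on $\C^d$ can be padded with $n-d$ zero effects to an $n$-outcome target $\tilde{\mathbf P}$. The processor's program $\sigma_{\tilde{\mathbf P}}$ is then used unchanged.

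The deterministic device runs the retrieval instrument on $\sigma_{\tilde{\mathbf P}}\otimes\rho$. On a heralded output $i\le d$ it returns $i$. On failure, and on the impossible heralded outputs $i>d$, it returns the fixed outcome $1$. Its programmed effects are $G_j=A_j+\delta_{j1}R$, where $A_j$ are the induced effects of Lemma~\ref{lem:const-success} and $R$ collects the failure and unused effects. This device is a fixed processor in Pérez-García's sense, with the same program space of dimension $K$. All measurement-dependent information stays in the program, so the hypotheses of Proposition~\ref{prop:label-memory-lower} transfer.

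Next I estimate the error. If $\mathbf P$ has a nonscalar effect, Lemma~\ref{lem:const-success} gives $A_j=s\,P_j$ with constant $s\ge p$. The padded effects vanish, $A_j=s\cdot0=0$ for $j>d$, so $R=(1-s)\id_d$. Then
\[
\sum_j|\Tr[\rho(P_j-G_j)]|=(1-s)\sum_j\Tr[P_j\rho]+\bigl|(1-s)-(1-s)\Tr[P_1\rho]\bigr|\le2(1-s)\le2(1-p).
\]
If every effect of $\mathbf P$ is scalar, I would not attempt exact bookkeeping, since $s$ may depend on $\rho$ there. Instead I would argue directly from exactness. Writing $S=\sum_{i\le n}A_i$, exactness gives $\Tr[A_j\rho]=\Tr[S\rho]\,\Tr[P_j\rho]$, and the success hypothesis gives $\Tr[S\rho]\ge p$. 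The same computation then gives error at most $2(1-\Tr[S\rho])\le2(1-p)$. So the bound $\delta=2(1-p)$ holds uniformly.

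With $1-p<1/(1600d)$ we have $\delta<1/(800d)$, and Pérez-García yields $K\ge k_d\,\delta^{-(d-1)/2}$. For $d=2$, the condition $1-p<1/12800$ gives $\delta<1/6400$ and $K\ge1/(12800\,\delta)=1/\bigl(25600(1-p)\bigr)$.

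Finally, as in Proposition~\ref{prop:label-memory-lower}, a hybrid program embeds in dimension $2^{Q+C}$. Taking logarithms gives $Q+C\ge\frac{d-1}{2}\log_2\frac1{1-p}-\frac{d-1}{2}-\log_2(1/k_d)$, where the constant is $O_d(1)$, together with the explicit $d=2$ bound.

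The main obstacle is checking that Pérez-García's setting really matches. First, his bound must hold for devices that output outcomes through an arbitrary fixed instrument with classical post-processing, and for mixed, possibly randomized programs; averaging the programs is harmless because the effects are linear in them. Second, his error measure must be exactly the worst-case $\ell_1$ deviation used above. I would check these two points against \cite[Sec.~III]{perezgarcia2006optimality} first.
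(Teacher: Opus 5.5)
Your proposal is correct and follows the paper's proof essentially step by step. You replace failure by a fixed output (the paper allows any fixed distribution $q_j$; you take $q_j=\delta_{j1}$), pad $d$-outcome targets with zero effects and send the unused labels to a fixed output, bound the error by $(1-s(\rho))\sum_j|\Tr(P_j\rho)-q_j|\le 2(1-p)$, and invoke Pérez-García with $\delta=2(1-p)$.

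Two small points. First, your case split through Lemma~\ref{lem:const-success} is unnecessary: the exactness identity $\Tr[A_j\rho]=\Tr[S\rho]\,\Tr[P_j\rho]$ already gives the bound when the success depends on the input, as your second case shows and as the paper uses directly. Second, in your first display the sum $(1-s)\sum_j\Tr[P_j\rho]$ should run over $j\neq1$; this does not affect the bound $2(1-s)$.
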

\begin{proof}
On failure, let the processor output label $j$ with a fixed probability $q_j$. This yields a fixed deterministic device whose effects on the input are linear in the program state, as in the proof of Proposition~\ref{prop:param-lower}. For the program of $\mathbf M$, and with success probability $s(\rho)\ge p$ on input $\rho$, its outcome probabilities are $s(\rho)\Tr(M_j\rho)+(1-s(\rho))q_j$, so that
\begin{equation}
    \sum_j\bigl|\Tr(M_j\rho)-s(\rho)\Tr(M_j\rho)-(1-s(\rho))q_j\bigr|=(1-s(\rho))\sum_j\bigl|\Tr(M_j\rho)-q_j\bigr|\le2(1-p)
\end{equation}
for every $\rho$. A $d$-outcome POVM is an $n$-outcome POVM whose effects $M_j$ vanish for $j>d$, and merging the outputs $j\ge d$ into a single one does not increase the left-hand side. The resulting $d$-outcome device therefore approximates every $d$-outcome POVM with $\delta=2(1-p)$; its program space is the program register of dimension $K$, and the bound quoted above gives~\eqref{eq:success-lower} and its version for $d=2$. Finally, $Q+C\ge\log_2K$.
\end{proof}

\begin{cor}[Success-dependent bound for all $d,n\ge2$]\label{cor:success-all}
Under the hypotheses of Proposition~\ref{prop:label-memory-lower}, let $d,n\ge2$ and let the heralded success probability be at least $p$ for every input and every target POVM, with $1-p<1/12800$. Then $K\ge1/\bigl(25600\,(1-p)\bigr)$ and $Q+C\ge\log_2\frac1{1-p}-\log_2 25600$.
\end{cor}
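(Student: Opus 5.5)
The plan is to reduce Corollary~\ref{cor:success-all} to the $d=2$ case of Proposition~\ref{prop:success-lower}. The reduction has two steps: restrict the target family to POVMs acting nontrivially only on a qubit subspace, and merge outputs to get two outcomes. Fix a two-dimensional subspace $W\subseteq\C^d$ with projector $P_W$. To each two-outcome POVM $(E_1,E_2)$ on $W\cong\C^2$ associate the $n$-outcome POVM $\mathbf M_E$ on $\C^d$ with effects $M_1=E_1+(\id_d-P_W)$, $M_2=E_2$, and $M_j=0$ for $j\ge3$. This is valid since $n\ge2$. By hypothesis, the universal processor retrieves each $\mathbf M_E$ exactly, with success at least $p$ for every input on $\C^d$.

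Next I would build a deterministic two-outcome device on qubit inputs, copying the proof of Proposition~\ref{prop:success-lower}. Restrict the inputs to states supported on $W$. On failure, output label $1$ or $2$ with fixed probabilities $q_1,q_2$, which do not depend on the program. Merge all outputs $j\ge2$ into outcome $2$; the effects $M_j$ with $j\ge3$ vanish anyway. For $\rho$ supported on $W$ we have $\Tr(M_1\rho)=\Tr(E_1\rho)$. So, exactly as in the displayed estimate there, the total-variation deviation is $(1-s(\rho))\sum_j|\Tr(E_j\rho)-q_j|\le2(1-p)$. The resulting device has effects on $W$ that are linear in the program state, and its program space is the same register of dimension $K$. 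It therefore approximates every two-outcome qubit POVM with $\delta=2(1-p)$. A fixed map from programs to retrieved effects on the input is a programmable measurement device in Pérez-García's sense, so this setting matches his.

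Finally, invoke the $d=2$ bound quoted before Proposition~\ref{prop:success-lower}. The hypothesis $\delta<1/6400$ is exactly $1-p<1/12800$, and the bound gives $K\ge1/(12800\delta)=1/(25600(1-p))$. Then $Q+C\ge\log_2K$, as in Proposition~\ref{prop:label-memory-lower}, yields the memory inequality.

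The main obstacle is checking that the reduction stays inside the setting of Pérez-García's theorem. The embedded device must be a genuine fixed processor on $\C^2$ whose programs range over a $K$-dimensional space. Compressing the input side to $W$ is done by composing with the fixed isometry $W\hookrightarrow\C^d$, which leaves the program register untouched. Failure outputs that are fixed and program-independent keep the device deterministic. The error measure must also be the worst-case $\ell_1$ distance over inputs, which holds here because the bound is uniform in $\rho$ on $W$. I would also remark that for $d=2$ the corollary follows from Proposition~\ref{prop:success-lower} directly, since $n\ge d$ there.
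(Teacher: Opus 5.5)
Your proposal is correct and follows the paper's proof: fix a qubit subspace $W\subseteq\C^d$ (the paper uses an isometry $V:\C^2\to\C^d$), pad the complement $\id_d-P_W$ onto one effect, set the effects with $j\ge3$ to zero, restrict inputs to $W$, merge labels, and reduce to the $d=2$ case. The differences are cosmetic: you put the complement on the first effect rather than the second, and you inline the failure-output step and quote Pérez-García's $d=2$ bound directly, whereas the paper applies the $d=2$ case of Proposition~\ref{prop:success-lower} to the reduced processor.
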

\begin{proof}
Fix an isometry $V:\C^2\to\C^d$. For a binary qubit POVM $(P_1,P_2)$ let $M_1=VP_1V^\dagger$, $M_2=VP_2V^\dagger+\id_d-VV^\dagger$ and $M_j=0$ for $j\ge3$, an $n$-outcome POVM on $\C^d$ with $\Tr(M_iV\rho V^\dagger)=\Tr(P_i\rho)$. Feeding the processor the input $V\rho V^\dagger$ and merging every heralded label other than $1$ into $2$ gives a fixed processor that retrieves every binary qubit POVM exactly with success at least $p$ from the same program register of dimension $K$; the case $d=2$ of Proposition~\ref{prop:success-lower} applies.
\end{proof}

\begin{remark}\label{rem:success-lower}
At fixed $d,n\ge2$ this matches the growth of the compressed direct program up to the constant in front of the logarithm. Success $p$ requires $N=\lceil p(d^2-1)/(1-p)\rceil$ queries, and by Proposition~\ref{prop:direct-cq} the program then needs 
\[
Q+C\le\bigl((n-1)+n(d-1)(d+2)/2\bigr)\log_2\frac1{1-p}+O_{n,d}(1).
\] The total memory needed for exact heralded retrieval with success $p\to1$ is therefore $\Theta_{n,d}\bigl(\log\frac1{1-p}\bigr)$. The constants inherited from~\cite{perezgarcia2006optimality} are far from optimal, and~\eqref{eq:success-lower} is informative only very close to $p=1$: for $d=2$ it guarantees more than ten bits only when $1-p<4\cdot10^{-8}$.
\end{remark}

\section{Storage and retrieval via Naimark dilation} \label{naimark}

The direct construction of Section~\ref{direct} uses only the permutation symmetry of its program and the classical nature of the outcomes, and the bounds of Section~\ref{sec:lower} apply to every access model. The next two sections apply both ingredients to structured oracles: the full-unitary constructions obtained by importing unitary PSAR serve as reference implementations, and the measurement-specific encoder improves on them (Proposition~\ref{prop:direct-rank1}, Corollary~\ref{cor:measured-branch-quantum}). We first consider a richer oracle model based on a Naimark dilation. Since the dilation is unitary, the optimal PSAR protocol for unitary channels can be applied directly. The program is then purely quantum, with register $\binom{N+D_{\mathrm{Nai}}^2-1}{N}$, but its success probability is governed by the dilation dimension $D_{\mathrm{Nai}}\ge d$ rather than by $d$. 

Throughout the remainder of the paper, we denote by $\HH_M$ the program register used to store the representation of the unknown measurement.

\begin{deff}[Coherent (Naimark) measurement oracle $\Omega_{\mathrm{Nai}}$ ]\label{def:coherent-oracle}
A \emph{coherent Naimark oracle} $\Omega_{\mathrm{Nai}}(\mathbf{M})$ for an $n$-outcome POVM $\mathbf{M}$ on $\C^{d}$ consists of a black-box Naimark unitary $U_{\mathbf{M}}\in\UU(\C^{D_{\mathrm{Nai}}})$ on the chosen dilation space of dimension $D_{\mathrm{Nai}}$, a fixed publicly known projective partition $\{\Pi^{(0)}_{i}\}$ of $\id_{D_{\mathrm{Nai}}}$, and a fixed embedding isometry $\eta_{\mathrm{Nai}}:\C^{d}\hookrightarrow\C^{D_{\mathrm{Nai}}}$, such that 
\begin{equation}
    M_{i}=\eta_{\mathrm{Nai}}^{\dagger}U_{\mathbf{M}}^{\dagger}\Pi^{(0)}_{i}U_{\mathbf{M}}\eta_{\mathrm{Nai}}, \quad \text{for every} \quad i.
\end{equation}

A query consumes one coherent use of $U_{\mathbf{M}}$ on an arbitrary input state of $\C^{D_{\mathrm{Nai}}}$, possibly entangled with a reference system, and not only on embedded states $\eta_{\mathrm{Nai}}\rho\,\eta_{\mathrm{Nai}}^{\dagger}$; the readout $\{\Pi^{(0)}_{i}\}$ is performed by the experimenter. The dilation space, the embedding $\eta_{\mathrm{Nai}}$ and the readout are fixed across the family of target POVMs under consideration and are the only side information available at retrieval; they carry no information about the particular $\mathbf M$ beyond this public promise. In particular, a minimal realisation $D_{\mathrm{Nai}}=D_{\min}$ forces $\operatorname{rank}\Pi^{(0)}_i=\operatorname{rank}M_i$, so minimal-dilation benchmarks are comparisons within a publicly promised rank profile, such as the rank-one profile used in Section~\ref{comparison}; for a family with varying rank profiles one must use a readout with a target-independent profile, e.g.\ the canonical one, or account for the information carried by the profile (Remark~\ref{rem:label-lower-sideinfo}).
\end{deff}

Recall from Theorem~\ref{thm:naimark} that every POVM $\mathbf M$ admits an isometric dilation $V:\mathbb C^d\to\mathbb C^{D_{\mathrm{Nai}}}$ to a projective measurement $\{\Pi^{(0)}_i\}$ on $\C^{D_{\mathrm{Nai}}}$, such that 
\begin{equation} \label{eq:naimark-dilation}
    M_i=V^\dagger \Pi^{(0)}_i V.
\end{equation}
We fix a unitary extension $U_{\mathbf M}\in\mathcal U(D_{\mathrm{Nai}})$ satisfying $V=U_{\mathbf M}\eta_{\mathrm{Nai}}$ and identify this unitary with the unknown unitary provided by the coherent Naimark oracle.

Since the Naimark oracle provides coherent access to the unknown unitary, the storage-and-retrieval protocol is the unitary PSAR protocol followed by the fixed readout.

\begin{deff}[The protocol $\LL_{\mathrm{Nai,pPBT}}$ for Naimark route] \label{protocolNaimark}
Let $U_{\mathbf M}\in\UU(D_{\mathrm{Nai}})$ be the Naimark unitary provided by the oracle $\Omega_{\mathrm{Nai}}(\mathbf{M})$ of Definition~\ref{def:coherent-oracle}, with associated isometry
\begin{equation}
    V:=U_{\mathbf M} \eta_{\mathrm{Nai}} \in\BB(\C^{d},\C^{D_{\mathrm{Nai}}}),
\end{equation}
satisfying the Naimark dilation relation~\eqref{eq:naimark-dilation}.

\emph{\textbf{Storage.}} Run the optimal $N\to1$ PSAR of~\cite{sedlak2019optimal} on the Naimark unitary, producing the optimal PSAR program state whose dimension is given by~\eqref{eq:psar-memory}.

\emph{\textbf{Retrieval.}} For $\rho\in\Omega(\C^{d})$, form the embedded state $\eta_{\mathrm{Nai}}\rho\eta_{\mathrm{Nai}}^{\dagger}\in\Omega(\C^{D_{\mathrm{Nai}}})$ and run the pPBT retrieving instrument.  On failure flag $\textsf{f}$ abort; on success $\textsf{s}$ the output port carries $U_{\mathbf M}(\eta_{\mathrm{Nai}}\rho\eta_{\mathrm{Nai}}^{\dagger})U_{\mathbf M}^{\dagger}\in\Omega(\C^{D_{\mathrm{Nai}}})$ exactly, to which the fixed readout $\{\Pi^{(0)}_{i}\}$ is applied.

The resulting conditional retrieval channel is denoted by $\hat{\PP}^{\mathrm{cond}}_{\mathbf{M}}$. 
\end{deff}

\begin{thm}[Exact heralded SAR of POVMs]\label{thm:ppbt-exact-naimark}
The protocol achieves exact retrieval conditioned on success while retaining the memory scaling of the optimal-success PSAR register. Let $\mathbf{M}$ be an arbitrary $n$-outcome POVM on $\C^{d}$, realised as a projective measurement on $\C^{D_{\mathrm{Nai}}}$, where $D_{\mathrm{Nai}}$ is the dimension of the chosen Naimark dilation. Let $\LL_{\mathrm{Nai,pPBT}}$ store the corresponding simulation unitary through the optimal $N$-port PSAR of~\cite{sedlak2019optimal}. Then: 
\begin{itemize}
\item[(i)] \emph{Exactness.}  Conditioned on the heralded success event the retrieved measurement reproduces $\mathbf{M}$ \emph{exactly}, $F_{\mathrm{ch}}\bigl(\hat{\PP}^{\mathrm{cond}}_{\mathbf{M}},\PP_{\mathbf{M}}\bigr)=1$.
\item[(ii)] \emph{Heralding.}  Success is flagged and state-independent. For the Naimark realisation we have 
\begin{equation}
\Pr[\textsf{s}]=\lambda(N,D_{\mathrm{Nai}}),
\end{equation}
\item[(iii)] \emph{Memory.}  The program register of the optimal-success PSAR protocol has dimension $\binom{N+D_{\mathrm{Nai}}^{2}-1}{N}=\bigt{N^{D_{\mathrm{Nai}}^{2}-1}}$, polynomial in $N$ at fixed $D_{\mathrm{Nai}}$.
\end{itemize}
\end{thm}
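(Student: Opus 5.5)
The proof reduces the theorem to Theorem~\ref{PSAR} applied with $D=D_{\mathrm{Nai}}$ and $U=U_{\mathbf M}$, followed by the Naimark relation~\eqref{eq:naimark-dilation}. The plan is to treat the protocol $\LL_{\mathrm{Nai,pPBT}}$ of Definition~\ref{protocolNaimark} as a composition. First comes the optimal $N\to1$ PSAR of the unknown unitary $U_{\mathbf M}\in\UU(D_{\mathrm{Nai}})$, which the oracle $\Omega_{\mathrm{Nai}}(\mathbf M)$ lets us query coherently on arbitrary inputs of $\C^{D_{\mathrm{Nai}}}$, as Theorem~\ref{PSAR} requires. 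Second comes a fixed, $\mathbf M$-independent pre-processing $\rho\mapsto\eta_{\mathrm{Nai}}\rho\,\eta_{\mathrm{Nai}}^\dagger$. Third comes the fixed, public post-processing by the readout $\{\Pi^{(0)}_i\}$. Because the pre- and post-processing are target-independent and allowed by the public promise of Definition~\ref{def:coherent-oracle}, all three items follow from the corresponding properties of unitary PSAR.

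For (i), I would use that, by Theorem~\ref{PSAR}, the success branch of the retrieving instrument acts on any input $\tau\in\Omega(\C^{D_{\mathrm{Nai}}})$ as $\lambda(N,D_{\mathrm{Nai}})\,U_{\mathbf M}\tau U_{\mathbf M}^\dagger$. As in the proof of Lemma~\ref{lem:ppbt-rigidity}, this is a linear sub-normalized map whose conditional output is exact for every input, so the prefactor is a constant. Substituting $\tau=\eta_{\mathrm{Nai}}\rho\,\eta_{\mathrm{Nai}}^\dagger$ and applying the readout gives outcome $i$ with sub-normalized probability
\[
\lambda(N,D_{\mathrm{Nai}})\,\Tr\bigl[\Pi^{(0)}_iU_{\mathbf M}\eta_{\mathrm{Nai}}\rho\,\eta_{\mathrm{Nai}}^\dagger U_{\mathbf M}^\dagger\bigr]=\lambda(N,D_{\mathrm{Nai}})\Tr(M_i\rho).
\]
The equality uses $V=U_{\mathbf M}\eta_{\mathrm{Nai}}$ and $M_i=V^\dagger\Pi^{(0)}_iV$. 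This is exactly~\eqref{eq:exact-retrieval} with $s_{\mathbf M}(\rho)=\lambda(N,D_{\mathrm{Nai}})$. Dividing by the success probability gives $\hat\PP^{\mathrm{cond}}_{\mathbf M}=\PP_{\mathbf M}$, hence $F_{\mathrm{ch}}=1$.

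Item (ii) then comes directly from the same computation. Summing over $i$ and using $\sum_i\Pi^{(0)}_i=\id_{D_{\mathrm{Nai}}}$ together with $\Tr\rho=1$ gives $\Pr[\textsf s]=\lambda(N,D_{\mathrm{Nai}})$ for every $\rho$, independent of the input. The flag is the pPBT success outcome, so success is heralded. Item (iii) is a direct restatement of~\eqref{eq:psar-memory} with $D=D_{\mathrm{Nai}}$. The asymptotics $\binom{N+D^2-1}{N}=\binom{N+D^2-1}{D^2-1}=\Theta(N^{D^2-1})$ hold because this is a polynomial of degree $D^2-1$ in $N$ with leading coefficient $1/(D^2-1)!$. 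The storage uses only the $N$ coherent queries and no further information about $\mathbf M$, so this register is the whole persistent program.

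The main point of care is step (i). Theorem~\ref{PSAR} must be used as a statement about the whole success branch acting on arbitrary, possibly mixed, inputs of $\C^{D_{\mathrm{Nai}}}$ with a constant prefactor, not only about pure inputs or average fidelity. I would justify this by citing the structure of the optimal PSAR of~\cite{sedlak2019optimal}: it is pPBT with the resource obtained by applying $U_{\mathbf M}^{\otimes N}$ to the ports, and the commutation argument of Lemma~\ref{lem:psar-povm-channel} shows that successful teleportation followed by $U_{\mathbf M}$ on the selected port equals $\lambda(N,D_{\mathrm{Nai}})\,U_{\mathbf M}(\cdot)U_{\mathbf M}^\dagger$. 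I would also note that the embedded inputs occupy only a $d$-dimensional subspace. The success rate is nevertheless the one for teleporting $D_{\mathrm{Nai}}$-dimensional systems, because the retrieval instrument is universal on $\C^{D_{\mathrm{Nai}}}$.
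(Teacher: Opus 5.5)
Your proposal is correct and follows the paper's own proof. Both use the pPBT/PSAR success branch $\lambda(N,D_{\mathrm{Nai}})\,U_{\mathbf M}(\cdot)U_{\mathbf M}^\dagger$ (from~\eqref{eq:ppbt-instrument}) on the embedded input and apply the readout with $M_i=V^\dagger\Pi^{(0)}_iV$ to obtain exactness and the state-independent success $\lambda(N,D_{\mathrm{Nai}})$, then read the register dimension off Theorem~\ref{PSAR}. Your additional remarks (constancy of the prefactor by linearity, the degree and leading coefficient of the binomial) only make explicit what the paper's short proof takes for granted.
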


\begin{proof}
For part (i) we rely on \eqref{eq:ppbt-instrument}. Conditioned on the success flag $\textsf{s}$, the pPBT retrieval implements the identity channel on $\C^{D_{\mathrm{Nai}}}$, so the output equals $U_{\mathbf M}(\eta_{\mathrm{Nai}}\rho\eta_{\mathrm{Nai}}^{\dagger})U_{\mathbf M}^{\dagger}$. Applying the fixed readout and the dilation identity \eqref{eq:naimark-dilation} yields $\Pr[i\mid\textsf{s}]=\mathrm{Tr}(M_i\rho)$, i.e.\ $F_{\mathrm{ch}}(\hat{\PP}^{\mathrm{cond}}_{\mathbf M},\PP_{\mathbf M})=1$. For the second part, the success is the state-independent pPBT probability $\lambda(N,D_{\mathrm{Nai}})$, depending on $\mathbf M$ only through the dilation dimension. Last point follows from  Theorem~\ref{PSAR}, which gives
\begin{equation}
    \dim\HH_{M}=\binom{N+D_{\mathrm{Nai}}^{2}-1}{N}= \bigt{N^{D_{\mathrm{Nai}}^2-1}}.
\end{equation}
The canonical dilation has $D_{\mathrm{Nai}}=dn$, and the minimal one is discussed in Remark \ref{rem:naimark-dim}.
\end{proof}

\begin{remark}[Canonical versus minimal dilation dimension]\label{rem:naimark-dim}
The construction presented in Theorem~\ref{thm:ppbt-exact-naimark} depends on the dilation through its dimension rather than other characteristics: for any Naimark realisation of $\mathbf{M}$ as a projective measurement on $\C^{D_{\mathrm{Nai}}}$, the heralded success probability is $\lambda(N,D_{\mathrm{Nai}})$ and the program register has dimension $\binom{N+D_{\mathrm{Nai}}^{2}-1}{N}$.
We work with both canonical ($D_{\mathrm{Nai}}=dn$) and minimal ($D_{\mathrm{Nai}}=D_{\min}$, see \eqref{eq:Dmin}) realisations throughout this paper.
For example, the qubit SIC POVM ($d=2$, $n=4$) then admits $D_{\min}=4$ rather than requiring canonical $8$-dimensional dilation. 
\end{remark}

\begin{propo}[Measured Naimark program]\label{prop:measured-naimark}
Refine the readout of Definition~\ref{def:coherent-oracle} publicly into rank-one projectors, $\Pi^{(0)}_i=\sum_{l:g(l)=i}\projj l$ for a fixed orthonormal basis $\{\ket l\}_{l\le D_{\mathrm{Nai}}}$ and a fixed map $g$. Then $R_l:=\eta_{\mathrm{Nai}}^\dagger U_{\mathbf M}^\dagger\projj lU_{\mathbf M}\eta_{\mathrm{Nai}}$ is a $D_{\mathrm{Nai}}$-outcome POVM on $\C^d$ with effects of rank at most one and $\sum_{l:g(l)=i}R_l=M_i$. Applying each of the $N$ coherent calls to an embedded $d$-dimensional port of the optimal pPBT resource, followed by this readout, encoding the $N$ uses of $\mathbf R$ with Proposition~\ref{prop:direct-rank1} with the type in the quantum register, and relabelling the retrieved outcome by $g$, gives exact heralded retrieval of every target $\mathbf M$, whatever the ranks of its effects, with
\begin{equation}\label{eq:measured-naimark}
Q=\Bigl\lceil\log_2\binom{N+dD_{\mathrm{Nai}}-1}{N}\Bigr\rceil,\qquad C=0,\qquad \Pr[\mathsf s]=\lambda(N,d).
\end{equation}
Since $dD_{\mathrm{Nai}}\le D_{\mathrm{Nai}}^2$ and $\lambda(N,d)\ge\lambda(N,D_{\mathrm{Nai}})$, both strictly when $D_{\mathrm{Nai}}>d$, this program is never larger or less successful than that of Theorem~\ref{thm:ppbt-exact-naimark}.
\end{propo}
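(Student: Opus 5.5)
The argument reduces everything to the direct construction applied to an auxiliary rank-one measurement. I will first verify that $\mathbf R$ is a POVM with the stated properties. Set $V=U_{\mathbf M}\eta_{\mathrm{Nai}}$, an isometry $\C^d\to\C^{D_{\mathrm{Nai}}}$. Then $R_l=V^\dagger\projj lV=\ket{r_l}\bra{r_l}$ with $\ket{r_l}=V^\dagger\ket l$, which is positive of rank at most one. Summing gives $\sum_lR_l=V^\dagger V=\id_d$, and grouping by $g$ gives $\sum_{l:g(l)=i}R_l=V^\dagger\Pi^{(0)}_iV=M_i$ by Definition~\ref{def:coherent-oracle}. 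Consequently $\PP_{\mathbf M}=\mathcal G\circ\PP_{\mathbf R}$, where $\mathcal G$ is the classical relabelling channel $\projj l\mapsto\projj{g(l)}$. This relabelling is fixed and public, since the refined readout is.

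Next I will show that the storage procedure produces exactly the direct program of $\mathbf R$. Take the optimal $d$-dimensional pPBT resource $\omega_{AB}$ of Lemma~\ref{lem:psar-povm-channel}. On each port $B_k$ apply $\eta_{\mathrm{Nai}}$, then one coherent call of $U_{\mathbf M}$, then the measurement in the basis $\{\ket l\}$. The composite map on $B_k$ is $X\mapsto\sum_l\Tr(\projj l\,VXV^\dagger)\projj l=\PP_{\mathbf R}(X)$. The stored state is therefore $(\mathrm{id}_{A^N}\otimes\PP_{\mathbf R}^{\otimes N})(\omega_{AB})$, the program of Lemma~\ref{lem:psar-povm-channel} for the $D_{\mathrm{Nai}}$-outcome POVM $\mathbf R$ on $\C^d$. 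Every effect of $\mathbf R$ has rank at most one, so Proposition~\ref{prop:direct-rank1} applies with $n$ replaced by $D_{\mathrm{Nai}}$. I will use its purely quantum allocation~\eqref{eq:direct-rank1-quantum}, which yields $Q=\lceil\log_2\binom{N+dD_{\mathrm{Nai}}-1}{N}\rceil$ and $C=0$, with exact decoding back to the uncompressed program.

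Retrieval then proceeds as in Lemma~\ref{lem:psar-povm-channel}. Conditioned on success, which has input-independent probability $\lambda(N,d)$, the selected outcome is distributed as $\PP_{\mathbf R}(\rho)$, and applying $g$ gives $\PP_{\mathbf M}(\rho)$ exactly. No rank promise on $\mathbf M$ is used, because $\mathbf R$ is rank one by construction.

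The main obstacle is a subtlety in Proposition~\ref{prop:direct-rank1} when some $R_l$ vanish. If $V^\dagger\ket l=0$, the outcome $l$ never occurs. The proposition's support argument only needs effects of rank at most one, with the sector $t_l=0$ being trivial, so I expect the claim to hold over all $D_{\mathrm{Nai}}$ outcome types without needing to know which outcomes are active. I will check that the appendix proof indeed allows zero effects.

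For the final comparison: $d\le D_{\mathrm{Nai}}$ gives $dD_{\mathrm{Nai}}\le D_{\mathrm{Nai}}^2$, and the binomial $\binom{N+x-1}{N}$ is increasing in $x$. Also $\lambda(N,D)=N/(N-1+D^2)$ is decreasing in $D$. Both inequalities are strict when $D_{\mathrm{Nai}}>d$ and $N\ge1$. Comparing with Theorem~\ref{thm:ppbt-exact-naimark} then finishes the proof.
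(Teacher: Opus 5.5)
Your proposal is correct and follows the paper's proof essentially step by step. You check that $\mathbf R$ is a POVM with effects of rank at most one that coarse-grains to $\mathbf M$ under the public map $g$. You observe that the $N$ measured calls realize $\PP_{\mathbf R}^{\otimes N}$ on the program halves. You then invoke Proposition~\ref{prop:direct-rank1} with $n$ replaced by $D_{\mathrm{Nai}}$ in its purely quantum allocation, and relabel by $g$. Your concern about vanishing $R_l$ is settled by the appendix proof of that proposition: it explicitly allows $M_j=0$, assigns zero probability to the corresponding words, and sums the sector dimensions over all types, so no knowledge of which outcomes are active is needed.
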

\begin{proof}
$\sum_lR_l=\eta_{\mathrm{Nai}}^\dagger U_{\mathbf M}^\dagger U_{\mathbf M}\eta_{\mathrm{Nai}}=\id_d$, $\operatorname{rank}R_l\le1$, and $\sum_{l:g(l)=i}R_l=\eta_{\mathrm{Nai}}^\dagger U_{\mathbf M}^\dagger\Pi^{(0)}_iU_{\mathbf M}\eta_{\mathrm{Nai}}=M_i$ by Definition~\ref{def:coherent-oracle}. Each call acts on the embedded program half of one port and is measured at once, so the $N$ calls realize $\PP_{\mathbf R}^{\otimes N}$ on the program halves, the setting of Section~\ref{direct} with $D_{\mathrm{Nai}}$ outcomes; Proposition~\ref{prop:direct-rank1} applies, its closed form with $n$ replaced by $D_{\mathrm{Nai}}$ gives~\eqref{eq:measured-naimark}, and $g$ is public, so no record is stored. The sector measurement, the decoder and the relabelling are independent of $\mathbf M$.
\end{proof}

\section{Storage and retrieval via the Kotowski–Oszmaniec representation} \label{kot-osz}

We now consider a second structured oracle model, based on the branch-coherent representation of Kotowski and Oszmaniec~\cite{kotowski2026pretty}, and the exact heralded protocol it supports via probabilistic PBT.

In this resource model we distinguish the information used to \emph{construct} a representation from the information \emph{retained} for the purpose of retrieval. A \emph{designer} who knows $\mathbf M$ supplies a branch ensemble $\{p_\beta,U^{(\beta)},\mathcal Q^{(\beta)}\}_{\beta\in B}$ obeying the Kotowski-Oszmaniec identity~\cite[Result~1]{kotowski2026pretty}, together with a fixed, publicly known projective readout $\{\Pi^{(0)}_l\}$ that does \emph{not} depend on $\mathbf M$. During storage the \emph{device} may sample a branch $\beta\sim\{p_\beta\}$, query its unitary $U^{(\beta)}$ coherently, and sample the associated relabelling $\mathcal Q^{(\beta)}(\cdot\mid l)$ for each readout outcome $l$, but it never receives a classical description of $U^{(\beta)}$. Sampling $\mathcal Q^{(\beta)}(\cdot\mid l)$ for every raw label $l$, including labels not observed in the queries, is granted by the oracle in the same way as sampling a branch $\beta\sim\{p_\beta\}$: the query budget $N$ counts coherent uses of the branch gate, while Proposition~\ref{prop:finite-readout} additionally draws $2d$ conditional classical samples during storage, which are not counted as queries. The receiver is never handed the numerical coefficients $\mathcal Q^{(\beta)}(i\mid l)$, and how the decomposition is obtained from a black-box POVM lies, as before, outside the model.

For fixed $d$, $n$, and $N$ the \emph{retrieval party} is independent of the target POVM: its only measurement-dependent input is a hybrid program consisting of a quantum register and a finite classical record, the embedding and the projective readout being fixed across the family of target POVMs. Crucially, \emph{no} measurement-dependent relabelling library is provided to the receiver for free; Proposition~\ref{prop:finite-readout} shows that a deterministic table with $2d$ entries already suffices as the classical record. In particular the retrieval party never holds a description of $\mathbf M$ and cannot synthesise the measurement directly. We count persistent program memory only: the construction cost of the oracle and any temporary computation during storage lie outside the presented resource model.

\begin{deff}[Branch-coherent measurement oracle $\Omega_{\mathrm{Br}}$]\label{def:branch-oracle}
A branch-coherent oracle $\Omega_{\mathrm{Br}}(\mathbf{M})$ consists of: 
\begin{enumerate}
    \item a finite probability distribution $\{p_{\beta}\}_{\beta\in B}$;
    \item for each $\beta$, a coherent gate $U^{(\beta)}\in\UU(\C^{d}\otimes\C^{2})$;
    \item a fixed publicly known rank-one projective readout $\Pi^{(0)}_{l}=\projj{l}$, $l=1,\dots,2d$, in a fixed orthonormal basis of $\C^{2d}$ (for instance the computational basis of $\C^{d}\otimes\C^{2}$); a coarser projective partition is refined publicly into rank-one outcomes and the relabelling below is lifted to the refinement, which leaves the extended effects~\eqref{eq:extended-compressed} unchanged;
    \item for each $\beta$, a classical stochastic relabelling $\mathcal{Q}^{(\beta)}\!:\{1,\dots,2d\}\to\{1,\dots,n,\emptyset\}$, satisfying the Kotowski-Oszmaniec identity:
        \begin{equation}
            \sum_{\beta}p_{\beta}\,N^{(\beta)}_{i}=q\,M_{i}\quad(i\le n),\qquad \sum_{\beta}p_{\beta}\,N^{(\beta)}_{\emptyset}=(1-q)\,\id_{d},\quad q \in (0,1],
        \end{equation}
 where the \emph{extended effects} on $\C^{2d}$ and their \emph{compression} to $\C^{d}$ are, respectively:
        \begin{equation}\label{eq:extended-compressed}
            E^{(\beta)}_{i}=\sum_{l}\mathcal{Q}^{(\beta)}(i\mid l)\,U^{(\beta)\dagger}\Pi^{(0)}_{l}U^{(\beta)},
            \qquad
            N^{(\beta)}_{i}=\iota^{*}\!\bigl(E^{(\beta)}_{i}\bigr)=(\id_{d}\otimes\bra{0}_{2})\,E^{(\beta)}_{i}\,(\id_{d}\otimes\ket{0}_{2}),
        \end{equation}
    with $\iota^{*}$ the adjoint of the ancilla-preparation channel $\iota(\rho)=\rho\otimes\projj{0}_{2}$.
\end{enumerate}
Every POVM admits such a decomposition with $q=\tfrac18$, independently of $d$~\cite[Result~1]{kotowski2026pretty}; we refer to this as the universal value and use it in the tables. The projective simulation of~\cite{oszmaniec2018simulating} gives a decomposition with $q=\tfrac1d$ and $U^{(\beta)}=V^{(\beta)}\otimes\id_2$ (Remark~\ref{rem:spectral-branch}), the better guarantee for $d<8$. Structured POVMs may admit a larger $q$, up to $q=1$ when a single branch realizes $\mathbf M$ (Appendix~\ref{app:explicit-KO}). The Kotowski-Oszmaniec identity constrains only the compressed effects $N^{(\beta)}_{i}=\iota^{*}(E^{(\beta)}_{i})$, equivalently the action of the extended measurement $\mathbf E^{(\beta)}=\{E^{(\beta)}_{i}\}$ on the embedded subspace $\iota(\BB(\C^{d}))=\{\rho\otimes\projj{0}_{2}\}$; it places no constraint on $E^{(\beta)}_{i}$ elsewhere on $\C^{2d}$. This is inconsequential for exact heralded (pPBT) retrieval: the extended effects are evaluated only on embedded inputs $\rho\otimes\projj{0}_2$, so their action elsewhere is irrelevant.

A query consumes one coherent use of the sampled branch gate $U^{(\beta)}$ on an arbitrary input state of $\C^{2d}$; the fixed readout $\{\Pi^{(0)}_{l}\}$ and the relabelling $\mathcal{Q}^{(\beta)}$ are applied by the experimenter.
The oracle is accessed in one of two modes:
\begin{itemize}
\item[(i)] \emph{ensemble mode:} $\beta$ is sampled once at the start of storage and all $N$ queries use the same $U^{(\beta)}$, so they form the tensor power $\bigl(U^{(\beta)}\bigr)^{\otimes N}$ required by the optimal PSAR compression;
\item[(ii)] \emph{per-query mode:} $\beta$ is sampled independently for each of the $N$ queries.
\end{itemize}
The heralded protocol $\LL_{\mathrm{Br,pPBT}}$ of Definition~\ref{prot Br,pPBT} uses mode~(i).\end{deff}

The Kotowski-Oszmaniec decomposition expresses an arbitrary POVM as a probabilistic mixture of branch measurements acting on $\C^d\otimes\C^2$, with universal non-failure probability $q=\frac18$. Hence, each branch has dimension $2d$, independently of the number of POVM outcomes. Throughout this section we assume access to this decomposition through $\Omega_{\mathrm{Br}}(\mathbf{M})$.

\begin{remark}[Instrument form]\label{rem:branch-instrument}
Averaged over the branch ensemble, the Kotowski-Oszmaniec identity describes a two-outcome instrument $\{\mathcal T_{\textsf s},\mathcal T_{\textsf f}\}$ on $\BB(\C^d)$ carrying an explicit success/failure flag,
\begin{equation}
    \mathcal T_{\textsf s}(\rho)=\sum_{i=1}^{n} q\,\mathrm{Tr}(M_i\rho)\,\projj{i},
    \qquad
    \mathcal T_{\textsf f}(\rho)=(1-q)\,\mathrm{Tr}(\rho)\,\projj{\emptyset},
\end{equation}
with $\mathcal T_{\textsf s}+\mathcal T_{\textsf f}$ trace-preserving; by the Kotowski-Oszmaniec identity, $\mathcal T_{\textsf s}+\mathcal T_{\textsf f}$ is the branch-averaged channel $\rho\mapsto\sum_\beta p_\beta\,\PP_{\mathbf N^{(\beta)}}(\rho)$ of the compressed branch measurements with outcomes in $\{1,\dots,n,\emptyset\}$. The non-failure (success) probability is state-independent, $\mathrm{Tr}[\mathcal T_{\textsf s}(\rho)]=q$ for every $\rho$. The universal bound $q=\tfrac18$ is a worst-case guarantee - for structured POVMs it may be improved to a POVM-dependent value (for the qubit SIC, to $q^{\star}=\tfrac{3-\sqrt3}{2}\approx0.634$ within the two-branch ansatz of Appendix~\ref{app:explicit-KO}).
\end{remark}

\subsection{Exact heralded SAR using pPBT}

Unlike the Naimark route, a single program stores only one sampled branch $\beta$. Hence the target POVM is reproduced exactly only after averaging the sub-normalized outcome probabilities over the sampled branch and readout table and conditioning on the joint teleportation-and-readout success event.

\begin{deff}[The protocol $\LL_{\mathrm{Br,pPBT}}$ for Kotowski-Oszmaniec route]\label{prot Br,pPBT}

Fix the POVM $\mathbf{M}$ and assume access to $\Omega_{\mathrm{Br}}(\mathbf{M})$.

\emph{\textbf{Storage.}}
Sample $\beta\sim\{p_\beta\}$ once at the beginning of the storage phase and fix the corresponding branch unitary $U^{(\beta)}\in\UU(2d)$ for all $N$ oracle queries. Then, conditioned on $\beta$, sample a deterministic relabelling table $f: \{1,\dots,2d\}\rightarrow \{1,\dots,n,\emptyset\}$ according to
\begin{equation}
r_{\beta}(f)=\prod_{l=1}^{2d} \mathcal{Q}^{(\beta)}(f(l)\mid l),
\end{equation}
and store $f$ together with the resulting optimal $N\to1$ PSAR program state for $U^{(\beta)}$. The branch label $\beta$ may then be discarded.

\emph{\textbf{Retrieval.}}
For $\rho\in\Omega(\C^d)$, form the dilated state $\rho\otimes\projj{0}_{2}\in\Omega(\C^{2d})$ and run the pPBT retrieval instrument. On failure flag $\textsf{f}$ abort. On success $\textsf{s}$, the output port carries $U^{(\beta)}(\rho\otimes\projj{0}_{2})U^{(\beta)\dagger}$ exactly. Apply the fixed readout $\{\Pi_l^{(0)}\}_{l=1}^{2d}$, replace the readout label $l$ by $f(l)$, and post-select on a non-failure outcome $i\in\{1,\ldots,n\}$.
\end{deff}

\begin{thm}[Exact heralded SAR of POVMs (averaged over branches and table)]\label{thm:ppbt-exact-kotosz}
Let $\mathbf{M}$ be an arbitrary $n$-outcome POVM on $\C^{d}$, and let $\LL_{\mathrm{Br,pPBT}}$ be the protocol of Definition~\ref{prot Br,pPBT}. Then:
\begin{itemize}
\item[(i)] \emph{Exactness of the averaged program.} Let $\mathcal T^{(\beta,f)}_{\textsf s}$ be the successful subchannel of retrieval from the program storing branch $\beta$ and table $f$, i.e.\ the sub-normalized map to outcomes $i\le n$ on joint pPBT and non-failure success. Averaged over the sampled branch $\beta\sim p_\beta$ and readout table $f\sim r_\beta$,
\begin{equation}\label{eq:branch-table-average}
    \mathbb E_{\beta,f}\bigl[\mathcal T^{(\beta,f)}_{\textsf s}\bigr]=\lambda(N,2d)\,\mathcal T_{\textsf s}=q\,\lambda(N,2d)\,\PP_{\mathbf M},
\end{equation}
with $\mathcal T_{\textsf s}$ the success branch of Remark~\ref{rem:branch-instrument}. Conditioned on joint success, the averaged retrieval therefore reproduces $\mathbf{M}$ exactly, $F_{\mathrm{ch}}\bigl(\hat{\PP}^{\mathrm{cond}}_{\mathbf{M}},\PP_{\mathbf{M}}\bigr)=1$. Exactness is an ensemble statement over independent program preparations; a single program stores one branch and one table and is not guaranteed to reproduce $\mathbf M$ exactly.
\item[(ii)] \emph{Heralding.} Averaged over the sampled branch and table, the joint success probability is state-independent and equals
\begin{equation}
    \Pr[\textsf{s}\wedge \text{non-failure}]\;=\;q\,\lambda(N,2d) \;=\;q\,\frac{N}{N-1+4d^{2}}\;=\;q-\bigt{\frac{q\,d^{2}}{N}}, \quad N \to \infty,
\end{equation}
independent of $n$.
\item[(iii)] \emph{Memory.}  The program register of the optimal-success PSAR protocol has dimension
\begin{equation}
    \dim\HH_M=\binom{N+(2d)^{2}-1}{N}=\bigt{N^{4d^{2}-1}}
\end{equation}
which is polynomial in $N$ at fixed $d$, independent of $n$.
\end{itemize}
\end{thm}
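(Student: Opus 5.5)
The proof conditions on the sampled branch $\beta$ and table $f$, analyses a single program, and then averages. For fixed $\beta$, storage runs the optimal PSAR of Theorem~\ref{PSAR} on $U^{(\beta)}\in\UU(2d)$ with $N$ queries. Part~(iii) is then immediate: the register is exactly the PSAR register with $D=2d$, of dimension $\binom{N+4d^2-1}{N}$. The stored table $f$ is a classical record of $2d$ labels, independent of $N$. Neither the register nor the record involves $n$ beyond the alphabet of $f$.

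For fixed $(\beta,f)$, I apply~\eqref{eq:ppbt-instrument} in dimension $2d$ to the embedded input $\rho\otimes\projj{0}_2$. The pPBT success branch outputs $\lambda(N,2d)\,U^{(\beta)}(\rho\otimes\projj0)U^{(\beta)\dagger}$. Reading out $\{\Pi^{(0)}_l\}$ and relabelling by $f$ then gives
\[
\mathcal T^{(\beta,f)}_{\textsf s}(\rho)=\lambda(N,2d)\sum_{i\le n}\sum_{l:f(l)=i}\Tr\bigl[\Pi^{(0)}_lU^{(\beta)}(\rho\otimes\projj0)U^{(\beta)\dagger}\bigr]\projj i .
\]

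I then average over $f\sim r_\beta$. Since $r_\beta$ is a product over $l$, the marginal probability that $f(l)=i$ is $\mathcal Q^{(\beta)}(i\mid l)$. The average therefore replaces $\sum_{l:f(l)=i}$ by $\sum_l\mathcal Q^{(\beta)}(i\mid l)$. This gives $\lambda(N,2d)\Tr[E^{(\beta)}_i(\rho\otimes\projj0)]=\lambda(N,2d)\Tr[N^{(\beta)}_i\rho]$, by the definition~\eqref{eq:extended-compressed} of the compressed effects. Averaging next over $\beta\sim p_\beta$ and using the Kotowski--Oszmaniec identity yields $\lambda(N,2d)\,q\,\Tr(M_i\rho)$, which is~\eqref{eq:branch-table-average}; it agrees with Remark~\ref{rem:branch-instrument}. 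Summing over $i$ gives the state-independent joint success probability $q\lambda(N,2d)$. The asymptotic form follows by expanding $N/(N-1+4d^2)$. Dividing the averaged subchannel by this constant gives the conditional channel $\PP_{\mathbf M}$, so $F_{\mathrm{ch}}=1$.

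The main subtlety is justifying that averaging is legitimate. Exactness is defined for the averaged program $\sigma_{\mathbf M}$ (Definition~\ref{def:exact-retrieval}), and the retrieval instrument is fixed: it reads $f$ classically and acts linearly on the quantum register. The successful output on the averaged hybrid program is therefore the average of $\mathcal T^{(\beta,f)}_{\textsf s}$. I would state this linearity explicitly. I would also note that only the marginals of $r_\beta$ enter, and that conditioning on success is done after averaging, which is why a single program need not be exact.
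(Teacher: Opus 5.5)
Your proposal is correct and follows essentially the same route as the paper. Both arguments use exact retrieval of $U^{(\beta)}$ on PSAR success with probability $\lambda(N,2d)$, average the table through the product marginals of $r_\beta$ to obtain the compressed effects $N^{(\beta)}_i$, apply the Kotowski--Oszmaniec identity over $\beta$, sum over $i$ to get the constant success $q\lambda(N,2d)$, and invoke Theorem~\ref{PSAR} with $D=2d$ for the register. Your explicit linearity remark, that conditioning is done after averaging, just spells out the paper's closing comment on why a single program need not be exact.
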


\begin{proof}
For part (i) we observe that, conditioned on pPBT success, the retrieval implements $U^{(\beta)}$ exactly. For a fixed table $f$, the fixed readout followed by $l\mapsto f(l)$ then outputs $i\le n$ with probability $\sum_{l:\,f(l)=i}\mathrm{Tr}\bigl[\Pi^{(0)}_l\,U^{(\beta)}\iota(\rho)U^{(\beta)\dagger}\bigr]$, which in general differs from $\mathrm{Tr}(N^{(\beta)}_i\rho)$. After averaging over the sampled table conditional on the branch, the factorized distribution $r_\beta$ has marginals $\sum_f r_\beta(f)\,\mathbf 1_{\{f(l)=i\}}=\mathcal Q^{(\beta)}(i\mid l)$, so this probability becomes $\mathrm{Tr}(N^{(\beta)}_i\rho)$ by~\eqref{eq:extended-compressed}. Averaging over $\beta$ and using Definition~\ref{def:branch-oracle} gives $\sum_\beta p_\beta\mathrm{Tr}(N^{(\beta)}_i\rho)=q\,\mathrm{Tr}(M_i\rho)$; together with the pPBT success probability $\lambda(N,2d)$ this is~\eqref{eq:branch-table-average}. Summing over $i\le n$ gives the averaged joint success probability $q\,\lambda(N,2d)$, so conditioning the averaged retrieval on joint success gives $\Pr[i\mid\text{joint success}]=\mathrm{Tr}(M_i\rho)$, which is precisely $F_{\mathrm{ch}}=1$ in expectation.
The conditioning is applied to the sub-normalized probabilities averaged over branch and table rather than to the conditional distributions of individual programs - as this ratio is nonlinear, exactness is not guaranteed for an individual program encoding a fixed branch and table.

For part (ii) one can notice that the pPBT success probability $\lambda(N,2d)$ is independent of $\beta$, $f$ and $\rho$. As a consequence, averaged over $\beta$ and $f$, the joint success probability is $q\,\lambda(N,2d)$. For last point, Theorem~\ref{PSAR} with $D=2d$ gives $\dim\HH_M=\binom{N+(2d)^{2}-1}{N}$ which concludes the proof.
\end{proof}

The branch-coherent representation therefore provides an alternative route to programmable measurement retrieval. Compared with the Naimark construction, it replaces the dependence on the dilation dimension by the fixed branch dimension $2d$, and compared with the compressed direct program, it replaces a polynomial degree growing with $n$ by $4d^2-1$. In both cases, these gains are counterbalanced by the intrinsic success probability $q$ of the Kotowski-Oszmaniec decomposition.

\subsection{Finite classical readout without a shared library}

\begin{propo}[A finite classical record for branch retrieval]
\label{prop:finite-readout}
Fix $d\ge2$, $n\ge1$, and $N\ge1$, and suppose the branch oracle of Definition~\ref{def:branch-oracle} provides a decomposition with success probability $q$ and a fixed readout with $2d$ outcomes, and that during storage the associated relabelling can be sampled for every readout outcome. Then there exists a heralded retrieval protocol whose persistent program is a quantum register of dimension
\begin{equation}
    L=\binom{N+4d^2-1}{N}
\end{equation}
together with a deterministic table $f:\{1,\dots,2d\}\to\{1,\dots,n,\emptyset\}$; equivalently
\begin{equation}
 Q=\left\lceil\log_2 L\right\rceil,\qquad C=2d\left\lceil\log_2(n+1)\right\rceil .
\end{equation}
Averaged over the sampled branch and table, and conditioned on the joint success event, the protocol retrieves $\mathbf M$ exactly, with state-independent success probability
\begin{equation}
 p_{\rm succ}=q\,\lambda(N,2d)=q\,\frac{N}{N-1+4d^2}.
\end{equation}
The retrieval is \emph{self-contained} - the receiver holds only the quantum register and the table $f$, needing neither the branch label $\beta$ nor the relabelling library $\{\mathcal Q^{(\beta)}\}_{\beta\in B}$. Bundled into one register, the program has hybrid carrier dimension at most $L(n+1)^{2d}$.
\end{propo}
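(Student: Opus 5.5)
The protocol to use is exactly $\LL_{\mathrm{Br,pPBT}}$ of Definition~\ref{prot Br,pPBT}, and the proof reduces to Theorem~\ref{thm:ppbt-exact-kotosz} plus a memory count. Storage runs in ensemble mode. We sample $\beta\sim\{p_\beta\}$ once and run the optimal $N\to1$ PSAR of Theorem~\ref{PSAR} on $U^{(\beta)}\in\UU(2d)$. Then, for each of the $2d$ raw labels $l$, we draw one independent conditional sample $f(l)\sim\mathcal Q^{(\beta)}(\cdot\mid l)$, which the oracle grants. These draws produce a deterministic table with distribution $r_\beta(f)=\prod_l\mathcal Q^{(\beta)}(f(l)\mid l)$. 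Afterwards $\beta$ is discarded. The persistent program is the PSAR register together with $f$. By Theorem~\ref{PSAR} with $D=2d$, the register has dimension $L=\binom{N+4d^2-1}{N}$. Each entry of $f$ is a symbol in $\{1,\dots,n,\emptyset\}$, so it needs $\lceil\log_2(n+1)\rceil$ bits, which gives $C=2d\lceil\log_2(n+1)\rceil$. Bundling the table as a diagonal classical register next to the quantum one gives carrier dimension $L\cdot(n+1)^{2d}$.

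Retrieval is fixed and independent of $\mathbf M$. We embed $\rho\mapsto\rho\otimes\projj{0}_2$, run the pPBT retrieval instrument, abort on $\mathsf f$, apply the public readout $\{\Pi^{(0)}_l\}$, output $f(l)$, and post-select on $f(l)\neq\emptyset$. The receiver uses only the register and $f$. It never needs $\beta$ or any $\mathcal Q^{(\beta)}$, because the only place these enter is the storage-time sampling. This settles self-containedness.

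For exactness and the success probability we invoke Theorem~\ref{thm:ppbt-exact-kotosz}(i)--(ii) directly. On pPBT success the output port carries $U^{(\beta)}\iota(\rho)U^{(\beta)\dagger}$ exactly, by~\eqref{eq:ppbt-instrument}, and this happens with probability $\lambda(N,2d)$ independent of $\beta$, $f$ and $\rho$. Averaging over $f\sim r_\beta$ works because the product form of $r_\beta$ has single-label marginals $\mathcal Q^{(\beta)}(i\mid l)$. So the averaged sub-normalized probability of outputting $i$ is $\lambda(N,2d)\Tr(N^{(\beta)}_i\rho)$, by~\eqref{eq:extended-compressed}. Averaging next over $\beta$ and using the Kotowski–Oszmaniec identity yields $q\lambda(N,2d)\Tr(M_i\rho)$, which is~\eqref{eq:branch-table-average}. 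Summing over $i\le n$ gives the state-independent success $q\lambda(N,2d)$, and the ratio gives $\Tr(M_i\rho)$. This is exact heralded retrieval of the averaged program in the sense of Definition~\ref{def:exact-retrieval}, where $\sigma_{\mathbf M}$ is the preparation average.

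The only delicate step is justifying that a single deterministic table per program suffices, in place of the full stochastic relabelling. This rests on two facts. First, the retrieval map is linear in the stored joint state of register and table. Second, the readout probability at a given $l$ depends on $f$ only through $f(l)$. So replacing the product distribution $r_\beta$ by its marginals is legitimate. I would also check explicitly that no correlation between different labels is needed, because only one label $l$ is read per retrieval. The remaining step is the routine arithmetic of ceilings for $Q$ and $C$.
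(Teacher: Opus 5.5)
Your proposal is correct and follows the paper's proof essentially step by step. You sample $\beta$ once, draw the table $f$ from the product distribution $r_\beta$ whose single-label marginals are $\mathcal Q^{(\beta)}(\cdot\mid l)$, store the $2d$-dimensional PSAR program together with $f$, obtain $q\,\lambda(N,2d)\,\PP_{\mathbf M}$ after averaging via the Kotowski--Oszmaniec identity, and read the memory count off Theorem~\ref{PSAR} and the $(n+1)^{2d}$ tables. Relying on Theorem~\ref{thm:ppbt-exact-kotosz} is not circular, since its proof carries out the same averaging computation independently of this proposition.
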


\begin{proof}
Sample $\beta\sim\{p_\beta\}$ and, conditioned on $\beta$, sample a table $f$ with probability $r_\beta(f)=\prod_{l=1}^{2d}\mathcal Q^{(\beta)}(f(l)\mid l)$. This is a probability distribution on tables, and since it factorizes over $l$ its marginals reproduce the relabelling exactly,
\begin{equation}
 \sum_f r_\beta(f)\,\mathbf 1_{\{f(l)=i\}}=\mathcal Q^{(\beta)}(i\mid l).
\end{equation}
Store the $N$-use PSAR program for $U^{(\beta)}$ (in dimension $2d$) together with $f$, and discard $\beta$. At retrieval, run the universal unitary-PSAR processor, perform the fixed readout $\{\Pi^{(0)}_l\}$, and replace its outcome $l$ by $f(l)$; declare failure if PSAR fails or $f(l)=\emptyset$. Conditioned on PSAR success (probability $\lambda(N,2d)$) the retrieved unitary is exact, so the sub-normalized probability of a non-failure output $i\le n$, averaged over $f$ and then over $\beta$, is
\begin{equation}
    \begin{split}
        & \lambda(N,2d)\sum_\beta p_\beta\sum_{l}\mathcal Q^{(\beta)}(i\mid l)\, \mathrm{Tr}\!\bigl[\Pi^{(0)}_l\,U^{(\beta)}\iota(\rho)U^{(\beta)\dagger}\bigr]=\lambda(N,2d)\sum_\beta p_\beta\,\mathrm{Tr}\!\bigl(N^{(\beta)}_i\rho\bigr)\\
        &=q\,\lambda(N,2d)\,\mathrm{Tr}(M_i\rho),
    \end{split}
\end{equation}
using the effects of~\eqref{eq:extended-compressed} and the Kotowski-Oszmaniec identity. Hence, averaged over the sampled branch and table, the successful map is $q\,\lambda(N,2d)\,\PP_{\mathbf M}$, as in~\eqref{eq:branch-table-average}.

Since $f$ prescribes the relabelling $l\mapsto f(l)$ directly and the PSAR processor and readout do not depend on $\beta$, the label $\beta$ may be discarded. Finally, there are $(n+1)^{2d}$ tables, so storing each of the $2d$ entries in a $\lceil\log_2(n+1)\rceil$-bit register gives $C=2d\lceil\log_2(n+1)\rceil$ (a joint encoding would use $\lceil2d\log_2(n+1)\rceil$ bits); the logical hybrid dimension is at most $L(n+1)^{2d}$.
\end{proof}

\begin{propo}[Measured-branch program]\label{prop:measured-branch}
Let the readout of $\Omega_{\mathrm{Br}}$ be $\Pi^{(0)}_l=\projj{l}$, $l=1,\dots,2d$, and let $\eta:\C^d\hookrightarrow\C^{2d}$ be the fixed embedding. For a sampled branch $\beta$ the operators $R^{(\beta)}_l:=\eta^\dagger U^{(\beta)\dagger}\projj{l}U^{(\beta)}\eta$ form a $2d$-outcome POVM on $\C^d$ whose effects have rank at most one. Sample $\beta$ and the table $f$ as in Proposition~\ref{prop:finite-readout}; apply each of the $N$ coherent calls to an embedded $d$-dimensional port of the optimal pPBT resource, followed immediately by the readout; encode the resulting $N$ uses of $\mathbf R^{(\beta)}$ with Proposition~\ref{prop:direct-rank1}, retaining the raw-outcome type and $f$. At retrieval, run the direct protocol and relabel its output by $f$, rejecting $\emptyset$. Then, with exactly $N$ coherent queries,
\begin{equation}\label{eq:measured-branch}
\mathbb E_{\beta,f}\,\mathcal T^{(\beta,f)}_{\mathsf s}=q\,\lambda(N,d)\,\PP_{\mathbf M},
\end{equation}
and the persistent memory is
\begin{equation}\label{eq:measured-branch-memory}
Q=\Bigl\lceil\log_2\max_{t\in T_{2d,N}}\prod_{l=1}^{2d}\binom{t_l+d-1}{d-1}\Bigr\rceil,\qquad
C=\Bigl\lceil\log_2\binom{N+2d-1}{2d-1}\Bigr\rceil+2d\lceil\log_2(n+1)\rceil .
\end{equation}
The logical direct-sum dimension is $(n+1)^{2d}\binom{N+2d^2-1}{N}$, of degree $2d^2-1$ in $N$, against $4d^2-1$ for Proposition~\ref{prop:finite-readout}.
\end{propo}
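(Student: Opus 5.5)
The plan has three parts: identify the branch POVM, reduce each fixed branch to the direct rank-one construction, and then average over branch and table exactly as in Proposition~\ref{prop:finite-readout}.

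First I check that $\mathbf R^{(\beta)}$ is a valid rank-one POVM. We have $\sum_l R^{(\beta)}_l=\eta^\dagger U^{(\beta)\dagger}U^{(\beta)}\eta=\id_d$. Each $R^{(\beta)}_l=\ket{r_l}\!\bra{r_l}$ with $\ket{r_l}=\eta^\dagger U^{(\beta)\dagger}\ket l$, so its rank is at most one. Taking $\eta=\id_d\otimes\ket0_2$, the compression identity gives $R^{(\beta)}_l=\iota^*\bigl(U^{(\beta)\dagger}\projj lU^{(\beta)}\bigr)$. Combined with~\eqref{eq:extended-compressed}, this yields
\[
\sum_l\mathcal Q^{(\beta)}(i\mid l)\,R^{(\beta)}_l=N^{(\beta)}_i .
\]

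Next, fix $\beta$. Each coherent call is applied to the program half $B_k$ embedded as $\eta(\cdot)\eta^\dagger$ and is read out immediately. This implements the quantum-to-classical channel $\PP_{\mathbf R^{(\beta)}}$ on port $B_k$, exactly as in the proof of Proposition~\ref{prop:measured-naimark}. The $N$ calls therefore produce the direct program $(\mathrm{id}_{A^N}\otimes\PP_{\mathbf R^{(\beta)}}^{\otimes N})(\omega_{AB})$ with $2d$ outcomes and $d$-dimensional reference systems, using exactly $N$ queries. Lemma~\ref{lem:psar-povm-channel} then states that, on pPBT success, the retrieval outputs $\lambda(N,d)\PP_{\mathbf R^{(\beta)}}(\rho)$. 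Proposition~\ref{prop:direct-rank1}, applied with $n\to2d$ and the type kept classical, says this program is recovered exactly by a decoder that does not depend on $\mathbf R^{(\beta)}$. That decoder uses a register of dimension $\max_t\prod_{l}\binom{t_l+d-1}{d-1}$ and a type alphabet of size $\binom{N+2d-1}{2d-1}$. Adding the table $f$ at $2d\lceil\log_2(n+1)\rceil$ bits gives~\eqref{eq:measured-branch-memory}.

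For~\eqref{eq:measured-branch}, the sub-normalized probability of output $i\le n$ at fixed $(\beta,f)$ is $\lambda(N,d)\sum_{l:f(l)=i}\Tr(R^{(\beta)}_l\rho)$. Averaging over $f\sim r_\beta$ with the marginal identity from the proof of Proposition~\ref{prop:finite-readout} replaces the indicator by $\mathcal Q^{(\beta)}(i\mid l)$, which gives $\lambda(N,d)\Tr(N^{(\beta)}_i\rho)$. Averaging over $\beta$ and applying the Kotowski--Oszmaniec identity gives $q\lambda(N,d)\Tr(M_i\rho)$.

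The logical dimension comes from the Vandermonde-type identity already stated in Proposition~\ref{prop:direct-rank1}, $\sum_{t\in T_{2d,N}}\prod_l\binom{t_l+d-1}{d-1}=\binom{N+2d\cdot d-1}{N}$, multiplied by the $(n+1)^{2d}$ tables. This gives degree $2d^2-1$ in $N$.

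The main point needing care is that the type record and the readout are legitimately independent of $\beta$. The type is over the public raw labels $l\in\{1,\dots,2d\}$, and Proposition~\ref{prop:direct-rank1} supplies a $\mathbf R$-independent encoder and decoder whose only requirement is the rank-one promise, which holds automatically here. So the receiver needs nothing beyond the quantum register, the type and $f$.
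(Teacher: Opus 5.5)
Your proposal is correct and follows the paper's proof essentially step by step. It uses the identity $\sum_l\mathcal Q^{(\beta)}(i\mid l)R^{(\beta)}_l=N^{(\beta)}_i$, observes that each call on an embedded port realizes one use of $\PP_{\mathbf R^{(\beta)}}$, reduces to Proposition~\ref{prop:direct-rank1} with $2d$ outcomes, averages first over $f\sim r_\beta$ and then over $\beta$, and multiplies the closed form by the $(n+1)^{2d}$ tables. The one point the paper states explicitly is that using the same $\beta$ for all $N$ calls is the ensemble mode of Definition~\ref{def:branch-oracle}; your ``fix $\beta$'' step relies on this without naming it.
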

\begin{proof}
\emph{The readout is rank one.} By Definition~\ref{def:branch-oracle} the readout is $\Pi^{(0)}_l=\projj l$ for a fixed orthonormal basis $\{\ket l\}$ of $\C^{2d}$; a coarser partition is first refined publicly into rank-one outcomes with the relabelling lifted, which changes neither the extended effects nor~\eqref{eq:extended-compressed}.

\emph{The raw POVM.} Write $\eta:=\id_d\otimes\ket0_2:\C^d\to\C^{2d}$, so that $\iota^*(X)=\eta^\dagger X\eta$. Then $\sum_lR^{(\beta)}_l=\eta^\dagger U^{(\beta)\dagger}U^{(\beta)}\eta=\eta^\dagger\eta=\id_d$ and $\operatorname{rank}R^{(\beta)}_l\le\operatorname{rank}\Pi^{(0)}_l=1$, so $\mathbf R^{(\beta)}$ is a $2d$-outcome POVM on $\C^d$ with effects of rank at most one; some $R^{(\beta)}_l$ may vanish. By~\eqref{eq:extended-compressed}, $\sum_l\mathcal Q^{(\beta)}(i\mid l)R^{(\beta)}_l=\iota^*\bigl(E^{(\beta)}_i\bigr)=N^{(\beta)}_i$.

\emph{Admissibility.} In ensemble mode, $\beta$ is sampled once and all $N$ calls use $U^{(\beta)}$. The $k$-th call applies $U^{(\beta)}$ to $\eta(B_k)$, the embedded program half of the $k$-th port of the optimal pPBT resource, which is a state of $\C^{2d}$ as Definition~\ref{def:branch-oracle} allows, and is followed by the readout $\{\Pi^{(0)}_l\}$. Each call therefore realizes one use of $\PP_{\mathbf R^{(\beta)}}$ on $B_k$, and exactly $N$ coherent queries are consumed.

\emph{Compression and retrieval.} The $N$ uses of the fixed POVM $\mathbf R^{(\beta)}$ on $\C^d$ place us in the setting of Section~\ref{direct} with $2d$ outcomes; Proposition~\ref{prop:direct-rank1} applies, since its effects have rank at most one, and yields the program of~\eqref{eq:measured-branch-memory} (the table $f$ added to the classical record), together with exact reproduction of $\PP_{\mathbf R^{(\beta)}}$ on success with probability $\lambda(N,d)$. The output $l$ is relabelled to $f(l)$, and $\emptyset$ is rejected.

\emph{Averaging.} For fixed $\beta,f$ the successful subchannel for $i\le n$ is $\rho\mapsto\lambda(N,d)\sum_{l:f(l)=i}\Tr(R^{(\beta)}_l\rho)$. Averaging over $f\sim r_\beta$, whose $l$-th marginal is $\mathcal Q^{(\beta)}(\cdot\mid l)$, gives $\lambda(N,d)\sum_l\mathcal Q^{(\beta)}(i\mid l)\Tr(R^{(\beta)}_l\rho)=\lambda(N,d)\Tr(N^{(\beta)}_i\rho)$; averaging over $\beta$ and using the Kotowski--Oszmaniec identity gives~\eqref{eq:measured-branch}.

\emph{Dimension.} By the closed form in Proposition~\ref{prop:direct-rank1} with $n$ replaced by $2d$, $\sum_{t\in T_{2d,N}}\prod_{l}\binom{t_l+d-1}{d-1}=\binom{N+2d\cdot d-1}{N}$; there are $(n+1)^{2d}$ tables.
\end{proof}

\begin{cor}[Measured-branch program with quantum type register]\label{cor:measured-branch-quantum}
Storing the raw-outcome type in the quantum register as in~\eqref{eq:direct-rank1-quantum} turns the program of Proposition~\ref{prop:measured-branch} into one with
\begin{equation}\label{eq:measured-branch-quantum}
Q=\Bigl\lceil\log_2\binom{N+2d^2-1}{N}\Bigr\rceil,\qquad C=2d\lceil\log_2(n+1)\rceil,\qquad \mathbb E_{\beta,f}\,\mathcal T^{(\beta,f)}_{\mathsf s}=q\,\lambda(N,d)\,\PP_{\mathbf M}.
\end{equation}
Compared with Proposition~\ref{prop:finite-readout}, which stores $\binom{N+4d^2-1}{N}$ dimensions with the same classical record and success $q\lambda(N,2d)$, this program has a strictly smaller quantum register and strictly higher success for every $N\ge1$, and its quantum register is likewise independent of $n$.
\end{cor}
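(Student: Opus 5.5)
The plan is to build the corollary on top of Proposition~\ref{prop:measured-branch}, changing only where the raw-outcome type is stored.

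\emph{Step 1: move the type into the quantum register.} For a fixed sampled branch $\beta$, the $N$ uses of the rank-one raw POVM $\mathbf R^{(\beta)}$ fall under Proposition~\ref{prop:direct-rank1} with $n$ replaced by $2d$. That proposition already offers the purely quantum allocation~\eqref{eq:direct-rank1-quantum}: the type blocks are embedded into orthogonal sectors of $\bigoplus_{t\in T_{2d,N}}\mathcal S_t$. This uses the same decoder and the same success probability $\lambda(N,d)$. The register dimension is $\binom{N+2d\cdot d-1}{N}=\binom{N+2d^2-1}{N}$, which gives the stated $Q$. The table $f$ stays classical with $2d\lceil\log_2(n+1)\rceil$ bits, so $C$ is as claimed.

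\emph{Step 2: check the averaged success map.} The averaging argument of Proposition~\ref{prop:measured-branch} is untouched, since the re-allocation does not change the decoded program state or the retrieval map. Hence $\mathbb E_{\beta,f}\,\mathcal T^{(\beta,f)}_{\mathsf s}=q\,\lambda(N,d)\,\PP_{\mathbf M}$ follows verbatim. The quantum register depends only on $N$ and $d$, so it is independent of $n$.

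\emph{Step 3: compare with Proposition~\ref{prop:finite-readout}.} Its classical record is the same table $f$, so only the quantum registers and the success probabilities need comparing. Since $2d^2<4d^2$ for $d\ge1$ and $N\ge1$, the binomial $\binom{N+m-1}{N}$ is strictly increasing in $m$. Hence $\binom{N+2d^2-1}{N}<\binom{N+4d^2-1}{N}$. Similarly, $\lambda(N,D)=N/(N-1+D^2)$ is strictly decreasing in $D$, so $\lambda(N,d)>\lambda(N,2d)$; multiplying by $q>0$ preserves strictness.

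\emph{Main obstacle.} The one subtle point is the word ``strictly'' for the quantum register, since the statement is about the rounded $Q=\lceil\log_2\cdot\rceil$ rather than the dimension. I would phrase strict dominance at the level of register dimensions, which is what ``stores $\binom{N+4d^2-1}{N}$ dimensions'' refers to. The rounded $Q$ values then satisfy only a non-strict inequality, and I would note this explicitly.
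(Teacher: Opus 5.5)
Your proposal is correct and follows the paper's proof step for step. You apply the quantum-type allocation of Proposition~\ref{prop:direct-rank1} with $n$ replaced by $2d$, keep $f$ classical, note that the averaged successful map is unchanged, and compare via monotonicity of the binomial and of $\lambda(N,\cdot)$; your rounding worry is unnecessary, because $\binom{N+4d^2-1}{N}/\binom{N+2d^2-1}{N}=\prod_{j=0}^{N-1}\frac{4d^2+j}{2d^2+j}\ge 2$, so the rounded qubit counts $Q$ are strictly ordered too.
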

\begin{proof}
Apply the quantum-type-register allocation of Proposition~\ref{prop:direct-rank1}, with $n$ replaced by $2d$, to the program of Proposition~\ref{prop:measured-branch}; the table $f$ stays classical. The sector measurement at retrieval, the decoder and the relabelling are independent of $\mathbf M$, so~\eqref{eq:measured-branch} is unchanged, and the dimension is the closed form $\sum_{t\in T_{2d,N}}\prod_l\binom{t_l+d-1}{d-1}=\binom{N+2d^2-1}{N}$. For the comparison, $\binom{N+2d^2-1}{N}<\binom{N+4d^2-1}{N}$ for $N\ge1$ and $\lambda(N,d)>\lambda(N,2d)$.
\end{proof}

\begin{remark}
The two allocations of the measured-branch program trade quantum against classical memory: with a classical type record the quantum register is the smaller of the two but the record grows as $(2d-1)\log_2N+O_d(1)$; with the type in the quantum register the record is the $N$-independent table of $2d\lceil\log_2(n+1)\rceil$ bits. In either allocation the measured-branch program has higher success $q\lambda(N,d)>q\lambda(N,2d)$ and lower polynomial degree than the full-unitary program of Proposition~\ref{prop:finite-readout}, which with the same classical record uses the larger register $\binom{N+4d^2-1}{N}$ (Table~\ref{tab:branch}). The full-unitary program is therefore dominated as an implementation of the measurement task; we retain it as the reference construction obtained by importing unitary PSAR, which solves the stronger task of retrieving the branch unitary itself.
\end{remark}

\begin{remark}
As in Theorem~\ref{thm:ppbt-exact-kotosz}, exactness is a property of the full randomized protocol - after averaging the sub-normalized maps over the sampled branch and table, before conditioning - not of any single stored table $f$. The cost $C=2d\lceil\log_2(n+1)\rceil$ is independent of the number of branches $|B|$, which may be arbitrarily large: only one sampled table reaches the receiver.
\end{remark}

\begin{remark}[No-ancilla branch program]\label{rem:spectral-branch}
When the designer may choose the decomposition, the projective simulation of~\cite{oszmaniec2018simulating} supplies a branch oracle with $q=\tfrac1d$: write $M_i=\sum_a\alpha_{ia}\projj{\psi_{ia}}$ with $\sum_{i,a}\alpha_{ia}=d$, sample $\beta=(i,a)$ with probability $\alpha_{ia}/d$, let $U^{(\beta)}=V_{ia}\otimes\id_2$ with $V_{ia}\in\UU(d)$ mapping $\psi_{ia}$ to the first basis vector, and let the relabelling accept the raw outcome $(1,0)$ as $i$ and reject all others; then $\sum_\beta p_\beta N^{(\beta)}_i=M_i/d$. With this structure promised publicly, the measured-branch program of Corollary~\ref{cor:measured-branch-quantum} encodes the $d$-outcome raw readout $\{V_{ia}^\dagger\projj mV_{ia}\}_{m\le d}$ and stores only the label $i$:
\begin{equation}\label{eq:spectral-branch}
Q=\Bigl\lceil\log_2\binom{N+d^2-1}{N}\Bigr\rceil,\qquad C=\lceil\log_2n\rceil,\qquad \mathbb E_{\beta}\,\mathcal T^{(\beta)}_{\mathsf s}=\tfrac1d\,\lambda(N,d)\,\PP_{\mathbf M}.
\end{equation}
For the qubit SIC POVM at $N=50$ this gives $(Q,C,P)=(15,2,0.472)$, against $(28,12,0.118)$ for the universal choice $q=\tfrac18$ in Table~\ref{tab:branch}. It is a different supplied decomposition, not a conversion of an arbitrary branch oracle, the two guarantees coincide at $d=8$, and the universal value is the larger one for $d>8$.
\end{remark}

\section{Memory and success at a glance}\label{comparison}

The exact constructions of Sections~\ref{direct}, \ref{naimark} and~\ref{kot-osz} all achieve $F_{\mathrm{ch}}\bigl(\hat{\PP}^{\mathrm{cond}}_{\mathbf M},\PP_{\mathbf M}\bigr)=1$ (for the branch route with the program averaged over branch and table as in Definition~\ref{def:exact-retrieval}, Theorem~\ref{thm:ppbt-exact-kotosz}); they differ in success probability and in quantum and classical memory. Table~\ref{tab:numerical} holds the number of queries $N$ fixed and reports, for each construction, the number $Q$ of qubits and $C$ of bits of persistent program memory together with the success probability, from the closed forms of Propositions~\ref{prop:direct-ppbt}, \ref{prop:direct-cq}, \ref{prop:direct-rank1} and~\ref{prop:finite-readout} and Theorem~\ref{thm:ppbt-exact-naimark}, for binary qubit projective measurements, the qubit SIC POVM~\cite{renes2004symmetric}, qubit POVMs with six and eight outcomes, and the qutrit SIC POVM. All tabulated POVMs are rank-one, so the Naimark route is listed with the minimal dilation $D_{\min}=n$ (Remark~\ref{rem:naimark-dim}); the canonical dilation $D_{\mathrm{Nai}}=dn$ appears in Table~\ref{tab:memoryQC}. The quantum register of each compressed direct entry was found by exact maximization over all types $t\in T_{n,N}$; for $d=2$ the balanced type is optimal because $K_2(m)=(\lfloor m/2\rfloor+1)(\lceil m/2\rceil+1)$ is log-concave, and the binomial sequence $m\mapsto\binom{m+d-1}{d-1}$ of the rank-one column is log-concave for every $d$. We do not claim log-concavity of $K_d$ in general. Table~\ref{tab:memoryQC} splits the same programs into qubits and bits. Throughout, $Q=\lceil\log_2\dim\rceil$ counts the qubits of the quantum register and $C$ the bits of the classical record, so every column measures the same resource. $\mathrm{Nai}_{\min}$ assumes that all $n$ effects are nonzero and rank one, $D_{\min}=n$, and stores only fixed side information; the branch columns use the universal $q=\tfrac18$, which structured POVMs improve on (Section~\ref{sec:discussion}). Table~\ref{tab:benchmark} isolates the central comparison for one measurement.

\begin{table}[H]
\centering
{\small\setlength{\tabcolsep}{8pt}
\begin{tabular}{l c c c}
\toprule
Qubit SIC POVM ($d=2$, $n=4$), $N=50$ queries & $Q$ [qubits] & $C$ [bits] & success $P$\\
\midrule
Direct, rank-one encoder, type classical (Proposition~\ref{prop:direct-rank1}) & $16$ & $15$ & $0.9434$\\
Direct, rank-one encoder, type quantum~\eqref{eq:direct-rank1-quantum} & $28$ & $0$ & $0.9434$\\
Minimal Naimark, full-unitary program (Theorem~\ref{thm:ppbt-exact-naimark}) & $48$ & $0$ & $0.7692$\\
\bottomrule
\end{tabular}}
\caption{The central comparison, evaluated for one measurement: the same rank-one direct program in its two allocations of the type record, against the optimal-success PSAR program of the minimal Naimark unitary. At this query count the measurement program uses less memory and succeeds more often than the full-unitary program, and the type record can be traded between the two registers.}
\label{tab:benchmark}
\end{table}

\begin{table}[H]
\centering
{\footnotesize\setlength{\tabcolsep}{2.6pt}
\begin{tabular}{ccc cc cc cc c cc ccc}
\toprule
& & & \multicolumn{2}{c}{Direct} & \multicolumn{2}{c}{Direct, compr.} & \multicolumn{2}{c}{Direct, rank-one} & $\mathrm{Nai}_{\min}$ & \multicolumn{2}{c}{Branch} & \multicolumn{3}{c}{Heralded success $P$}\\
& & & \multicolumn{2}{c}{uncompressed} & \multicolumn{2}{c}{Prop.~\ref{prop:direct-cq}} & \multicolumn{2}{c}{Prop.~\ref{prop:direct-rank1}} & & \multicolumn{2}{c}{$q=\tfrac18$} & & & \\
\cmidrule(lr){4-5}\cmidrule(lr){6-7}\cmidrule(lr){8-9}\cmidrule(lr){10-10}\cmidrule(lr){11-12}\cmidrule(lr){13-15}
$d$ & $n$ & $N$ & $Q$ & $C$ & $Q$ & $C$ & $Q$ & $Q^{\mathrm q}$ & $Q$ & $Q$ & $C$ & Direct & $\mathrm{Nai}_{\min}$ & Branch\\
\midrule
$2$ & $2$ & $10$ & $10$ & $10$ & $8$ & $4$ & $6$ & $9$ & $9$ & $22$ & $8$ & $0.769$ & $0.769$ & $0.050$ \\
$2$ & $2$ & $50$ & $50$ & $50$ & $16$ & $6$ & $10$ & $15$ & $15$ & $48$ & $8$ & $0.943$ & $0.943$ & $0.096$ \\
$2$ & $2$ & $100$ & $100$ & $100$ & $19$ & $7$ & $12$ & $18$ & $18$ & $62$ & $8$ & $0.971$ & $0.971$ & $0.109$ \\
\addlinespace
$2$ & $4$ & $10$ & $10$ & $20$ & $10$ & $9$ & $8$ & $15$ & $22$ & $22$ & $12$ & $0.769$ & $0.400$ & $0.050$ \\
$2$ & $4$ & $50$ & $50$ & $100$ & $23$ & $15$ & $16$ & $28$ & $48$ & $48$ & $12$ & $0.943$ & $0.769$ & $0.096$ \\
$2$ & $4$ & $100$ & $100$ & $200$ & $31$ & $18$ & $19$ & $35$ & $62$ & $62$ & $12$ & $0.971$ & $0.870$ & $0.109$ \\
\addlinespace
$2$ & $6$ & $10$ & $10$ & $26$ & $10$ & $12$ & $9$ & $19$ & $32$ & $22$ & $12$ & $0.769$ & $0.222$ & $0.050$ \\
$2$ & $6$ & $50$ & $50$ & $130$ & $29$ & $22$ & $20$ & $39$ & $80$ & $48$ & $12$ & $0.943$ & $0.588$ & $0.096$ \\
$2$ & $6$ & $100$ & $100$ & $259$ & $39$ & $27$ & $25$ & $49$ & $108$ & $62$ & $12$ & $0.971$ & $0.741$ & $0.109$ \\
\addlinespace
$2$ & $8$ & $10$ & $10$ & $30$ & $10$ & $15$ & $10$ & $22$ & $40$ & $22$ & $16$ & $0.769$ & $0.137$ & $0.050$ \\
$2$ & $8$ & $50$ & $50$ & $150$ & $33$ & $28$ & $23$ & $48$ & $109$ & $48$ & $16$ & $0.943$ & $0.442$ & $0.096$ \\
$2$ & $8$ & $100$ & $100$ & $300$ & $46$ & $35$ & $31$ & $62$ & $153$ & $62$ & $16$ & $0.971$ & $0.613$ & $0.109$ \\
\addlinespace
$3$ & $9$ & $10$ & $16$ & $32$ & $16$ & $16$ & $16$ & $28$ & $43$ & $32$ & $24$ & $0.556$ & $0.111$ & $0.028$ \\
$3$ & $9$ & $50$ & $80$ & $159$ & $59$ & $31$ & $42$ & $68$ & $122$ & $80$ & $24$ & $0.862$ & $0.385$ & $0.074$ \\
$3$ & $9$ & $100$ & $159$ & $317$ & $89$ & $39$ & $57$ & $90$ & $175$ & $108$ & $24$ & $0.926$ & $0.556$ & $0.093$ \\
\bottomrule
\end{tabular}}
\caption{Persistent program memory $(Q,C)$ and heralded success $P$ at fixed query count $N$, notation as defined above. Direct, uncompressed: Proposition~\ref{prop:direct-ppbt}. Direct, compressed: Proposition~\ref{prop:direct-cq}. Direct, rank-one: Proposition~\ref{prop:direct-rank1}, $Q$ with the type kept classical (same $C$ as the compressed column) and $Q^{\mathrm q}$ with the type in the quantum register ($C=0$). $\mathrm{Nai}_{\min}$: minimal Naimark dilation, $Q=\lceil\log_2\binom{N+n^2-1}{N}\rceil$. Branch: Proposition~\ref{prop:finite-readout}. The three direct columns share the success $\lambda(N,d)$.}
\label{tab:numerical}
\end{table}

At fixed $N$ the direct route has the largest success probability, $\lambda(N,d)$, because it teleports in dimension $d$: $\lambda(N,d)\ge\lambda(N,D_{\mathrm{Nai}})$ with equality only when $D_{\mathrm{Nai}}=d$, and $q\,\lambda(N,2d)<\lambda(N,d)$; the measured-branch program reaches $q\lambda(N,d)$ and ties with the direct route when $q=1$ (Section~\ref{sec:discussion}). For POVMs with $n$ nonzero rank-one effects, $D_{\min}=n$, the minimal Naimark route has success $\frac{N}{N-1+n^2}$ and register $\binom{N+n^2-1}{N}$, the canonical one $\frac{N}{N-1+d^2n^2}$ and $\binom{N+d^2n^2-1}{N}$; the branch register $\binom{N+4d^2-1}{N}$ does not depend on $n$, at the price of the factor $q$. For the qubit SIC the minimal Naimark and branch registers coincide because $D_{\min}=n=4=2d$, and the branch success at $q=\tfrac18$ is only the worst-case guarantee: a single padded branch realizes the SIC with $q=1$ on the same register, and the two-branch ansatz of Appendix~\ref{app:explicit-KO} gives $q^\star\approx0.634$. More generally, for rank-one POVMs with $n\le2d$ one padded branch realizes the POVM with $q=1$. For the qutrit SIC at $N=50$ even the uncompressed direct register is smaller than the minimal Naimark one: at this budget the polynomial register is the larger one.

The direct program admits two allocations of its type record: with the type kept classical it has the smallest quantum register in Table~\ref{tab:numerical}; with the type in the quantum register~\eqref{eq:direct-rank1-quantum} it is purely quantum, like the Naimark programs. For the benchmark families $D_{\min}=n\ge d$ and $nd\le n^2$, so the purely quantum direct program is never larger than the minimal Naimark register and never less successful, strictly so whenever $n>d$ (columns $Q^{\mathrm q}$ and $\mathrm{Nai}_{\min}$). This program is also available under the Naimark oracle itself, and Proposition~\ref{prop:measured-naimark} extends the comparison to targets of any rank: the full-unitary Naimark registers are dominated as implementations of the measurement, although the task they solve, retrieving the dilation unitary on its whole space, is stronger. A purely quantum program is thus not a benefit specific to storing the Naimark unitary; what the comparison turns on is where the type information is kept.

\begin{table}[H]
\centering
\resizebox{\linewidth}{!}{\setlength{\tabcolsep}{6pt}
\begin{tabular}{l c c}
\toprule
Scheme & $Q$ [qubits] & $C$ [bits] \\
\midrule
Direct uncompressed pPBT
 & $\left\lceil N\log_2 d\right\rceil$
 & $\leq\left\lceil N\log_2 n\right\rceil$\\[4pt]

Direct compressed pPBT
& $\left\lceil \log_2 \max_{t\in T_{n,N}} \prod_{j=1}^{n} K_d(t_j) \right\rceil$
& $\left\lceil \log_2\binom{N+n-1}{n-1} \right\rceil$\\[4pt]
Direct compressed pPBT, rank-one effects
& $\left\lceil \log_2 \max_{t\in T_{n,N}} \prod_{j=1}^{n} \binom{t_j+d-1}{d-1} \right\rceil$
& $\left\lceil \log_2\binom{N+n-1}{n-1} \right\rceil$\\[4pt]
Direct compressed pPBT, rank-one effects, type in quantum register
& $\left\lceil \log_2\binom{N+nd-1}{N} \right\rceil$
& $0$\\[4pt]

Naimark $+$ pPBT
 & $\left\lceil \log_2\binom{N+(dn)^{2}-1}{N}\right\rceil$
 & fixed side information \\[4pt]

Naimark minimal (rank-1) $+$ pPBT
 & $\left\lceil \log_2\binom{N+n^{2}-1}{N} \right\rceil$
 & fixed side information \\[4pt]
Measured Naimark $+$ pPBT, type in quantum register
 & $\left\lceil \log_2\binom{N+dD_{\mathrm{Nai}}-1}{N} \right\rceil$
 & fixed side information \\[4pt]

Branch $+$ pPBT
 & $\left\lceil \log_2\binom{N+4d^{2}-1}{N} \right\rceil$
 & $2d\left\lceil\log_2(n+1)\right\rceil$ \\[4pt]
Measured branch $+$ pPBT, type classical
 & $\left\lceil \log_2 \max_{t\in T_{2d,N}} \prod_{l=1}^{2d} \binom{t_l+d-1}{d-1} \right\rceil$
 & $\left\lceil \log_2\binom{N+2d-1}{2d-1} \right\rceil+2d\left\lceil\log_2(n+1)\right\rceil$ \\[4pt]
Measured branch $+$ pPBT, type in quantum register
 & $\left\lceil \log_2\binom{N+2d^{2}-1}{N} \right\rceil$
 & $2d\left\lceil\log_2(n+1)\right\rceil$ \\
\bottomrule
\end{tabular}}
\caption{The SAR schemes for POVMs and their memory. Quantum memory $Q$ is measured in qubits and classical memory $C$ in bits. The classical-memory costs assume the outcome string is encoded jointly, giving $\lceil N\log_2 n\rceil$ bits; storing one separate register per query would instead cost $N\lceil\log_2 n\rceil$ bits. No claim of optimal compression is made. For the Naimark routes, the side information consists of the dilation dimension $D_{\mathrm{Nai}}$, the fixed embedding isometry $\eta_{\mathrm{Nai}}$, and the fixed projective readout $\{\Pi^{(0)}_i\}$, all fixed across the target family (for the minimal realisation this fixes the rank profile of the effects, Definition~\ref{def:coherent-oracle}) and none of which depends on the unknown $U_{\mathbf M}$ (no classical description of $U_{\mathbf M}$ is available, otherwise the measurement could be synthesised directly). }
\label{tab:memoryQC}
\end{table}

\begin{table}[H]
\centering
{\small\setlength{\tabcolsep}{5pt}
\begin{tabular}{ccc ccc ccc cc}
\toprule
& & & \multicolumn{3}{c}{Full-unitary branch} & \multicolumn{3}{c}{Measured branch, type classical} & \multicolumn{2}{c}{Measured branch, type quantum}\\
& & & \multicolumn{3}{c}{Prop.~\ref{prop:finite-readout}} & \multicolumn{3}{c}{Prop.~\ref{prop:measured-branch}} & \multicolumn{2}{c}{Cor.~\ref{cor:measured-branch-quantum}}\\
\cmidrule(lr){4-6}\cmidrule(lr){7-9}\cmidrule(lr){10-11}
$d$ & $n$ & $N$ & $Q$ & $C$ & $P$ & $Q$ & $C$ & $P$ & $Q$ & $C$\\
\midrule
$2$ & $2$ & $50$ & $48$ & $8$ & $0.0962$ & $16$ & $23$ & $0.1179$ & $28$ & $8$ \\
$2$ & $2$ & $200$ & $76$ & $8$ & $0.1163$ & $23$ & $29$ & $0.1232$ & $42$ & $8$ \\
\addlinespace
$2$ & $4$ & $50$ & $48$ & $12$ & $0.0962$ & $16$ & $27$ & $0.1179$ & $28$ & $12$ \\
$2$ & $4$ & $200$ & $76$ & $12$ & $0.1163$ & $23$ & $33$ & $0.1232$ & $42$ & $12$ \\
\addlinespace
$2$ & $6$ & $50$ & $48$ & $12$ & $0.0962$ & $16$ & $27$ & $0.1179$ & $28$ & $12$ \\
$2$ & $6$ & $200$ & $76$ & $12$ & $0.1163$ & $23$ & $33$ & $0.1232$ & $42$ & $12$ \\
\addlinespace
$2$ & $8$ & $50$ & $48$ & $16$ & $0.0962$ & $16$ & $31$ & $0.1179$ & $28$ & $16$ \\
$2$ & $8$ & $200$ & $76$ & $16$ & $0.1163$ & $23$ & $37$ & $0.1232$ & $42$ & $16$ \\
\addlinespace
$3$ & $9$ & $50$ & $80$ & $24$ & $0.0735$ & $34$ & $46$ & $0.1078$ & $52$ & $24$ \\
$3$ & $9$ & $200$ & $139$ & $24$ & $0.1064$ & $56$ & $56$ & $0.1202$ & $83$ & $24$ \\
\bottomrule
\end{tabular}}
\caption{Three exact programs under the same branch oracle, universal choice $q=\tfrac18$. All average the successful subchannel over branch and table. The measured-branch program has the higher success $q\lambda(N,d)$ in both of its allocations (the last group has the same $P$ as the middle one): with a classical type record its quantum register is the smallest but the record grows with $N$; with the type in the quantum register its record is the same $N$-independent table of $2d\lceil\log_2(n+1)\rceil$ bits as the full-unitary program's, with a smaller register. The full-unitary program is dominated in every row.}
\label{tab:branch}
\end{table}

Figure~\ref{fig:qubits_vs_N} holds the POVM fixed (qubit SIC) and varies $N$. Stored uncompressed, the direct program uses linearly many qubits; the PSAR-based registers and the compressed direct program use logarithmically many, and the compressed direct program needs fewer qubits than the minimal-Naimark and full-unitary branch registers at every $N\le500$, even with its classical record counted, as does its purely quantum allocation~\eqref{eq:direct-rank1-quantum}. Holding the quantum-memory budget fixed instead of $N$ does not change the ordering: at any common budget in this range the compressed direct program fits more queries and succeeds with higher probability. The measured-branch program is not separated from the direct one here: its raw readout has $2d=n$ outcomes, so its registers coincide with the rank-one direct registers, and with the single-branch realization $q=1$ its success coincides as well. For $d=2$ and every target success below $\tfrac18$ a single direct query already suffices, with one qubit and $\lceil\log_2 n\rceil$ bits (Remark~\ref{rem:direct-outcome-scaling}).

\begin{figure}[H]
    \centering
    \includegraphics[width=\linewidth]{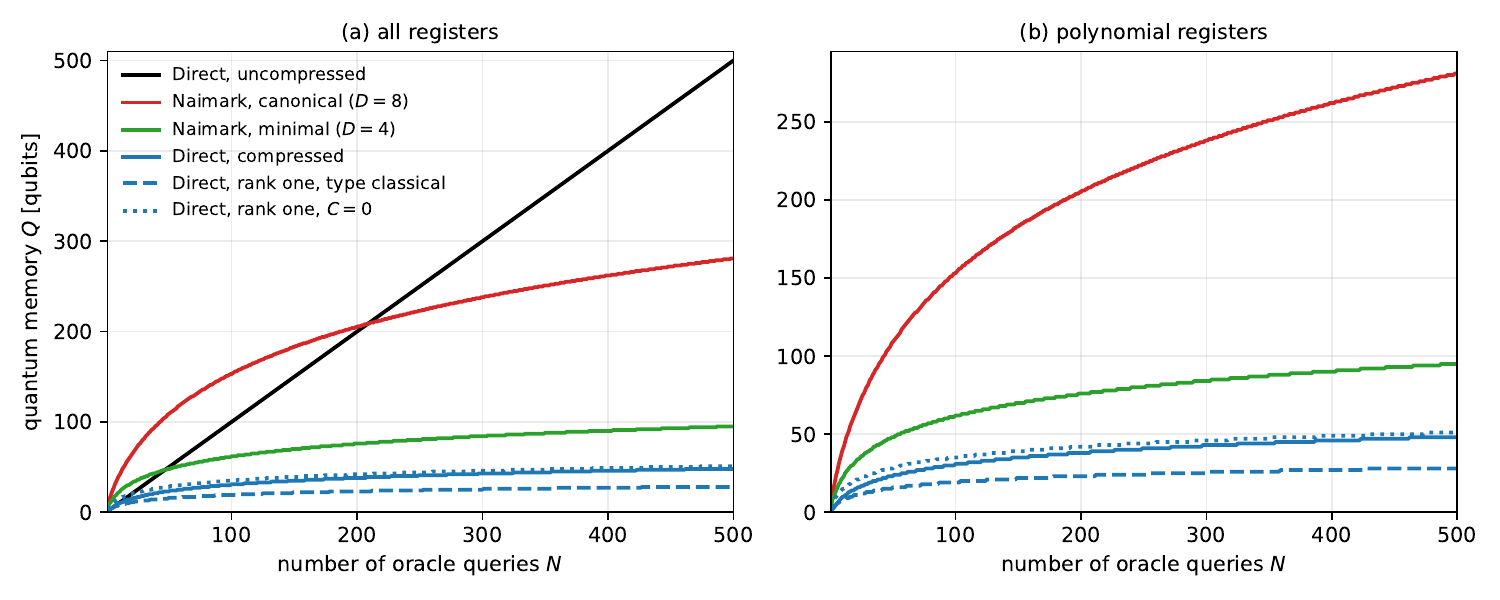}
   \caption{Quantum memory $Q$ in qubits as a function of the number of oracle queries $N$ for the qubit SIC POVM ($d=2$, $n=4$), exact constructions only: the uncompressed direct program ($\lceil N\log_2 d\rceil$, Proposition~\ref{prop:direct-ppbt}), the compressed direct program (Proposition~\ref{prop:direct-cq}), its rank-one refinement with the type kept classical and with the type in the quantum register (Proposition~\ref{prop:direct-rank1}), the canonical and minimal Naimark registers (Theorem~\ref{thm:ppbt-exact-naimark}), and the branch pPBT register (Theorem~\ref{thm:ppbt-exact-kotosz}). The minimal Naimark and branch pPBT curves coincide because $D_{\min}=n=4=2d$; the measured-branch registers of Proposition~\ref{prop:measured-branch} and Corollary~\ref{cor:measured-branch-quantum} coincide with the two rank-one direct curves because the raw readout has $2d=n$ outcomes. Panel~(b) resolves the polynomial registers. For classical memory see Table~\ref{tab:numerical}.}
    \label{fig:qubits_vs_N}
\end{figure}

Which route is available is decided by the oracle: direct access needs only the POVM channel; the Naimark route presupposes coherent access to a dilation unitary on the full dilation space; the branch route a decomposition supplied by a designer (Section~\ref{sec:disc-programs}). Neither a purely quantum program nor an $N$-independent record is exclusive to the structured routes, since the direct program can be stored purely quantumly~\eqref{eq:direct-rank1-quantum} and the measured-branch program admits an $N$-independent record (Corollary~\ref{cor:measured-branch-quantum}). In the examples with $n>d$ no full-unitary construction dominates direct access in quantum memory, classical memory and success at once, since $\lambda(N,d)>\lambda(N,D)$ for every $D>d$ and $\lambda(N,d)>q\lambda(N,2d)$; the measured-branch program can tie with it when $q=1$ but not exceed it. When $D_{\min}=d$ (first block of Table~\ref{tab:numerical}) the minimal Naimark program is the unitary PSAR compression~\cite{sedlak2019optimal} applied directly, with the same success as the direct route, and for $n=d$ nonzero rank-one effects it coincides with~\eqref{eq:direct-rank1-quantum} because then $nd=n^2$. Beyond these examples we make no optimality claim: the comparison is between explicit constructions, not between the optimal capabilities of the oracle models.

\section{Discussion}\label{sec:discussion}

\paragraph{Scope of the results.}
All constructions in this paper are achievable protocols for the oracle models of Definitions~\ref{def:direct-oracle}, \ref{def:coherent-oracle} and~\ref{def:branch-oracle}; the lower bounds of Section~\ref{sec:lower} hold for every model and do not rely on port-based teleportation. The polynomial memory of every route rests on the permutation symmetry of the stored program rather than on unitarity (Remark~\ref{proofPoly}, Lemma~\ref{lem:inv-compression}); what a representation changes is the exponent and the success probability. The comparison of Section~\ref{comparison} is between explicit constructions, not between the optimal capabilities of the oracle models, and we prove no separation between them. The unitary-PSAR success formula is optimal for unitary retrieval and its register is that of the optimal-success protocol~\cite{sedlak2019optimal}, but retrieval of a POVM only requires reproducing a fixed readout on inputs restricted to the subspace $\rho\otimes\projj0$, so this optimality does not transfer; the direct route illustrates the gap, since for the qubit SIC its compressed program needs fewer qubits than the optimal PSAR register of the minimal Naimark unitary ($23$ qubits with a $15$-bit record without any promise on the effects, and under the rank-one promise $16$ qubits with the same record or $28$ qubits with none, against $48$ at $N=50$) at a higher success probability. Since the direct program already compresses to polynomial size under black-box access, a structured representation is worth constructing only for what it changes - the degree of the polynomial memory and the division between quantum and classical memory - at a success probability that, among the protocols analysed here and for the same number of queries, is no higher than that of the direct route.

\paragraph{Assumptions on oracle access.}
We do not address how $\Omega_{\mathrm{Nai}}$ or $\Omega_{\mathrm{Br}}$ could be obtained from black-box access. Producing the Naimark unitary requires a concrete unitary extension of the dilation isometry, implemented coherently on the whole dilation space; statistical access to $\PP_{\mathbf M}$ provides neither the isometry nor its extension, and a dilated experimental setup realizes $\Omega_{\mathrm{Nai}}$ only if its ancillary input can be controlled. The Kotowski-Oszmaniec theorem~\cite{kotowski2026pretty} guarantees the existence of a branch ensemble but gives no efficient construction from black-box access. The structured models therefore isolate the storage cost of a given representation and are not end-to-end learning protocols.

\paragraph{Program creation versus storage and retrieval.}\label{sec:disc-programs}
The direct and Naimark routes are storage and retrieval of an unknown resource: the storage device never holds a description of $\mathbf M$. The branch route is different: in the scenario considered here a designer supplies a preprocessed ensemble, and obtaining it from black-box access is outside the model, while the parties who later run the program reproduce its statistics without a description of it. The branch route therefore solves a \emph{measurement-program creation} task - encoding a preprocessed measurement in a compact, remotely executable form with a self-contained classical record (Proposition~\ref{prop:finite-readout}) - and not black-box storage and retrieval under purely statistical access. This asymmetry between the preparing and the executing party is absent from the other two routes.

\paragraph{Structured POVMs.}
The factor $q=\tfrac18$ is a dimension-independent worst-case guarantee, improved to $\tfrac1d$ for $d<8$ by the no-ancilla decomposition of Remark~\ref{rem:spectral-branch}, and loose for structured measurements. For the qubit SIC POVM the two-branch decomposition of Appendix~\ref{app:explicit-KO} gives $q^{\star}=\frac{3-\sqrt3}{2}\approx0.634$, and since the SIC is rank-one with $D_{\min}=2d=4$, a single branch in $\UU(4)$ realizes it with $q=1$. For that single branch the raw readout of Proposition~\ref{prop:measured-branch} is the SIC itself and the relabelling is the identity, a fixed table that need not be stored, so the measured-branch program coincides with the direct rank-one program in register, record and success $\lambda(N,2)$; the strict gaps below concern the full-unitary branch program. Within the two-branch ansatz the ratio of heralded probabilities is $P_{\mathrm{Br}}/P_{\mathrm{Dir}}=q^{\star}\,\frac{N+3}{N+15}<q^{\star}<1$, so the branch route's success stays below the direct one for every $N$.

\paragraph{Extensions and open problems.}
The framework suggests extensions to unital channels via the randomized simulation of~\cite{kotowski2026randomized} and to general channels via a Stinespring-dilation oracle; their access models and resource formulas are outside the scope of this paper. Open problems that remain to be addressed are: the optimal success probability attainable at a given memory under each oracle, in particular whether $\lambda(N,d)$ can be improved under direct access and whether a dimension-independent universal $q>\tfrac18$ exists~\cite{kotowski2026pretty}; an unconditional lower bound matching the memory~\eqref{eq:direct-cq-memory} of the compressed direct program, for which we claim no optimality; the constants in the $\Theta(\log n)$ and $\Theta\bigl(\log\frac1{1-p}\bigr)$ scalings, the latter inherited from~\cite{perezgarcia2006optimality} and informative only very close to $p=1$ (Remark~\ref{rem:success-lower}); and the optimal dependence on the input dimension.

\section{Conclusions}
Exact heralded storage and retrieval of an unknown measurement needs only polynomial memory under black-box access: the program of the universal port-based protocol compresses exactly into the type of the outcome string and a quantum register, while storing only the type, without re-aligning the reference systems, does not permit exact retrieval by a decoder that returns an outcome on every success. For every access model the total memory must grow at least as $\log_2 n$ with the number of outcomes, must satisfy $2Q+C\ge\log_2\bigl((n-1)d^2+1\bigr)$, and must grow at least logarithmically in $1/(1-p)$ as the success probability $p\to1$; the compressed black-box program attains the first and the last of these bounds up to constants, so the total memory is $\Theta(\log n)$ at fixed $d$ and target success below one, and $\Theta\bigl(\log\frac1{1-p}\bigr)$ at fixed $d,n\ge2$. The structured representations change the success probability, the degree of the polynomial memory and its quantum/classical division: a Naimark dilation gives a purely quantum program governed by the dilation dimension, which is never smaller or more successful than the program obtained by measuring the dilation readout during storage, and a branch program a quantum register independent of the number of outcomes together with a sampled table of $2d$ labels independent of the number of queries, at a success probability below the non-failure probability of the decomposition; measuring the branch during storage reduces that register and raises the success at the same table. Because the branch route assumes a designer who supplies the decomposition, it is measurement-program creation rather than learning of an unknown device. Neither the optimal quantum-memory cost of measurement retrieval nor a separation between the oracle models is settled in this paper.

\section*{Author contributions}

FT developed the theoretical framework, formulated the three oracle models, and proved the main results on exact heralded storage and retrieval of POVMs. ZP supervised the project, contributed to the interpretation of the results within the representation-dependent framework, and guided the overall direction. LM performed the numerical computations, generated the supporting figures and tables, and verified the numerical claims. Large language models (Claude by Anthropic, and OpenAI models used to produce internal adversarial reviews) were used in preparing this manuscript: for organisation, LaTeX formatting and clarity improvements, and also for drafting candidate mathematical statements and proof sketches, including suggestions arising from AI-assisted internal reviews of earlier drafts. Every statement and proof in the paper was checked, and where necessary corrected, by the authors, who take full responsibility for its content.

\section*{Acknowledgments}
FT acknowledges support from the National Science Center (NCN), Poland, under Project Sonata Bis 10, No. 2020/38/E/ST3/00269. LM and ZP acknowledge support from the National Science Center (NCN), Poland, under Project Opus No. 2022/47/B/ST6/02380.

\section*{Data and code availability}

All register formulas and tabulated values are reproducible from \texttt{calculations.py} in the supplementary materials, which evaluates $K_d$ through Littlewood's identity~\cite[Ch.~I.5]{macdonald1995symmetric} and maximizes over outcome types exactly. Code and data are available at the accompanying repository: \url{https://github.com/iitis/Quantum_Measurement_Protocols}.
\section*{Competing interests}

The authors declare no competing financial interests.

\bibliographystyle{quantum}
\bibliography{Bibliography}

@book{watrous2018theory,
  title={The theory of quantum information},
  author={Watrous, John},
  year={2018},
  publisher={Cambridge university press},
  doi={10.1017/9781316848142}
}

@book{bengtsson2017,
    title={Geometry of Quantum States: An Introduction to Quantum Entanglement},
    author={Bengtsson, Ingemar and Życzkowski, Karol},
    year={2017},
    publisher={Cambridge university press},
    doi={10.1017/CBO9780511535048}
}

@article{lewandowska2022storage,
  title={Storage and retrieval of von Neumann measurements},
  author={Lewandowska, Paulina and Kukulski, Ryszard and Pawela, {\L}ukasz and Pucha{\l}a, Zbigniew},
  journal={Physical Review A},
  volume={106},
  number={5},
  pages={052423},
  year={2022},
  publisher={APS},
  doi={10.1103/PhysRevA.106.052423}
}

@article{christandl2021asymptotic,
  title={Asymptotic performance of port-based teleportation},
  author={Christandl, Matthias and Leditzky, Felix and Majenz, Christian and Smith, Graeme and Speelman, Florian and Walter, Michael},
  journal={Communications in Mathematical Physics},
  volume={381},
  number={1},
  pages={379--451},
  year={2021},
  publisher={Springer},
  doi={10.1007/s00220-020-03884-0}
}

@article{studzinski2017port,
  title={Port-based teleportation in arbitrary dimension},
  author={Studzi{\'n}ski, Micha{\l} and Strelchuk, Sergii and Mozrzymas, Marek and Horodecki, Micha{\l}},
  journal={Scientific reports},
  volume={7},
  number={1},
  pages={10871},
  year={2017},
  publisher={Nature Publishing Group UK London},
  doi={10.1038/s41598-017-10051-4}
}

@article{sedlak2019optimal,
  title={Optimal probabilistic storage and retrieval of unitary channels},
  author={Sedl{\'a}k, Michal and Bisio, Alessandro and Ziman, M{\'a}rio},
  journal={Physical review letters},
  volume={122},
  number={17},
  pages={170502},
  year={2019},
  publisher={APS},
  doi={10.1103/PhysRevLett.122.170502}
}

@article{kotowski2026pretty,
  title={Pretty good simulation of all quantum measurements by projective measurements},
  author={Kotowski, Micha{\l} and Oszmaniec, Micha{\l}},
  journal={IEEE Transactions on Information Theory},
  volume={72},
  number={5},
  pages={3156-3174},
  year={2026},
  publisher={IEEE},
  doi={10.1109/TIT.2026.3678498}
}

@article{oszmaniec2018simulating,
  title={Simulating all quantum measurements using only projective measurements and postselection},
  author={Oszmaniec, Micha{\l} and Maciejewski, Filip B and Pucha{\l}a, Zbigniew},
  journal={Phys. Rev. A},
  volume={100},
  pages={012351},
  year={2019},
  doi={10.1103/PhysRevA.100.012351}
}

@article{yoshida2026one,
  title={One-to-one correspondence between deterministic port-based teleportation and unitary estimation},
  author={Yoshida, Satoshi and Koizumi, Yuki and Studzi{\'n}ski, Micha{\l} and Quintino, Marco T{\'u}lio and Murao, Mio},
  journal={IEEE Transactions on Information Theory},
  year={2026},
  publisher={IEEE},
  doi={10.1109/TIT.2026.3658543}
}

@article{kotowski2026randomized,
  title={Randomized simulation of quantum channels using small ancilla},
  author={Kotowski, Marcin and Kotowski, Micha{\l}},
  journal={arXiv preprint arXiv:2606.08784},
  year={2026},
  doi={10.48550/arXiv.2606.08784}
}

@article{Bisio2011,
  title={Quantum networks: general theory and applications},
  author={Bisio, Alessandro and Chiribella, Giulio and D'Ariano, Giacomo Mauro and Perinotti, Paolo},
  journal={Acta Physica Slovaca},
  volume={61},
  number={3},
  pages={273--390},
  year={2011},
  doi={10.2478/v10155-011-0003-9},
  eprint={1601.04864},
  archivePrefix={arXiv}
}

@article{renes2004symmetric,
  title={Symmetric informationally complete quantum measurements},
  author={Renes, Joseph M and Blume-Kohout, Robin and Scott, Andrew J and Caves, Carlton M},
  journal={Journal of Mathematical Physics},
  volume={45},
  number={6},
  pages={2171--2180},
  year={2004},
  publisher={American Institute of Physics},
  doi={10.1063/1.1737053}
}

@article{bisio2011quantum,
  title={Quantum learning algorithms for quantum measurements},
  author={Bisio, Alessandro and D'Ariano, Giacomo Mauro and Perinotti, Paolo and Sedl{\'a}k, Michal},
  journal={Physics Letters A},
  volume={375},
  number={39},
  pages={3425--3434},
  year={2011},
  publisher={Elsevier},
  doi={10.1016/j.physleta.2011.08.002}
}

@article{bisio2010optimal,
  title={Optimal quantum learning of a unitary transformation},
  author={Bisio, Alessandro and Chiribella, Giulio and D’Ariano, Giacomo Mauro and Facchini, Stefano and Perinotti, Paolo},
  journal={Physical Review A -- Atomic, Molecular, and Optical Physics},
  volume={81},
  number={3},
  pages={032324},
  year={2010},
  publisher={APS},
  doi={10.1103/PhysRevA.81.032324}
}

@article{ishizaka2008asymptotic,
  title={Asymptotic teleportation scheme as a universal programmable quantum processor},
  author={Ishizaka, Satoshi and Hiroshima, Tohya},
  journal={Physical review letters},
  volume={101},
  number={24},
  pages={240501},
  year={2008},
  publisher={APS},
  doi={10.1103/PhysRevLett.101.240501}
}

@article{ishizaka2009quantum,
  title={Quantum teleportation scheme by selecting one of multiple output ports},
  author={Ishizaka, Satoshi and Hiroshima, Tohya},
  journal={Physical Review A -- Atomic, Molecular, and Optical Physics},
  volume={79},
  number={4},
  pages={042306},
  year={2009},
  publisher={APS},
  doi={10.1103/PhysRevA.79.042306}
}

@article{jamiolkowski1972linear,
  title={Linear transformations which preserve trace and positive semidefiniteness of operators},
  author={Jamio{\l}kowski, Andrzej},
  journal={Reports on mathematical physics},
  volume={3},
  number={4},
  pages={275--278},
  year={1972},
  publisher={Elsevier},
  doi={10.1016/0034-4877(72)90011-0}
}

@article{choi1975completely,
  title={Completely positive linear maps on complex matrices},
  author={Choi, Man-Duen},
  journal={Linear algebra and its applications},
  volume={10},
  number={3},
  pages={285--290},
  year={1975},
  publisher={Elsevier},
  doi={10.1016/0024-3795(75)90075-0}
}

@article{uhlmann1976transition,
  title={The “transition probability” in the state space of a*-algebra},
  author={Uhlmann, Armin},
  journal={Reports on Mathematical Physics},
  volume={9},
  number={2},
  pages={273--279},
  year={1976},
  publisher={Elsevier},
  doi={10.1016/0034-4877(76)90060-4}
}

@article{jozsa1994fidelity,
  title={Fidelity for mixed quantum states},
  author={Jozsa, Richard},
  journal={Journal of modern optics},
  volume={41},
  number={12},
  pages={2315--2323},
  year={1994},
  publisher={Taylor \& Francis},
  doi={10.1080/09500349414552171}
}

@article{chefles2000quantum,
  title={Quantum state discrimination},
  author={Chefles, Anthony},
  journal={Contemporary Physics},
  volume={41},
  number={6},
  pages={401--424},
  year={2000},
  publisher={Taylor \& Francis},
  doi ={10.1080/00107510010002599}
}

@inproceedings{Aaronson2017Shadow,
  title={Shadow tomography of quantum states},
  author={Aaronson, Scott},
  booktitle={Proceedings of the 50th annual ACM SIGACT symposium on theory of computing},
  pages={325--338},
  year={2018},
  doi={10.1145/3188745.3188802}
}

@article{markiewicz2023duality,
  title={Duality of averaging of quantum states over arbitrary symmetry groups revealing Schur--Weyl duality},
  author={Markiewicz, Marcin and Przewocki, Janusz},
  journal={Journal of Physics A: Mathematical and Theoretical},
  volume={56},
  number={39},
  pages={395301},
  publisher={IOP Publishing},
  year={2023},
  doi = {10.1088/1751-8121/acf4d5}
}

@article{naimarkblessedpaper,
  title={On a representation of additive operator set functions},
  author={Naimark, Mark A.},
  journal={Dokl. Akad. Nauk SSSR},
  volume={41},
  number={9},
  pages={373--375},
  year={1943},
  nolink={}
}

@article{Pirandola2019,
  title={Fundamental limits to quantum channel discrimination},
  author={Pirandola, Stefano and Laurenza, Riccardo and Lupo, Cosmo and Pereira, Jason L},
  journal={npj Quantum Information},
  volume={5},
  number={1},
  pages={50},
  year={2019},
  publisher={Nature Publishing Group UK London},
  doi={10.1038/s41534-019-0162-y}
}

@article{wang2016higher,
  title={Higher-dimensional performance of port-based teleportation},
  author={Wang, Zhi-Wei and Braunstein, Samuel L},
  journal={Scientific Reports},
  volume={6},
  number={1},
  pages={33004},
  doi = {10.1038/srep33004},
  year={2016},
  publisher={Nature Publishing Group UK London}
}

@article{Chen2007,
  title = {Ancilla dimensions needed to carry out positive-operator-valued measurement},
  author = {Chen, Ping-Xing and Bergou, J\'anos A. and Zhu, Shi-Yao and Guo, Guang-Can},
  journal = {Phys. Rev. A},
  volume = {76},
  issue = {6},
  pages = {060303(R)},
  numpages = {4},
  year = {2007},
  month = {Dec},
  publisher = {American Physical Society},
  doi = {10.1103/PhysRevA.76.060303},
}

@article{nielsen1997programmable,
  title={Programmable quantum gate arrays},
  author={Nielsen, Michael A. and Chuang, Isaac L.},
  journal={Physical review letters},
  volume={79},
  number={2},
  pages={321},
  year={1997},
  publisher={APS},
  doi={10.1103/PhysRevLett.79.321}
}

@article{perezgarcia2006optimality,
  title={Optimality of programmable quantum measurements},
  author={P{\'e}rez-Garc{\'\i}a, David},
  journal={Physical Review A},
  volume={73},
  number={5},
  pages={052315},
  year={2006},
  publisher={APS},
  doi={10.1103/PhysRevA.73.052315}
}

@article{yang2016efficient,
  title={Efficient Quantum Compression for Ensembles of Identically Prepared Mixed States},
  author={Yang, Yuxiang and Chiribella, Giulio and Ebler, Daniel},
  journal={Physical Review Letters},
  volume={116},
  number={8},
  pages={080501},
  year={2016},
  doi={10.1103/PhysRevLett.116.080501}
}

@article{fiurasek2004probabilistic,
  title={Probabilistic quantum multimeters},
  author={Fiur{\'a}{\v{s}}ek, Jarom{\'\i}r and Du{\v{s}}ek, Miloslav},
  journal={Physical Review A},
  volume={69},
  number={3},
  pages={032302},
  year={2004},
  doi={10.1103/PhysRevA.69.032302}
}

@article{bluhm2018compression,
  title={Quantum compression relative to a set of measurements},
  author={Bluhm, Andreas and Rauber, Lukas and Wolf, Michael M.},
  journal={Annales Henri Poincar{\'e}},
  volume={19},
  number={6},
  pages={1891--1937},
  year={2018},
  doi={10.1007/s00023-018-0660-z}
}

@article{kubicki2019resource,
  title={Resource Quantification for the No-Programing Theorem},
  author={Kubicki, Aleksander M. and Palazuelos, Carlos and P{\'e}rez-Garc{\'\i}a, David},
  journal={Physical Review Letters},
  volume={122},
  number={8},
  pages={080505},
  year={2019},
  doi={10.1103/PhysRevLett.122.080505}
}

@book{macdonald1995symmetric,
  title={Symmetric Functions and Hall Polynomials},
  author={Macdonald, Ian G.},
  edition={2},
  publisher={Oxford University Press},
  address={Oxford},
  year={1995},
  nolink={}
}

\appendix
\section{Proof of Proposition~\ref{prop:direct-rank1}}
\label{app:rank1-proof}

\begin{proof}[Proof of Proposition~\ref{prop:direct-rank1}]
\emph{Conditional operators.} With $\ket{\psi^+}=d^{-1/2}\sum_i\ket{ii}$ normalized, 
\[
\Tr_B\bigl[(\id\otimes X)\projj{\psi^+}\bigr]=X^{T}/d
\] for every $X\in\BB(\C^d)$, the transpose being taken in the Schmidt basis of $\ket{\psi^+}$. Since $\omega_{AB}=(O_A\otimes\id)\projj{\psi^+}^{\otimes N}(O_A\otimes\id)$ with $O_A$ acting on $A^N$ only, and $\PP_{\mathbf M}^{\otimes N}$ acts on $B^N$, the sub-normalized post-measurement operator on $A^N$ for the outcome word $o$ is
\begin{equation}\label{eq:rho-o-explicit}
\rho_{\mathbf M}(o)=\Tr_{B^N}\Bigl[\bigl(\id\otimes\textstyle\bigotimes_kM_{o_k}\bigr)\omega_{AB}\Bigr]=d^{-N}\,O_A\Bigl(\bigotimes_{k=1}^NM_{o_k}^{T}\Bigr)O_A .
\end{equation}
\emph{Rank-one support.} If $M_j=\projj{\phi_j}$ or $M_j=0$ for every $j$, then $\bigotimes_kM_{o_k}^T=\projj{v_o}$ with $\ket{v_o}=\bigotimes_k\ket{\bar\phi_{o_k}}$, where $\ket{v_o}=0$ if some $M_{o_k}=0$; such words have probability zero and need not be stored. For the sorted word $o^t$ of type $t$, $\ket{v_{o^t}}=\bigotimes_{j=1}^n\ket{\bar\phi_j}^{\otimes t_j}$ lies in $\mathcal S_t:=\bigotimes_{j=1}^n\mathrm{Sym}^{t_j}(\C^d)$.

\emph{$O_A$ preserves $\mathcal S_t$.} Let $S_t=S_{t_1}\times\dots\times S_{t_n}\subseteq S_N$ be the stabilizer of $o^t$, acting on the $j$-th block of $t_j$ tensor factors. The subspace of $(\C^d)^{\otimes N}$ fixed by every $V_\pi$, $\pi\in S_t$, is exactly $\mathcal S_t$, because for groups $G_1,G_2$ acting on $V_1,V_2$ one has 
\[
\mathrm{Fix}(G_1\times G_2)=\mathrm{Fix}(G_1\times\{e\})\cap\mathrm{Fix}(\{e\}\times G_2)=(\mathrm{Fix}\,G_1\otimes V_2)\cap(V_1\otimes\mathrm{Fix}\,G_2)=\mathrm{Fix}\,G_1\otimes\mathrm{Fix}\,G_2,
\] (the last step by choosing bases of $V_1,V_2$ adapted to the two subspaces), iterated over the $n$ blocks, and the fixed points of $S_m$ on $(\C^d)^{\otimes m}$ form $\mathrm{Sym}^m(\C^d)$. By Lemma~\ref{lem:direct-symmetry}, $[O_A,V_\pi]=0$ for all $\pi\in S_N$; hence for $\ket w\in\mathcal S_t$ and $\pi\in S_t$, $V_\pi O_A\ket w=O_AV_\pi\ket w=O_A\ket w$, i.e.\ $O_A\ket w\in\mathcal S_t$. Therefore $\rho_{\mathbf M}(o^t)=d^{-N}\,O_A\projj{v_{o^t}}O_A$ is supported on $\mathcal S_t$, a subspace independent of $\mathbf M$ with $\dim\mathcal S_t=\prod_j\dim\mathrm{Sym}^{t_j}(\C^d)=\prod_j\binom{t_j+d-1}{d-1}$.

\emph{Encoder and decoder.} For each type $t$ fix an isometry $W_t:\mathcal S_t\to\C^{\dim\mathcal S_t}$. Encoding: read $o$, apply $V^{-1}_{\pi_o}$ to $A^N$, which by~\eqref{eq:direct-covariance} turns $\rho_{\mathbf M}(o)$ into $\rho_{\mathbf M}(o^t)$, then apply $W_t$; store $t$ and the register. Decoding: apply $W_t^\dagger$ and then the randomized decoder of Proposition~\ref{prop:direct-cq}. All maps are independent of $\mathbf M$ and reproduce $\sigma_{\mathbf M}$ exactly, so retrieval and its success probability $\lambda(N,d)$ are unchanged.

\emph{Quantum type register.} The stored program is the classical--quantum state $\sum_tp(t)\projj{t}\otimes\hat\rho_t$ with $\hat\rho_t$ supported on $\mathcal S_t$. Let $J_t$ be the inclusion of $\mathcal S_t$ into $\bigoplus_{t'\in T_{n,N}}\mathcal S_{t'}$ and store instead $\sum_tp(t)\,J_t\hat\rho_tJ_t^\dagger$, a state on a space of dimension $\sum_t\dim\mathcal S_t$. The projectors $J_tJ_t^\dagger$ are independent of $\mathbf M$ and the stored state is block diagonal with respect to them, so the receiver recovers $t$ by measuring the sector without disturbing the block and then applies the decoder above; this gives~\eqref{eq:direct-rank1-quantum}. Since $\sum_t\dim\mathcal S_t\le|T_{n,N}|\max_t\dim\mathcal S_t$, the purely quantum program satisfies $Q^{\mathrm q}_{\mathrm{rank}\text{-}1}\le Q_{\mathrm{rank}\text{-}1}+C$ while using more qubits than the hybrid one. The same packing applies to Proposition~\ref{prop:direct-cq}, with total dimension $\sum_t\prod_jK_d(t_j)$.

\emph{Closed form.} $\sum_{t\in T_{n,N}}\prod_j\binom{t_j+d-1}{d-1}=[x^N]\prod_{j=1}^n(1-x)^{-d}=[x^N](1-x)^{-nd}=\binom{N+nd-1}{N}$, since $\sum_{m\ge0}\binom{m+d-1}{d-1}x^m=(1-x)^{-d}$.
\end{proof}

\section{An explicit Oszmaniec-style decomposition for the qubit SIC}
\label{app:explicit-KO}

The qubit SIC POVM admits a simple closed-form two-branch decomposition, each branch carrying two SIC outcomes and one failure outcome; the heralding factor $q$ is limited only by positivity of the failure effects.

Let $\{ M_i = \frac12 \projj{\psi_i} \}_{i=1}^4 $ denote the qubit SIC POVM. The SIC relations imply 
\begin{equation}
    \sum_i \projj{\psi_i} = 2 \id_{2}, \quad \text{and} \quad |\langle\psi_{i}|\psi_{j}\rangle|^{2}=\tfrac{1}{3},\quad \text{for} \quad i\neq j.
\end{equation}
Partition the four SIC outcomes into two pairs, with weights $p_{1}=p_{2}=\tfrac{1}{2}$:
\begin{align*}
\mathbf{N}^{(1)}&=\bigl(\;q\,\projj{\psi_{1}},\;q\,\projj{\psi_{2}},\;\id_{2}-q\bigl(\projj{\psi_{1}}+\projj{\psi_{2}}\bigr)\;\bigr)\quad\text{on outcomes}\;\{1,2,\emptyset\},\\
\mathbf{N}^{(2)}&=\bigl(\;q\,\projj{\psi_{3}},\;q\,\projj{\psi_{4}},\;\id_{2}-q\bigl(\projj{\psi_{3}}+\projj{\psi_{4}}\bigr)\;\bigr)\quad\text{on outcomes}\;\{3,4,\emptyset\}.
\end{align*}
The corresponding classical relabellings are $\mathcal{Q}^{(1)}\!:\{1,2,3,4\}\to\{1,2,\emptyset,\emptyset\}$ and $\mathcal{Q}^{(2)}\!:\{1,2,3,4\}\to\{3,4,\emptyset,\emptyset\}$ when each branch is realised as a rank-1 projective measurement on $\C^{4}$.

\begin{propo}
    The above two-branch construction defines a valid Kotowski-Oszmaniec decomposition identity for every $0 < q \le q^{\star}$, where
    \begin{equation}
        q^{\star}=\frac{3-\sqrt{3}}{2}.
    \end{equation}
\end{propo}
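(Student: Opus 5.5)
The plan is to check the two conditions of Definition~\ref{def:branch-oracle} separately: first the Kotowski--Oszmaniec identity, then that each branch is a genuine measurement realisable by a unitary on $\C^{4}=\C^{2}\otimes\C^{2}$ followed by the fixed rank-one readout and the stated relabelling. The identity is linear and holds for every $q$. The upper limit $q^{\star}$ comes only from positivity of the failure effects, so I expect the whole restriction to reduce to one eigenvalue computation.

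\emph{Identity.} I would start from $p_1=p_2=\tfrac12$.
\begin{itemize}
\item For a non-failure outcome, say $i\in\{1,2\}$, only branch $1$ contributes, so $\sum_\beta p_\beta N^{(\beta)}_i=\tfrac12 q\projj{\psi_i}=q M_i$. The case $i\in\{3,4\}$ is the same with branch $2$.
\item For the failure outcome, the average is $\tfrac12\bigl[2\id_2-q\sum_{i=1}^4\projj{\psi_i}\bigr]$. Using $\sum_i\projj{\psi_i}=2\id_2$, this equals $(1-q)\id_2$.
\end{itemize}
So the identity holds with non-failure probability $q$, provided each $\mathbf N^{(\beta)}$ is a valid POVM.

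\emph{Positivity, the key step.} The first two effects of each branch are positive for every $q>0$, and the three effects sum to $\id_2$ by construction. The only condition is $\id_2-q(\projj{\psi_1}+\projj{\psi_2})\ge0$, and the analogue for the pair $(3,4)$. This is equivalent to $q\le 1/\lambda_{\max}$, where $\lambda_{\max}$ is the largest eigenvalue of $P=\projj{\psi_1}+\projj{\psi_2}$.
\begin{itemize}
\item For two unit vectors with overlap modulus $c$, $P$ acts on their span. Its eigenvalues there are $1\pm c$, since $\Tr P=2$ and $\det P=1-c^2$ on that two-dimensional span.
\item The SIC relation gives $c=1/\sqrt3$, so $\lambda_{\max}=1+1/\sqrt3$.
\item Hence $q\le \sqrt3/(\sqrt3+1)=(3-\sqrt3)/2=q^{\star}$.
\end{itemize}
The same value holds for the pair $(3,4)$. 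Since $q^{\star}<1$, the requirement $q\in(0,1]$ is automatic.

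\emph{Realisation as branch gates.} For $0<q\le q^{\star}$, I would write each failure effect $F_\beta$ through its spectral decomposition as a sum of at most two rank-one positive operators. Branch $\beta$ then becomes a POVM on $\C^2$ with four effects of rank at most one, say $\projj{w_1},\dots,\projj{w_4}$, summing to $\id_2$.
\begin{itemize}
\item The map $V=\sum_l\ket l\bra{w_l}:\C^2\to\C^4$ is an isometry.
\item Extending $V$ to a unitary $U^{(\beta)}\in\UU(\C^2\otimes\C^2)$ with $U^{(\beta)}(\,\cdot\otimes\ket0)=V$ gives $\iota^*\bigl(U^{(\beta)\dagger}\projj lU^{(\beta)}\bigr)=\projj{w_l}$.
\item The relabellings $\mathcal Q^{(1)}:(1,2,3,4)\mapsto(1,2,\emptyset,\emptyset)$ and $\mathcal Q^{(2)}:(1,2,3,4)\mapsto(3,4,\emptyset,\emptyset)$ then reproduce $N^{(\beta)}_i$ through~\eqref{eq:extended-compressed}.
\end{itemize}
This completes the verification.

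The main obstacle is only the eigenvalue bound. It also shows that $q^{\star}$ is sharp within this ansatz: at any larger $q$ the failure effects stop being positive.
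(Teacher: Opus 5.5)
Your proof is correct and follows the paper's argument: you check the identity outcome by outcome using the tight-frame relation $\sum_i\projj{\psi_i}=2\id_2$, and you obtain $q^{\star}$ from the largest eigenvalue $1+1/\sqrt3$ of $\projj{\psi_1}+\projj{\psi_2}$. Your additional step, which builds each branch unitary by extending the isometry $V=\sum_l\ket{l}\bra{w_l}$ into $\C^2\otimes\C^2$, is also correct and makes explicit the rank-one realisation on $\C^{4}$ that the paper only asserts before the proposition.
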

\begin{proof}
    For the simulated outcomes $i\in\{1,\dots,4\}$,
    \begin{equation}
        \sum_{\beta}p_{\beta}\,N^{(\beta)}_{i}\;=\;\tfrac{1}{2}\,q\,\projj{\psi_{i}}\;=\;q\,M_{i},
    \end{equation}
    since $M_{i}=\tfrac{1}{2}\projj{\psi_{i}}$ and exactly one branch contributes to each outcome.  For the failure outcome,
    \begin{equation}
        \sum_{\beta}p_{\beta}\,N^{(\beta)}_{\emptyset} =\tfrac{1}{2}\Bigl[2\,\id_{2}-q\sum_{i=1}^{4}\projj{\psi_{i}}\Bigr] =\id_{2}-q\cdot\id_{2}=(1-q)\,\id_{2},
    \end{equation}
    where we used the SIC tight-frame identity $\sum_{i=1}^{4}\projj{\psi_{i}}=2\,\id_{2}$ (equivalent to POVM completeness $\sum_{i}M_{i}=\id_{2}$). Thus the decomposition identity holds for every value of $q$. The remaining constraint comes from the requirement that each branch be a valid POVM. The only nontrivial positivity condition concerns the failure effect $N^{(\beta)}_{\emptyset}\succeq 0$. For the first branch this is equivalent to
    \begin{equation}
        q(\projj{\psi_1}+\projj{\psi_2}) \preceq \id_2.
    \end{equation}
    The sum of two rank-1 projectors with overlap $|\langle\psi_{1}|\psi_{2}\rangle|=\frac{1}{\sqrt{3}}$ has spectrum $\{\,1\pm \frac{1}{\sqrt{3}}\,\}$, giving the operator norm $1+\frac{1}{\sqrt{3}}=\frac{\sqrt{3}+1}{\sqrt{3}}$.  Hence
    \begin{equation}
        q^{\star}=\frac{\sqrt{3}}{\sqrt{3}+1}=\frac{3-\sqrt{3}}{2}\approx 0.6340
    \end{equation}
    and the two-branch ansatz is valid for all $q\in(0,q^{\star}]$.  Thus, for the qubit SIC POVM, the explicit construction yields a heralding factor $q^{\star} \approx 5.07 \cdot \frac18$, substantially improving over the universal worst-case guarantee. 
\end{proof}
\begin{remark}
The value $q^{\star}=\tfrac{3-\sqrt3}{2}$ is optimal only within the two-branch ansatz above, in which each branch simulates exactly two of the four SIC outcomes and routes its remaining readout outcomes to failure. This restriction is essential: the qubit SIC is rank-one, so its minimal Naimark dilation has dimension $D_{\min}=n=4$ (Remark~\ref{rem:naimark-dim}), which fits within the branch space $\C^{2d}=\C^{4}$ of a single branch. A single branch $U^{(\beta)}\in\UU(4)$ therefore already realizes the SIC \emph{exactly}, with $q=1$ and no failure outcome. Consequently $q^{\star}$ is not globally optimal in the unrestricted branch model; the two-branch construction is of value as an explicit, closed-form decomposition under a bounded number of successful outcomes per branch.
\end{remark}

Whether a universal guarantee $q>\tfrac18$ holds for all POVMs remains open~\cite{kotowski2026pretty}; the example shows that for concrete POVMs the universal bound is far from tight.

\end{document}